\pgfplotsset{compat=newest}
\newtheorem{lemma}{Lemma}
\newtheorem{proposition}{Proposition}
\newtheorem{corollary}{Corollary}
\def\a{\alpha}
\def\G{\Gamma}
\def\d{\delta}
\def\D{\Delta}
\def\ep{\epsilon}
\def\vep{\varepsilon}
\def\z{\zeta}
\def\Th{\Theta}
\def\i{\iota}
\def\la{\lambda}
\def\m{\mu}
\def\n{\nu}
\def\r{\rho}
\def\s{\sigma}
\def\t{\tau}
\def\f{\phi}
\def\vf{\varphi}
\def\ch{\chi}
\def\O{\Omega}
\newcommand{\ti}[1]{\tilde{#1}}
\newcommand{\Xd}{X^{\cdot}}
\newcommand{\xid}{\xi^{\cdot}}
\newcommand{\chd}{\chi^{\cdot}}
\newcommand{\vfd}{\varphi^{\cdot}}
\newcommand{\tad}{\tilde{\alpha}^{\cdot}}
\newcommand{\oTh}{\overline{\Theta}}
\newcommand{\oXi}{\overline{\Xi}}
\newcommand{\oV}{\overline{V}}
\newcommand{\oW}{\overline{W}}
\newcommand{\oQ}{\overline{Q}}
\newcommand{\barf}{\bar{\phi}}
\newcommand{\bz}{\bar{\zeta}}
\newcommand{\bars}{\bar{\sigma}}
\newcommand{\barS}{\bar{S}}
\newcommand{\bs}{\bar{s}}
\newcommand{\bk}{\bar{k}}
\newcommand{\bd}{\bar{\d}}
\newcommand{\bg}{\bar{g}}
\newcommand{\bK}{\bar{K}}
\newcommand{\hvf}{\hat{\vf}}
\newcommand{\hvfd}{\hat{\vf}^{\cdot}}
\newcommand{\hT}{\hat{T}}
\newcommand{\hg}{\hat{g}}
\newcommand{\hG}{\hat{G}}
\newcommand{\hK}{\hat{K}}
\newcommand{\hR}{\hat{R}}
\newcommand{\hN}{\hat{N}}
\newcommand{\abs}[1]{\lvert#1\rvert}
\newcommand{\babs}[1]{\big\lvert#1\big\rvert}
\newcommand{\Babs}[1]{\Big\lvert#1\Big\rvert}
\newcommand{\bbabs}[1]{\bigg\lvert#1\bigg\rvert}
\newcommand{\norm}[1]{\lVert#1\rVert}
\newcommand{\de}{\partial}
\newcommand{\half}{\frac{1}{2}}
\newcommand{\p}{\prime}
\newcommand{\beq}{\begin{equation}}
\newcommand{\eeq}{\end{equation}}
\newcommand{\beqnn}{\begin{equation*}}
\newcommand{\eeqnn}{\end{equation*}}
\newcommand{\nn}{\nonumber}
\newcommand{\noi}{\noindent}
\newcommand{\sgn}{\mathrm{sgn}}
\newcommand{\xia}{\xi^{a}}
\newcommand{\xib}{\xi^{b}}
\newcommand{\mbf}[1]{\mathbf{#1}}
\newcommand{\mbfscn}{\text{\textbf{\textsc{n}}}}
\newcommand{\bsm}{\boldsymbol{\mu}}
\newcommand{\bsl}{\boldsymbol{\lambda}}
\newcommand{\tmbf}[1]{\tilde{\mathbf{#1}}}
\newcommand{\mcal}[1]{\mathcal{#1}}
\newcommand{\mcalC}{\mathcal{C}}
\newcommand{\mcalM}{\mathcal{M}}
\newcommand{\mcalA}{\mathcal{A}}
\newcommand{\mcalR}{\mathcal{R}}
\newcommand{\mcalB}{\mathcal{B}}
\newcommand{\mcalE}{\mathcal{E}}
\newcommand{\tmcalC}{\tilde{\mathcal{C}}}
\newcommand{\bmcalC}{\bar{\mathcal{C}}}
\newcommand{\mscr}[1]{\mathscr{#1}}
\newcommand{\mscrM}{\mathscr{M}}
\newcommand{\mscrR}{\mathscr{R}}
\newcommand{\mscrI}{\mathscr{I}}
\newcommand{\mbbZ}{\mathbb{Z}}
\newcommand{\mbbR}{\mathbb{R}}
\newcommand{\dem}{\de_{\m}}
\newcommand{\den}{\de_{\n}}
\newcommand{\detau}{\de_{\t}}
\newcommand{\demden}{\de_{\m} \de_{\n}}
\newcommand{\dhvf}{\dot{\hat{\vf}}}
\newcommand{\Msf}{M_{6}^{4}}
\newcommand{\Mft}{M_{5}^{3}}
\newcommand{\Mfs}{M_{4}^{2}}
\title{Thin limit of the 6D Cascading DGP model}
\author{Fulvio Sbis\`a}
\affiliation{Departamento de F\'isica, Universidade Federal do Esp\'irito Santo,\\
Avenida Fernando Ferrari, 514, CEP 29075-910, Vit\'oria, ES, Brazil}
\emailAdd{fulvio.sbisa@gmail.com}
\abstract{A thin limit description of the 6D Cascading DGP model is derived, starting from a conf\mbox{}iguration where both the codimension-1 and the codimension-2 branes are thick. Postulating that the thicknesses of the two branes obey a hierarchic relation, the thin limit is executed in two steps. First the thin limit of the codimension-1 brane is executed, obtaining a system where a ``ribbon'' codimension-2 brane is embedded inside a thin codimension-1 brane with induced gravity, and then the thin limit of the ribbon brane is considered. By proposing a geometric ansatz on the limit conf\mbox{}iguration, the junction conditions which are to hold at the thin codimension-2 brane are derived. The latters are fully non-perturbative and covariant and, together with the Israel junction conditions at the codimension-1 brane and the Einstein equations in the bulk, constitute the looked-for thin limit formulation of the 6D Cascading DGP model. It is commented on how wide is the class of thin source conf\mbox{}igurations which can be placed on the thin codimension-2 brane.}
\keywords{Modif\mbox{}ied gravity; extra dimensions; cosmic strings, domain walls, monopoles.}
\begin{document}

\maketitle

\flushbottom

\section{Introduction}
\label{section Introduction}

The discovery \cite{Riess98, Perlmutter98} of the late time acceleration (LTA) of the Universe has fuelled a lot of interest into theories which modify gravity at large distances (``modif\mbox{}ied gravity'' theories, see \cite{Clifton:2011jh, Joyce:2014kja} and references therein). The possibility of explaining the mysterious acceleration without invoking dark energy has spurred a lot of activity into several directions, like massive gravity theories, $f(R)$ and more general scalar-tensor theories, and braneworld theories. The development of these type of theories is academically interesting in its own right, and has uncovered very beautiful phenomena, like screening mechanisms. Furthermore, theories which modify gravity in the infrared are also attractive from the point of view of the Cosmological Constant (CC) problem \cite{WeinbergCC}, since they may ``degravitate'' a huge vacuum energy leaving only a tiny remnant \cite{Dvali:2002pe, Dvali:2002fz, ArkaniHamed:2002fu}. 

Unfortunately, modif\mbox{}ied gravity theories are prone to several problems, and despite all the ef\mbox{}fort a convincing modif\mbox{}ied gravity explanation of the LTA and CC problems is still lacking. Among the various unsatisfying aspects we may cite the appearance of perturbative ghosts around physically interesting solutions (cosmological or not), instabilities of other type (inside sources and/or outside), and a dangerously low strong coupling scale. In the context of the modif\mbox{}ied gravity approach, a special role is played by braneworld theories, since the existence of extra dimensions and branes is strongly suggested by string theory and so they are somewhat motivated by fundamental physics. In this framework, the prototypical model to tackle the LTA problem is the DGP model \cite{Dvali:2000hr}, which admits self-accelerating cosmological solutions \cite{Deffayet:2000uy}. The failure of the latter model to f\mbox{}it the observational data satisfactorily \cite{Maartens:2006yt, Rydbeck:2007gy, Fang:2008kc} and the presence of ghost instabilities \cite{Luty:2003vm, Nicolis:2004qq, Koyama:2005tx, Koyama:2007za, Gorbunov:2005zk, Charmousis:2006pn} calls for higher-dimensional generalizations of the DGP model. This is also welcome from the point of view of the CC problem, since for the degravitation mechanism to be ef\mbox{}fective in a braneworld theory it is necessary that the codimension be higher than one \cite{Dvali:2007kt, deRham:2007rw}. On the other hand, branes of codimension two or higher are notoriously very delicate to deal with, since their thin limit is in general not well-def\mbox{}ined \cite{Geroch:1987qn}. Also, they were believed to be even more problematic than codimension one branes from the point of view of ghosts \cite{Dubovsky:2002jm}, although the recent paper \cite{Berkhahn:2012wg} is seriously challenging this belief.

\subsection{The Cascading DGP model}

A compelling higher dimensional generalization of the DGP model is the Cascading DGP model \cite{deRham:2007xp}, whose minimal (6D) set-up is qualitatively described by the action
\beq
\label{CascadingDGP6D}
S = \Msf \int_{\mcal{B}} \!\! d^6 X \, \sqrt{-g} \, R + \Mft \int_{\mcalC_1} \!\! d^5 \xi \, \sqrt{-\ti{g}} \, \ti{R} + \int_{\mcalC_2} \!\! d^4 \ch \, \sqrt{-\bar{g}} \, \Big( \Mfs \bar{R} + \mscr{L}_{M} \Big) \quad ,
\eeq
where $\mcal{B}$ is a 6D spacetime which contains a 5D brane $\mcalC_1$ (the codimension-1 brane), inside which lies a 4D brane $\mcalC_2$ (the codimension-2 brane). Both branes are equipped with induced gravity terms, and the energy-momentum tensor is localized on the codimension-2 brane, which is assumed to describe our universe.

This model has the remarkable feature that the codimension-1 brane with induced gravity seem to act as a gravity regularizer. In fact, f\mbox{}ixing the position of the branes and calculating the brane-to-brane propagator on the 4D brane, the divergence characteristic of codimension-2 branes disappears as long as $\Mft \neq 0$ \cite{deRham:2007xp, deRham:2007rw}. Furthermore, keeping into account also the bending modes of the branes, gravity has indeed shown to remain f\mbox{}inite at f\mbox{}irst order in perturbations \cite{Sbisa:2014gwh}. The theory has also unexpected and interesting properties regarding the presence of ghost modes, such as the presence of a critical tension which discriminates (at least at f\mbox{}irst order in perturbations) between stable and unstable pure tension conf\mbox{}igurations \cite{deRham:2007xp, deRham:2010rw, Sbisa:2014vva}. The behaviour of weak gravity depends on the relation between the two free parameters $m_5 \equiv \Mft/\Mfs$ and $m_6 \equiv \Msf/\Mft$: if $m_5 \gg m_6$, weak gravity ``cascades'' from a 6D behaviour at very large scales to a 5D behaviour at intermediate scales to a 4D behaviour at small scales, while if $m_5 \ll m_6$ there is a direct transition from a 6D behaviour at large scales to a 4D behaviour at small scales \cite{deRham:2007xp, deRham:2007rw}. See \cite{deRham:2009wb} for an analysis of the codimension-3 set-up, \cite{Agarwal:2011mg} for an ef\mbox{}fective approach and \cite{Minamitsuji:2008fz, Afshordi:2008rd, Khoury:2009tk, Agarwal:2009gy, Wyman:2010jp, Moyassari:2011nb} for cosmology-related studies.

It is fair to say that the action (\ref{CascadingDGP6D}) a priori does not single out a unique model, since, as already mentioned, the thin limit of a brane of codimension two or higher is not well-def\mbox{}ined. From the point of view of the variational principle, this is mirrored by the fact that f\mbox{}inding extremals of $S$ is arguably an ambiguous procedure. Since for conf\mbox{}igurations of the cascading type the geometry is not smooth at the codimension-2 brane \cite{Sbisa:2014gwh, Sbisa:2014vva}, we cannot perform the variation assuming the f\mbox{}ields are smooth. Specifying exactly the functional space in which the variation is performed is equivalent to provide a prescription to regularize the branes.

\paragraph{Thick branes and induced gravity terms.}
A safe way to study the model is then to work with thick branes. However, in this case the problem is to f\mbox{}ind a way to equip the codimension-2 brane with a thruthful 4D induced gravity term. For example, if we describe the cod-2 brane as a 5D spherical surface with a 5D induced gravity term (or a 6D sphere with a 6D induced gravity term), then the ef\mbox{}fective description at small radiuses contains a 4D induced gravity term plus an additional degree of freedom \cite{deRham:2007rw}. The latter degree of freedom has a strong inf\mbox{}luence on the phenomenology, which is actually very interesting, but nevertheless dif\mbox{}ferent from the dynamic with a pure 4D induced gravity term (a similar argument holds for the dif\mbox{}ferent results obtained by \cite{Dubovsky:2002jm} and \cite{Kolanovic:2003am} in the pure codimension-2 context). See also \cite{Kaloper:2007ap, Kaloper:2007qh} for a related analysis of pure codimension-2 branes with induced gravity.

The aim of the present paper is to provide a realization of the Cascading DGP model where the branes are thin and equipped with induced gravity terms of the correct dimensionality. To do this, we start from a conf\mbox{}iguration where both branes are thick and then develop a well-def\mbox{}ined thin limit description. This is achieved by making an assumption about the internal structures of the thick branes, namely that there exist a hierarchy between their thicknesses. This program was performed successfully in \cite{Sbisa:2014gwh} at linear order in perturbations around pure tension solutions, so the aim here is to extend that analysis to the non-perturbative level.

The paper is structured as follows: in section \ref{section Nested branes with induced gravity} we introduce the concept of thickness hierarchy between the thick branes, and discuss the thin limit of the codimension-1 brane. In section \ref{section The thin limit of the ribbon brane} we propose an ansatz for the behaviour of the geometry when the thin limit of the ribbon brane is performed, and we implement it mathematically. In section \ref{section The codimension-2 pillbox integration} we describe how the codimension-2 junction conditions should emerge, and perform the pillbox integration across the ribbon brane. In section \ref{section Thin limit of the Cascading DGP model} we obtain the codimension-2 junction conditions in terms of geometrically signif\mbox{}icant quantities, and we derive the thin limit description of the Cascading DGP model. Finally, in section \ref{section Discussion} we discuss our results.
\vspace{2mm}

\textbf{Conventions}: For metric signature (``mostly plus''), connection, covariant derivative, curvature tensors and Lie derivative we follow the conventions of Misner, Thorne and Wheeler \cite{MisnerThorneWheeler}. 6D indices are denoted by capital letters, so run from 0 to 5, 5D indices are denoted by latin letters, and run from 0 to 4, while 4D indices are denoted by greek letters and run from 0 to 3. In general, quantities pertaining to the cod-1 brane are denoted by a tilde $\tilde{\phantom{a}}$, while quantities pertaining to the cod-2 brane are denoted by an overbar $\bar{\phantom{a}}$. Abstract tensors are indicated with bold-face letters, while quantities which have more than one component but are not tensors (such as coordinates $n$-tuples for example) are expressed in an abstract way replacing every index with a dot. We use throughout the text the (Einstein) convention of implicit summation on repeated indices, and we will use unit of measure where the speed of light has unitary value $c=1$.

\section{Nested branes with induced gravity}
\label{section Nested branes with induced gravity}

To develop a thin limit description of a system composed by thick codimension-1 and codi-mension-2 branes (to which we refer also as ``physical branes''), it is very useful to introduce the auxiliary construction of the ``mathematical cod-1 and cod-2 branes''. These are thin branes, therefore devoid of an internal structure, around which the physical branes are localized (in a sense to be made precise below). They play a central role in the thin limit procedure, since the physical branes get identif\mbox{}ied with the mathematical ones in the limit, and this construction allows to def\mbox{}ine precisely the notion of thickness hierarchy. We describe below the mathematical and physical structure of our set-up.

\subsection{The set-up}
\label{The set-up}

\subsubsection{The mathematical set-up}
\label{The mathematical set-up}

We assume that the ambient space is a 6D manifold $(\mcalM, \mcalA)$, equipped with a Lorentzian metric $\mbf{g}$.\footnote{We assume that $\mcalA$, $\ti{\mcalA}$ and $\bar{\mcalA}$ are $C^{\infty}$ atlases, and we use the def\mbox{}inition of submanifold given in \cite{DoCarmoBook}.} The manifold $\mcalM$ contains a 5D submanifold $\mcalC_1 \subset \mcalM$, which we call the mathematical codimension-1 brane. This means that $\mcalC_1$ has a manifold structure on its own, which we indicate with $(\mcalC_1, \ti{\mcalA})$, and that the canonical inclusion $\i_1 : \mcalC_1 \to \mcalM$ is an embedding. We indicate with $T \ti{\mcalC}_1$ the tangent bundle of the manifold $(\mcalC_1, \ti{\mcalA})$, and with $T \mcalC_1 \subset T \mcalM$ its push-forward via the embedding $T \mcalC_1 \equiv \i_1^{\star} (T \ti{\mcalC}_1)$. We assume that  $\mcalC_1$ is orientable, and that the metric manifold $\mcalM$ is $\mbbZ_2$-symmetric with respect to $\mcalC_1$.

The manifold $\mcalC_1$ in turn contains a 4D submanifold $\mcalC_2 \subset \mcalC_1$, which we call the mathematical codimension-2 brane. We indicate with $(\mcalC_2, \bar{\mcalA})$ its manifold structure and with $\ti{\i}_2 : \mcalC_2 \to \mcalC_1$ the associated canonical inclusion. It follows that $\mcalC_2$, seen as a subset of $\mcalM$, is a 4D submanifold with canonical inclusion $\i_2 = \i_1 \circ \ti{\i}_2$. We indicate with $T \bar{\mcalC}_2$ the tangent bundle of the manifold $(\mcalC_2, \bar{\mcalA})$, and respectively with $T \ti{\mcalC}_2 \subset T \ti{\mcalC}_1$ and $T \mcalC_2 \subset T \mcalM$ the push-forwards  $T \ti{\mcalC}_2 \equiv \ti{\i}_2^{\star} (T \bar{\mcalC}_2)$ and $T \mcalC_2 \equiv \i_2^{\star} (T \bar{\mcalC}_2)$. We assume that $\mcalC_2$, seen as a submanifold of $\mcalC_1$, is orientable.

The mathematical branes $\mcalC_1$ and $\mcalC_2$ inherit from $\mcalM$ a Lorentzian metric structure: $\mcalC_1$ is characterized by the codimension-1 induced metric $\ti{\mbf{g}} = \i_{1 \star} (\mbf{g})$, where $\i_{1 \star}$ stands for the pull-back with respect to $\i_{1}$, while $\mcalC_2$ is characterized by the codimension-2 induced metric $\bar{\mbf{g}} = \ti{\i}_{2 \star} (\ti{\mbf{g}}) = \i_{2 \star} (\mbf{g})$. We assume that there exists a smooth, time-like vector f\mbox{}ield (a ``time coordinate'') def\mbox{}ined on $T \bar{\mcalC}_2$. This implies that there exists a smooth and space-like unit vector f\mbox{}ield $\ti{\mbfscn}: \bar{\mcalC}_2 \to T \ti{\mcalC}_1$ normal to $T \ti{\mcalC}_2$ (henceforth called the \emph{codimension-2 normal vector f\mbox{}ield}). Similarly, we assume that there exists a smooth and space-like unit vector f\mbox{}ield $\mbf{n} : \ti{\mcalC}_1 \to T \mcalM$ normal to $T \mcalC_1$ (the \emph{codimension-1 normal vector f\mbox{}ield}). This formalizes the idea that the extra dimensions are space-like.

\paragraph{Curvature tensors}
We assume that the connections on $T \mcalM$, $T \tmcalC_1$ and $T \bmcalC_2$ are the Levi-Civita ones, and we indicate respectively with $\nabla$, $\ti{\nabla}$ and $\bar{\nabla}$ the correspondent covariant derivatives. The intrinsic geometries of $\mcalC_1$ and $\mcalC_2$ are described as usual by the Riemann tensor constructed respectively with $\ti{\nabla}$ and $\bar{\nabla}$. The extrinsic geometry of $\mcalC_1$ is described by the (codimension-1) extrinsic curvature $\ti{\mbf{K}}$, a symmetric type $(0,2)$ tensor f\mbox{}ield def\mbox{}ined by
\beq
\label{c-1 extrinsic curvature}
\ti{\mbf{K}} (\ti{\mbf{u}}, \ti{\mbf{v}}) \equiv \mbf{g} \big( \mbf{n}, \nabla_{\i_{1}^{\star}(\ti{\mbf{u}})} \, \i_{1}^{\star}(\ti{\mbf{v}}) \big) 
\eeq
where $\ti{\mbf{u}}, \ti{\mbf{v}} \in T \ti{\mcalC}_1$.\footnote{This def\mbox{}inition is equivalent to $\ti{\mbf{K}} \equiv  \i_{1 \star} \big( - \half \, \mscr{L}_{\mbf{n}} \, \mbf{g} \big)$, where $\mscr{L}_{\mbf{n}}$ indicates the Lie derivative along $\mbf{n}$.} Likewise, the extrinsic geometry of $\mcalC_2$ as a submanifold of $\mcalC_1$ is described by the (codimension-2) extrinsic curvature tensor $\bar{\mbf{K}}$, def\mbox{}ined by
\beq
\label{c-2 extrinsic curvature}
\bar{\mbf{K}} (\bar{\mbf{u}}, \bar{\mbf{v}}) \equiv \ti{\mbf{g}} \big( \ti{\mbfscn}, \ti{\nabla}_{\ti{\i}_{2}^{\star}(\bar{\mbf{u}})} \, \ti{\i}_{2}^{\star}(\bar{\mbf{v}}) \big)
\eeq
where $\bar{\mbf{u}}, \bar{\mbf{v}} \in T \bar{\mcalC}_2$.

For future reference, it is useful to express the extrinsic curvatures in components. Be $p \in \mcalC_1 \subset \mcalM$, and let $(U, \f) \in \mcalA$ and $(\ti{U}, \ti{\f}) \in \ti{\mcalA}$ be local charts respectively of $\mcalM$ and $\mcalC_1$ in $p$. Let $\xid$ indicate the $5$-tuple of coordinates in $\mbbR^5$, and let $\vfd (\xid) \equiv \f \circ \i_1 \circ \ti{\f}^{-1} $ indicate the components expression of the embedding of $\mcalC_1$. Then $\ti{\mbf{K}}$ in components reads
\beq
\label{c-1 extrinsic curvature ok}
\ti{K}_{ab} = n_{_{L}} \bigg( \dfrac{\de^2 \vf^{_{L}}}{\de \xia \xib} + \G^{^{L}}_{_{AB}}\Big\rvert_{\mcalC_1} \dfrac{\de \vf^{_{A}}}{\de \xia} \dfrac{\de \vf^{_{B}}}{\de \xib} \bigg) \quad ,
\eeq
where $n_{_{L}} = n_{_{L}}(\xid)$ and $\G^{_{L}}_{^{AB}}\big\rvert_{\mcalC_1} = \G^{_{L}}_{^{AB}}\big\rvert_{\vfd(\xid)}$ indicate the 6D connection coef\mbox{}f\mbox{}icients evaluated on $\mcalC_1$. Regarding $\bar{\mbf{K}}$, be $q \in \mcalC_2 \subset \mcalC_1$, and let $(\ti{U}, \ti{\f}) \in \ti{\mcalA}$ and  $(\bar{U}, \barf) \in \bar{\mcalA}$ be local charts respectively of $\mcalC_1$ and $\mcalC_2$ in $q$. Let $\chd$ indicate the $4$-tuple of coordinates in $\mbbR^4$, and let $\tad (\chd) \equiv \ti{\f} \circ \i_1 \circ \bar{\f}^{-1} $ indicate the components expression of the embedding of $\mcalC_2$ into $\mcalC_1$. Then $\bar{\mbf{K}}$ can be expressed in a way analogous to (\ref{c-1 extrinsic curvature ok}) by performing the substitutions $n \to \ti{\textsc{n}}$, $\vf \to \ti{\a}$, $\xi \to \ch$ and $\G\big\rvert_{\mcalC_1} \to \ti{\G}\big\rvert_{\mcalC_2}$.

\begin{figure}[t!]
\centering
\begin{tikzpicture}
\begin{scope}[scale=.9,rotate=0,>=stealth]
\fill[color=green!20!white] (-6,-1.5) rectangle (6,1.5);
\filldraw[fill=red!20!white,draw=black] (0,0) ellipse (2cm and 1cm);
\draw[very thick] (-6,-1.5) -- (6,-1.5);
\draw[very thick] (-6,1.5) -- (6,1.5);
\draw[gray,thin,<->] (0,-1) -- node[color=black,anchor=west]{$l_2^{\perp}$} (0,1);
\draw[gray,thin,<->] (5,-1.5) -- node[color=black,anchor=east]{$l_1$} (5,1.5);
\end{scope}
\end{tikzpicture}
\caption{Characteristic scales for the cod-1 brane (green) and the cod-2 brane (ellipse, violet).}
\label{figure 1}
\end{figure}

\subsubsection{The physical set-up}
\label{The physical set-up}

As we mentioned above, the physical cod-1 and cod-2 branes are, naively speaking, localized around the mathematical branes. To make this statement precise, we introduce two reference systems which are adapted to the mathematical branes, the bulk and the codimension-1 Gaussian Normal Coordinates. A pictorial representation of the set-up is given in f\mbox{}igures \ref{figure 1} and \ref{figure 2} .

We recall that, taken any point $p \in \mcalC_1$ and a local coordinate system $(\ti{U}, \tilde{\f}) \in \ti{\mcalA}$ in $p$, we can set-up a reference system on $\mcalM$ in a neighbourhood $U$ of $p$, which is Gaussian Normal to $\mcalC_1$ (see, for example, \cite{CarrollBook}). Any point in $U$ can be reached from $\mcalC_1$ following a unique geodesic of $\mcalM$ normal to $\mcalC_1$, and is identif\mbox{}ied by the value $\z$ of the af\mbox{}f\mbox{}ine parameter along the geodesic (with $\z = 0$ characterizing the points of $\mcalC_1$), together with the coordinates $\xid \in \mbbR^5$ of the starting point on $\mcalC_1$. In the following we refer to this class of reference systems as the \emph{bulk Gaussian Normal Coordinates} (in brief ``bulk GNC'').

Regarding the (physical) codimension-1 brane $\mcalB_{1}$, we assume f\mbox{}irst of all that it can be completely covered by the bulk Gaussian Normal Coordinates, eventually patching several local charts of $\mcalC_1$. Under this assumption, we can associate to every point $p \in \mcalB_{1}$ the proper length $l_1(p)$ of the geodesic interval which results as the intersection of $\mcalB_1$ with the normal geodesic passing through $p$. We formalize the idea that $\mcalB_{1}$ is a thick codimension-1 brane by asking that the supremum $l_1$ of the values $l_1(p)$ as $p$ varies in $\mcalB_{1}$ ($l_1 \equiv \sup_{p \in \mcalB_{1}} \big\{ l_1(p) \big\}$) is f\mbox{}inite, and call $l_1$ the thickness of the physical cod-1 brane.

As anticipated above, we assume that inside $\mcalB_{1}$ there is an additional substructure, the physical codimension-2 brane $\mcalB_{2}$ (see \cite{Morris:1997hj, Edelstein:1997ej} for f\mbox{}ield theory realizations of this type of conf\mbox{}igurations). We can introduce the notions of ``orthogonal'' and ``parallel'' thicknesses of $\mcalB_{2}$ (where orthogonal and parallel are intended relatively to $\mcalC_1$) as follows. The notion of orthogonal thickness can be def\mbox{}ined exactly as we did for the thickness of $\mcalB_{1}$. Taken any $q \in \mcalB_{2}$, there is one and only one geodesic normal to $\mcalC_1$ which passes through $q$ (since the bulk GNC are def\mbox{}ined on all $\mcalB_{1}$ and therefore on all $\mcalB_{2}$). We can then associate to $q$ the proper length $l_{2}^{\perp} (q)$ of the geodesic interval which results as the intersection of $\mcalB_{2}$ with the chosen geodesic. We def\mbox{}ine the orthogonal thickness of $\mcalB_{2}$ as the supremum of this set of thicknesses $l_{2}^{\perp} \equiv \sup_{q \in \mcalB_{2}} \big\{ l_{2}^{\perp}(q) \big\}$.

To def\mbox{}ine the parallel thickness of $\mcalB_{2}$, we consider the reference systems on $\mcalC_1$ which are Gaussian Normal to $\mcalC_2$. In complete analogy with what done above, taken a point $r \in \mcalC_2$ and a local coordinate system $(\bar{U}, \barf) \in \bar{\mcalA}$ in $r$, we can set-up a coordinate system locally in $\mcalC_1$ by following the geodesics of $\mcalC_1$ which are normal to $\mcalC_2$. We call this class of reference systems the \emph{codimension-1 Gaussian Normal Coordinates}, in brief ``cod-1 GNC''. We now call $\mcalR$ the intersection of $\mcalB_{2}$ with $\mcalC_1$, and assume that $\mcalR$ can be completely covered by cod-1 Gaussian Normal Coordinates, eventually patching several local charts of $\mcalC_2$ (therefore any point of $\mcalR$ is individuated by the af\mbox{}f\mbox{}ine parameter $\t$ of a specif\mbox{}ic geodesic of $\mcalC_1$ normal to $\mcalC_2$, and by the coordinates $\chd \in \mbbR^4$ of the starting point on $\mcalC_2$). We can then associate to every point $s \in \mcalR$ the proper length $l_{2}^{\shortparallel}(s)$ of the geodesic interval resulting as the intersection of $\mcalR$ and the geodesic of $\mcalC_1$ normal to $\mcalC_2$ passing through $s$. We def\mbox{}ine the parallel thickness of $\mcalB_{2}$ as the supremum of this set of thicknesses $l_{2}^{\shortparallel} \equiv \sup_{s \in \mcalR} \big\{ l_{2}^{\shortparallel}(s) \big\}$. The idea that $\mcalB_{2}$ is a thick codimension-2 brane is formalized by asking that $l_{2}^{\perp}$ and $l_{2}^{\shortparallel}$ are f\mbox{}inite.

\begin{figure}[t!]
\centering
\begin{tikzpicture}
\begin{scope}[scale=.46,>=stealth]
\fill[color=green!40!white] (1.565,5.83) -- (15.565,5.81) -- (14.521,6) -- (0.521,6) -- (1.565,5.83);
\fill[color=black!05!white] (-0.52,-6) -- (1.565,5.83) -- (15.565,5.83) -- (13.48,-6) -- (-0.52,-6);
\fill[color=green!20!white,rotate=80] (-6,-0.5) rectangle (6,0.5);
\filldraw[fill=red!20!white,draw=black,rotate=80] (0,0) ellipse (2cm and 0.4cm);
\draw[thick,rotate=80] (-6,-0.5) -- (6,-0.5);
\draw[thick,rotate=80] (-6,0.5) -- (6,0.5);
\draw[draw=black,dashed,rotate around={-10:(14,0)}] (14,0) ellipse (0.4cm and 2cm);
\draw[thick,rotate around={-10:(14,0)}] (14.5,-6) -- (14.5,6);
\draw[thin,dashed,rotate around={-10:(14,0)}] (13.5,-6) -- (13.5,6);
\draw[thin,dashed] (0.33,1.97) -- (14.33,1.97);
\draw[thin,dashed] (-0.33,-1.97) -- (13.67,-1.97);
\draw[thin,color=red,<-] (0,0) -- (-0.7,2) node[color=black,anchor=south]{$\mcalB_2$};
\draw[thin,color=red,<-] (0.68,3.5) -- (-0.7,4.5) node[color=black,anchor=south]{$\mcalB_1$};
\end{scope}
\end{tikzpicture}
\begin{tikzpicture}
\begin{scope}[scale=.46,>=stealth]
\fill[color=green!20!white] (0.33,1.97) -- (1.03,5.9) -- (15.03,5.9) node[color=black,anchor=north east]{$\mcalC_1$} -- (14.33,1.97) -- (0.33,1.97);
\fill[color=green!20!white] (-0.33,-1.97) -- (13.67,-1.97) -- (12.97,-5.9) -- (-1.03,-5.9) -- (-0.33,-1.97);
\draw[very thin,rotate around={-10:(14,0)}] (14,0) ellipse (0.4cm and 2cm);
\draw[thick,rotate around={-10:(14,0)}] (14,-6) -- (14,6);
\fill[color=red!20!white] (-0.33,-1.97) -- (0.33,1.97) -- (14.33,1.97)node[color=black,anchor=north east]{$\mcalR$} -- (13.67,-1.97) -- (-0.33,-1.97);
\draw[very thin, rotate=80] (0,0) ellipse (2cm and 0.4cm);
\draw[thick,rotate=80] (-6,0) -- (6,0);
\draw[gray,thin,<->=stealth] (6.65,-1.97) -- (7.35,1.97) node[color=black,anchor=north east]{$l_2^{\shortparallel}$};
\draw[thick,dashed] (10:14);
\draw[very thin] (0.33,1.97) -- (14.33,1.97);
\draw[very thin] (-0.33,-1.97) -- (13.67,-1.97);
\draw[very thick] (0,0) -- (14,0);
\draw[thin,color=red,<-] (10,0) -- (14,-3) node[color=black,anchor=north]{$\mcalC_2$};
\end{scope}
\end{tikzpicture}
\caption{Pictorial representation of the thick (left) and thin (right) nested branes.}
\label{figure 2}
\end{figure}

The main physical assumption we make is that there exists a hierarchy between the thicknesses of the physical codimension-1 and codimension-2 branes, namely that $l_{2}^{\perp} \sim l_{1}$ and $l_{2}^{\shortparallel} \gg l_{1}$. See \cite{Sbisa:2014gwh} for a related discussion and a pictorial representation of this assumption. Note that we are making assumptions only on the overall shape of the physical branes, while we are not making any assumption about their internal structures and localization mechanisms (apart from assuming that they can be covered by the Gaussian Normal Coordinates, which can be seen as a mild assumption on their internal structure).

\subsection{The thin limit of the codimension-1 brane}
\label{The thin limit of the codimension-1 brane}

Naively speaking, the thickness hierarchy between $\mcalB_{1}$ and $\mcalB_{2}$ means that, in the direction orthogonal to $\mcalC_1$, the widths of $\mcalB_{1}$ and $\mcalB_{2}$ are of the same order of magnitude, while $\mcalB_2$ is much broader in the direction parallel to $\mcalC_1$ and normal to $\mcalC_2$ (the two branes are both inf\mbox{}inite in the directions parallel to $\mcalC_2$). This implies that, if we study the system focusing on scales much larger than $l_1$ (but not necessarily larger than $l_{2}^{\shortparallel}$), we can ef\mbox{}fectively describe our system as if the physical codimension-1 brane were thin, thereby identifying $\mcalB_{1}$ and $\mcalC_1$. This is possible since the thin limit of a codimension-1 brane is well def\mbox{}ined \cite{Geroch:1987qn}, and permits to equip the codimension-1 brane with a 5D induced gravity term. In this description, $\mcalB_{2}$ becomes identif\mbox{}ied with $\mcalR$, the intersection of $\mcalB_{2}$ and $\mcalC_1$. Since $\mcalR$ is inf\mbox{}inite in the directions parallel to $\mcalC_2$, while it lies inside a strip of thickness $l_{2}^{\shortparallel}$ as far as the coordinate $\t$ of the cod-1 GNC is concerned, it is the 5D analogue of a ribbon. Henceforth, we refer to it as the ``ribbon brane''.

In this ef\mbox{}fective description, the equations of motion are the (source-free) Einstein equations in the bulk and the (sourced) Israel junction conditions \cite{Israeljc} on $\mcalC_1$, namely
\begin{equation}
\label{junctionconditionseq nonZ2}
\Msf \Big[ \tmbf{K} -  \tmbf{g} \, \, tr (\tmbf{K}) \Big]_{\pm} + \Mft \, \tmbf{G} = \,\, \tmbf{T} \quad ,
\end{equation}
where $\mbf{G}$ and $\tmbf{G}$ are the Einstein tensors built respectively from $\mbf{g}$ and $\tmbf{g}$, $tr$ indicate the trace, and $[\phantom{a}]_{_{\pm}} = \vert_{_{+}} - \vert_{_{-}}$. The source term $\tmbf{T}$ is the energy-momentum tensor associated to $\mcalC_1$, which is obtained from the energy-momentum tensor of the physical branes $\mcalB_1$ and $\mcalB_2$ by performing a pillbox integration normally to $\mcalC_1$ (see for example \cite{MisnerThorneWheeler}). We make the further assumption that energy and momentum are localized only inside $\mcalB_2$, while $\mcalB_1$ is characterized only by induced gravity. This means that, expressing $\tmbf{T}$ in cod-1 GNC (henceforth we indicate quantities evaluated in this reference system with an overhat $\!\hat{\phantom{i}}\,$), we have
\beq
\label{Andreja}
\hT_{ab}(\t, \chd) = 0 \qquad \text{for} \qquad \abs{\t} > l \quad ,
\eeq
where we def\mbox{}ined $l \equiv l_{2}^{\shortparallel}/2$. Furthermore, we formalize the idea that momentum does not f\mbox{}low out of the ribbon brane by asking that the pillbox integration across the cod-2 brane of the normal and mixed components of $\hT_{ab}$ vanishes
\beq
\label{angelica}
\int_{-l}^{+l} d \t \,\, \hat{T}_{\t\t}(\t, \chd) = \int_{-l}^{+l} d \t \,\, \hat{T}_{\t\m}(\t, \chd) = 0 \quad .
\eeq

As we already mentioned, we assume that a $\mathbb{Z}_{2}$ ref\mbox{}lection symmetry holds across $\mcalC_1$. When the thin limit for the physical cod-1 brane is taken, this implies that, from a mathematical point of view, $\mcalC_1$ is not anymore a submanifold of a six-dimensional $C^{\infty}$ metric manifold.\footnote{In fact the bulk metric is not smooth at $\mcalC_1$, as the discontinuity of $\tmbf{K} -  \tmbf{g} \, \, tr (\tmbf{K})$ in (\ref{junctionconditionseq nonZ2}) indicates.} Instead, now the ambient manifold $\mcalM$ is composed of two isomorphic $C^{\infty}$ metric manifolds with boundary, which are glued at their common boundary $\mcalC_1$ in such a way that their union is $\mathbb{Z}_{2}$-symmetric with respect to $\mcalC_1$. Therefore, to f\mbox{}ind $\mcalM$ we look for solutions of the system
\begin{align}
\label{Bulkeq}
\mbf{G} =& \,\, 0 \qquad \,\, (\textrm{bulk})\\[2mm]
\label{junctionconditionseq}
2 \Msf \Big( \tmbf{K} -  \tmbf{g} \, \, tr (\tmbf{K}) \Big) + \Mft \, \tmbf{G} =& \,\, \tmbf{T} \qquad (\textrm{boundary})
\end{align}
for a $C^{\infty}$ metric manifold $\mcalM_{_{+}}$ with boundary, such that the normal vector $\mbf{n}$ points into $\mcalM_{_{+}}$. We then create a mirror copy of $\mcalM_{_{+}}$ and glue the two copies at the boundary, which we identify with $\mcalC_1$, in a $\mbbZ_2$-symmetric way. The manifold $\mcalM$ thus obtained is, by construction, $\mathbb{Z}_{2}$-symmetric with respect to $\mcalC_1$, and solves the equation (\ref{junctionconditionseq nonZ2}). To f\mbox{}ind $\mcalM_{_{+}}$, sometimes the following procedure is used: we f\mbox{}irst look for a $C^{\infty}$ metric manifold $\mscrM$ which is \emph{not} $Z_2$-symmetric, and whose metric solves (\ref{Bulkeq}). Then we look for an orientable cod-1 submanifold $\mcalC_1$ of $\mscrM$ whose embedding solves (\ref{junctionconditionseq}). Since $\mcalC_1$ divides $\mscrM$ in two disconnected 6D submanifolds with $\mcalC_1$ as the common boundary, $\mcalM_{_{+}}$ can be recognized as the one which $\mbf{n}$ points into.

This set-up is in fact very familiar, being just a codimension-1 DGP model (in 6D) with the only addition that energy and momentum are conf\mbox{}ined inside a 5D ribbon. It was already introduced in \cite{Sbisa:2014gwh} with the name of \emph{nested branes with induced gravity} set-up (borrowing the term ``nested branes'' from \cite{Gregory:2001xu}). The crucial point in this paper (as well as in the perturbative analysis of \cite{Sbisa:2014gwh}) is if the thin limit of the ribbon brane inside the (already thin) codimension-1 brane is well-def\mbox{}ined or not. We devote the rest of the paper to the analysis of this point.

\section{The thin limit of the ribbon brane}
\label{section The thin limit of the ribbon brane}

We are now interested in developing an ef\mbox{}fective description at length scales much larger than $l_{2}^{\shortparallel}$. Supposing that the codimension-1 energy-momentum tensor $\hT_{ab}(\chd, \t)$ changes very slowly on length scales of the order of $l_{2}^{\shortparallel}$ when we move along the 4D coordinates $\chd$, we can look for a thin limit description of the model def\mbox{}ined by (\ref{Andreja})--(\ref{junctionconditionseq}), where ``thin'' refers to the localization length of the ribbon brane. If feasible, such a description would permit to describe the evolution of the geometry outside the ribbon brane without having to specify the details of its internal structure, and in turn it would provide an ef\mbox{}fective description on scales $\gg l_{2}^{\shortparallel} \gg l_{2}^{\perp} \sim l_{1}$ of the exact conf\mbox{}iguration with both $\mcalB_1$ and $\mcalB_2$ thick, thereby allowing to ignore the details of the internal structures of both physical branes.

Generally speaking, the thin limit of a physical brane can be def\mbox{}ined if, when focusing on length scales much bigger than the brane thickness, the conf\mbox{}iguration of the f\mbox{}ields (in our case, $g_{_{AB}}$ and $\vf^{_{A}}$) outside of the brane (the \emph{external conf\mbox{}iguration}) are not inf\mbox{}luenced by the details of how the source conf\mbox{}iguration (in our case, $\hT_{ab}$) is distributed inside the brane, but are determined only by a suitably def\mbox{}ined ``integrated'' amount of the source. In practice, one associates to the physical brane a mathematical brane, to which the physical brane is to reduce in the limit. Then the formulation of a thin limit description rests on two ingredients. The f\mbox{}irst is a systematic way to associate, to every thick source conf\mbox{}iguration, a \emph{thin source conf\mbox{}iguration} which is def\mbox{}ined on the mathematical brane. The second ingredient is a law which permits to derive the external f\mbox{}ields conf\mbox{}iguration from the thin source conf\mbox{}iguration. Usually this law is found by supplementing the source-free f\mbox{}ield equations with some (``junction'') conditions, which are to hold at the mathematical brane. These conditions relate the thin source conf\mbox{}iguration to the behaviour of suitable quantities which describe the conf\mbox{}iguration of the external f\mbox{}ields in the proximity of the brane.

Unless one is able to solve the equations of motions exactly, in which case it is possible to identify the ingredients mentioned above by studying the behaviour of the solutions, it is necessary to propose an ansatz. That is, one needs to propose how the thin source conf\mbox{}iguration is to be constructed, to guess which are the quantities relevant to describe the geometry in the proximity of the thin brane, and to suggest a way to integrate the equations of motion to f\mbox{}ind conditions which link these objects. If the description one obtains this way turns out to be consistent, then a posteriori one can say that the thin limit is well-def\mbox{}ined, and to have obtained a thin limit description.

\subsection{The geometric ansatz}
\label{The geometric ansatz}

To propose an ansatz for the thin limit of the ribbon brane, we take inspiration from a specif\mbox{}ic subset of source conf\mbox{}igurations, where the equations of motion can be solved exactly, and then put forward an ansatz for the general case. Before doing that, we introduce some geometrical concepts which are central to the discussion to follow. 

\subsubsection{Ridge conf\mbox{}igurations}
\label{Ridge configurations}

Let's consider a point $\chd \in \mcalC_2$, and study the behaviour of the vector $\mbf{n}$ as we move along the geodesic normal to $\mcalC_2$ passing through $\chd$. We say that the brane $\mcalC_1$ has a \emph{ridge} along $\mcalC_2$ at $\chd$ if $\mbf{n}$ displays a f\mbox{}inite jump at $\t = 0$, and if $\mbf{n}(\chd, 0^-)$ and $\mbf{n}(\chd, 0^+)$ are not antiparallel. If $\mbf{n}$ displays a f\mbox{}inite jump at $\t = 0$ and $\mbf{n}(\chd, 0^-)$ and $\mbf{n}(\chd, 0^+)$ are antiparallel, we say that $\mcalC_1$ has a \emph{cusp} along $\mcalC_2$ at $\chd$.\footnote{We therefore change terminology with respect to \cite{Sbisa:2014gwh, Sbisa:2014vva, Sbisa:2014dwa}, where we called a ``cusp'' what we now call a ``ridge''.} We say that $\mcalC_1$ has a ridge along $\mcalC_2$ if it has a ridge along $\mcalC_2$ at every $\chd \in \mcalC_2$, and if $\mbf{n}$ is continous when moving along the directions parallel to $\mcalC_2$. In this case, we say that $\mcalC_2$ lies at the crest of the ridge. We def\mbox{}ine the local \emph{dihedral angle} $\Upsilon(\chd)$ of the ridge as the the geometrical angle (that is, positive and unoriented) between the normal vector on the two sides of $\mcalC_2$
\beq
\label{Sofia Loren}
\Upsilon(\chd) = \arccos \big\langle \mbf{n}(\chd, 0^-), \mbf{n}(\chd, 0^+) \big\rangle \quad ,
\eeq
where $\langle \phantom{i}, \phantom{i} \rangle$ denotes the scalar product. A cusp corresponds to the case $\Upsilon = \pi$.

Note that the def\mbox{}inition of $\mcalC_1$ having a ridge along $\mcalC_2$ is independent of the orientation chosen for the normal vector. In fact, if $\mbf{n}$ displays a f\mbox{}inite jump at $\t = 0$ and if $\mbf{n}(\chd, 0^-)$ and $\mbf{n}(\chd, 0^+)$ are not antiparallel, then the same holds for $-\mbf{n}\,$. To distinguish the two cases, we def\mbox{}ine an \emph{oriented ridge conf\mbox{}iguration} to be given by a cod-1 brane $\mcalC_1$ which has a ridge along $\mcalC_2$, together with a choice for the orientation of $\mcalC_1$ (i.e.~with a def\mbox{}inite choice of $\mbf{n}$). This notion is clearly more apt to describe the manifold $\mcalM_{_{+}}$. As we discuss in appendix \ref{appendix ridge}, the suitable quantity to describe an oriented ridge conf\mbox{}iguration is not the dihedral angle, but is the \emph{oriented dihedral angle} $\O(\chd)$ which can be expressed as
\beq
\label{The Final Countdown}
\O(\chd) = - \, \sgn \, \big\langle \mbfscn_{_{\!-}}(\chd), \mbf{n}_{_{+}}(\chd) \big\rangle \,\, \arccos \big\langle \mbf{n}_{_{-}}(\chd), \mbf{n}_{_{+}}(\chd) \big\rangle \quad ,
\eeq
where $\textsc{n}^{_{\! A}} = \de_{\t} \hvf^{_{A}}$ and
\begin{align}
\label{flauntit}
\mbfscn_{_{\!\pm}}(\chd) &= \lim_{\t \to 0^{\pm}} \mbfscn (\chd, \t) & \mbf{n}_{_{\pm}}(\chd) &= \lim_{\t \to 0^{\pm}} \mbf{n} (\chd, \t) \quad .
\end{align}

\subsubsection{The ansatz}
\label{The ansatz}

Let's consider now a pure tension source conf\mbox{}iguration. In this case (see \cite{Sbisa:2014dwa} for a detailed discussion), the thin limit of the ribbon brane is well-def\mbox{}ined, and the exact solution relative to a tension $\la$ is invariant with respect to translations in the 4D directions. Indeed, when focusing on length scales $\gg l$, the exact solution is well approximated by a thin conf\mbox{}iguration where the bulk is f\mbox{}lat and the cod-1 brane $\mcalC_1$, away from $\mcalC_2$, has f\mbox{}lat intrinsic and extrinsic geometry. Along $\mcalC_2$, the cod-1 brane $\mcalC_1$ of the thin conf\mbox{}iguration displays a ridge of constant dihedral angle, while the cod-2 brane $\mcalC_2$ has f\mbox{}lat intrinsic geometry and f\mbox{}lat extrinsic geometry (seen as a submanifold of $\mcalC_1$). The link between the thick conf\mbox{}iguration and the thin one is provided by performing the pillbox integration of the 4D components of the (thick) Israel junction conditions, more precisely the oriented dihedral angle $\O$ of the thin conf\mbox{}iguration is individuated uniquely by the pillbox integration of the (thick) cod-1 energy momentum tensor across the ribbon brane.

We then propose the following ansatz for the general case. We assume that, when focusing on length scales $\gg l$, the exact conf\mbox{}iguration correspondent to a generic $\tmbf{T}$ (obeying (\ref{Andreja}) and (\ref{angelica})) is well approximated by a conf\mbox{}iguration where the bulk metric is smooth, and the cod-1 brane $\mcalC_1$ has a ridge along $\mcalC_2$, while it is smooth away from it. The geometry is smooth in the 4D directions, and in particular the oriented dihedral angle is a smooth function of $\chd$. The cod-2 brane $\mcalC_2$ has smooth intrinsic geometry but, since it lies on the crest of the ridge, its extrinsic curvature relative to $\mcalC_1$ in general is not. In fact in cod-1 GNC we have
\beq
\Big( \mscr{L}_{\ti{\mbfscn}} \, \ti{\mbf{g}} \Big)_{\!ab} = \, \d_{a}^{\,\, \m} \, \d_{b}^{\,\, \n} \, \Big( \detau \de_{(\m} \hvf^{_{A}} \,\, \de_{\n)} \hvf^{_{B}} \,\, g_{_{AB}}\big\rvert_{\hvfd} + \dem \hvf^{_{A}} \,\, \den \hvf^{_{B}} \,\, \detau \hvf^{_{L}} \,\, \de_{_{\! L}} g_{_{AB}}\big\rvert_{\hvfd} \Big) \quad ,
\eeq
and $\detau \de_{\m} \hvf^{_{A}}$, $\detau \hvf^{_{L}}$ are in general discontinuous in $\t = 0$ as a consequence of the ridge conf\mbox{}iguration. In the pure tension case, $\mscr{L}_{\ti{\mbfscn}} \, \ti{\mbf{g}}$ is continuous across $\t = 0$ only because $\de_{\m} \hvf^{_{A}}$ is independent of $\t$ and $g_{_{AB}}\big\rvert_{\hvfd}$ is constant, which is a fortuitous situation. Essentially, in passing from the pure tension to a generic source conf\mbox{}iguration we allow the dihedral angle to depend smoothly on $\chd$, the cod-1 curvature tensors to be non-zero (remaining smooth away from $\mcalC_2$), and the bulk geometry to be curved (remaining smooth).

We moreover assume that the pillbox integration of the cod-1 energy-momentum tensor across the ribbon brane is the appropriate thin source conf\mbox{}iguration. The relations (\ref{angelica}) imply that
\beq
\label{cristin}
\int_{-l}^{+l} d \t \,\, \hT_{ab}(\chd, \t) = \d_{a}^{\, \, \m} \,\, \d_{b}^{\, \, \n} \,\, \bar{T}_{\m\n}(\chd) \quad ,
\eeq
and we henceforth refer to $\bar{T}_{\m\n}$ as the \emph{codimension-2 energy momentum tensor}, which we assume to be as well a smooth function of $\chd$. Accordingly, we expect that the junction conditions associated to the thin limit of the ribbon brane, henceforth called the \emph{codimension-2 junction conditions}, are to be found by performing a pillbox integration of the 4D components of the Israel junction conditions (\ref{junctionconditionseq}) across the ribbon brane.

\subsubsection{Implementation of the thin limit}
\label{Implementation of the thin limit}

To actually perform the thin limit, instead of focusing on scales much larger than $l$, it is more ef\mbox{}fective to keep the scale of observation f\mbox{}ixed, and send the thickness of the ribbon brane to zero. This corresponds to constructing a suitable family of conf\mbox{}igurations, parametrized by a real number $\r \in (0, l \, ]$, which converge in the limit $\r \to 0^+$ to the conf\mbox{}iguration which, according to our ansatz, describes our system when focusing on length scales $\gg l$. Taking into account the equation (\ref{cristin}), we then construct a family of ribbon branes whose localization length is $\r$, and such that the integrated amount of the energy-momentum tensor is independent of $\r$. Namely, we construct a family of smooth source conf\mbox{}igurations $\hT_{ab}^{_{(\r)}}$ such that
\beq
\label{Andreja rho}
\hT_{ab}^{_{(\r)}}(\chd, \t) = 0 \qquad \text{for} \qquad \abs{\t} > \r \quad ,
\eeq
and such that their pillbox integration across the ribbon brane obeys
\beq
\label{cristina rho}
\int_{-\r}^{+\r} d \t \,\, \hT_{ab}^{_{(\r)}}(\chd, \t) = \d_{a}^{\, \, \m} \,\, \d_{b}^{\, \, \n} \,\, \bar{T}_{\m\n}(\chd) \quad .
\eeq
Likewise, we introduce a family of smooth bulk metrics $g_{_{AB}}^{_{(\r)}}(\Xd)$ and a family of smooth embedding functions $\hvf_{^{\! (\r)}}^{_{A}}(\chd, \t)$, which converge to a conf\mbox{}iguration which displays a ridge along $\mcalC_2$.

There is however a signif\mbox{}icant degree of arbitrarity in how we construct the parametrized families $g_{_{AB}}^{_{(\r)}}$ and $\hvf_{^{(\r)}}^{_{A}}$, due to the freedom of choosing the reference system in the bulk (we remind that we are using the cod-1 GNC on $\mcalC_1$). A conf\mbox{}iguration where the normal vector $\mbf{n}$ is discontinuous can be described by a smooth bulk metric and an embedding which is continuous but not derivable at $\mcalC_2$, or by a discontinuous bulk metric and a smooth embedding (or a mixture of the two things). To perform the pillbox integration and derive the cod-2 junction conditions, it is advisable to tailor the properties of the limit conf\mbox{}igurations to suit best our geometrical ansatz. Although all reference systems are in principle equivalent, some are able to encode our ansatz more naturally than others. This is even more important since, as we see below, to encode our ansatz we need to specify the converge properties of the families $g_{_{AB}}^{_{(\r)}}$ and $\hvf_{^{\! (\r)}}^{_{A}}$ when $\r \to 0^+$.

Keeping the embedding straight (or smooth) has the disadvantage that the bulk metric has to be discontinuous not only at $\mcalC_2$, but also in the bulk away from the branes, where the vacuum Einstein equations hold and the geometry is smooth \cite{deRham:2010rw}. Therefore, such a choice involves a f\mbox{}ictitious (coordinate) discontinuity, which in our opinion hinders the geometrical intuition. For this reason, we believe that the most natural choice is to adopt a gauge where the discontinuity of $\mbf{n}$ at $\mcalC_2$ is supported purely by the embedding of $\mcalC_1$. The importance of choosing a gauge which is geometrically suited to the problem has already been stressed in \cite{Sbisa:2014gwh}, where the advantages of a ``bulk-based'' approach were discussed. Clearly, it is necessary that, a posteriori, the junction conditions we obtain can be cast in a covariant form.

\subsubsection{The limit conf\mbox{}iguration}
\label{The limit configuration}

For the reasons discussed above, we impose that the family $g_{_{AB}}^{_{(\r)}}(\Xd)$ converges to a smooth conf\mbox{}iguration $g_{_{\! AB}}(\Xd)$, and that the family $\hvf_{^{\! (\r)}}^{_{A}}(\chd, \t)$ converges to a conf\mbox{}iguration $\hvf^{_{A}}(\chd, \t)$ which is continuous, smooth with respect to the variables $\chd$, and separately smooth for $\t \geq 0$ and for $ \t \leq 0$. By ``smooth for $\t \geq 0\,$'' we mean that $\hvf^{_{A}}(\chd, \t)$ is smooth for $\t > 0$ and every partial derivative of any order has a f\mbox{}inite limit for $\t \to 0^+$, and an analogue meaning holds for ``smooth for $\t \leq 0\,$''.

To be more specif\mbox{}ic about the domains, $\hvf_{^{\! (\r)}}^{_{A}}$ and $\hvf^{_{A}}$ are functions of the cod-1 GNC, which in general can be def\mbox{}ined only locally. For every $q \in \mcalC_2$, we assume that $\hvf^{_{A}}$ and all the functions of the family $\hvf_{^{\! (\r)}}^{_{A}}$ are def\mbox{}ined in a coordinate system which extends beyond $\t = \pm l$ and contains a 4D neighbourhood of $q \,$. More precisely, we assume that we can f\mbox{}ind a (4D) local reference system $(\bar{U}_{q}, \bar{\f}_{q}) \in \bar{\mcalA}$ such that the cod-1 GNC are well-def\mbox{}ined in an open set containing $\oW_{\!\!q}$, where $W_{\!q} = \bar{U}_{q} \times (-l, l)$. Therefore, once picked a $q \in \mcalC_2$, we consider $\hvf_{^{\! (\r)}}^{_{A}}$ and $\hvf^{_{A}}$ to be def\mbox{}ined on $W_{\!q}$, which implies that they can be extended with continuity (together with their partial derivatives) to $\oW_{\!\!q}$. Regarding $g_{_{AB}}^{_{(\r)}}$ and $g_{_{AB}}$, we assume that we can f\mbox{}ind a reference system $(V_{\!q}, \f_q)$ of the ambient manifold such that the images $\hvf^{\cdot}_{^{\! (\r)}}(\oW_{\!\!q})$ and $\hvf^{\cdot}(\oW_{\!\!q})$ are contained inside $V_{\!q}$ for every $\r$, and that $g_{_{AB}}^{_{(\r)}}$ and $g_{_{AB}}$ can be extended with continuity (together with their partial derivatives) to $\oV_{\!\!q}$. Note that we indicate with $\oV_{\!\!p}$ the closure of $V_{p}$, and with with $\oW_{\!\!q}$ the closure of $W_{\!q}$. The notation is possibly confusing, since a small overbar indicates quantities which pertain to the cod-2 brane while a big overbar indicate the closure of a set.\footnote{In particular, $\oW_{\!\!q} = \overline{\bar{U}}_{\!q} \times [-l, l \, ]$.}

\subsection{The convergence properties}
\label{Convergence properties}

Although necessary, f\mbox{}ixing the properties of the ($\r \to 0^+$) limit conf\mbox{}igurations is not by itself suf\mbox{}f\mbox{}icient to assure that the construction faithfully reproduces our geometric ansatz about the thin limit. This point, which for codimension-1 branes can be regarded as a technicality, is of central importance for higher-codimension branes. It is related to the well-known mathematical fact that it is possible to construct a family of smooth functions which converges (point-wise) to the zero-function, but whose integral over a f\mbox{}ixed interval does not tend to zero. For example this can happen when the family of functions has unbounded peaks with compact support, provided the measure of the supports tends to zero and the peak value diverges in a suitable way, and the position of the peak progressively shifts. For ease of discussion, we (improperly) refer to these types of behaviours as ``disappearing divergence'' behaviours.

This means that we can construct dif\mbox{}ferent realizations of the families $g^{_{(\r)}}_{_{AB}}$ and $\hvf_{^{\! (\r)}}^{_{A}}$, correspondent to the \emph{same} $g_{_{AB}}$ and $\hvf^{_{A}}$, which give dif\mbox{}ferent results when performing the pillbox integration of the Israel junction conditions, and therefore produce dif\mbox{}ferent codimension-2 junction conditions. In physics, often these behaviours are considered pathological and excluded by asking all the f\mbox{}ields to be regular. However, we cannot do this when performing a pillbox integration, since the physically relevant information is contained exactly in a behaviour of this type. We therefore need to clarify, according to our geometrical ansatz, which are the f\mbox{}ields whose divergences are physically meaningful, and which are the ones whose divergences are ``pathological''.

Once done that, we need to implement this information in the behaviour of the families $g^{_{(\r)}}_{_{AB}}$ and $\hvf_{^{\! (\r)}}^{_{A}}$. To understand how to do this, let's note that these pathological behaviours are compatible with the convergence to smooth limit conf\mbox{}igurations because the point-wise convergence is a quite weak condition, which does not link appeciably the regularity properties of the limit conf\mbox{}igurations (such as boundedness, for one) to those of the conf\mbox{}igurations belonging to the families. Stronger types of convergence, on the other hand, do provide such a link. Therefore, to make sure that the construction outlined in section \ref{Implementation of the thin limit} faithfully reproduces our ansatz, we will impose conditions also on the convergence properties of the families, and not only on the properties of the limit functions. This problem has been already highlighted in \cite{Sbisa:2014vva}, where it was shown that it can lead to important physical consequences. The subtlety of the pillbox integration across the ribbon brane is already witnessed by the dif\mbox{}ferent results reached by \cite{Gregory:2001xu, Gregory:2001dn} and \cite{Dvali:2006if} in the pure tension case.

\subsubsection{The convergence properties and the geometric ansatz}
\label{Convergence properties and the geometric ansatz}

In the framework described in section \ref{The geometric ansatz} to implement the thin limit, it is possible to look to the $\r \to 0^+$ limit from a more na\"ive perspective. Inserting the families $g^{_{(\r)}}_{_{AB}}$ and $\hvf_{^{\! (\r)}}^{_{A}}$ and $\hT^{_{(\r)}}_{ab}$ into the Israel junction conditions (\ref{junctionconditionseq}), we obtain a family of junction conditions where, thanks to (\ref{cristina rho}), the right hand side develops a ``Dirac delta'' divergence. It follows that, for the ansatz to work, na\"ively the families $g^{_{(\r)}}_{_{AB}}$ and $\hvf_{^{\! (\r)}}^{_{A}}$ have to be such that they produce a suitable divergence in the left hand side, which counterbalances the divergence due to the source term. That is, they have to produce a divergence localized at $\mcalC_2$ of the cod-1 extrinsic curvature and/or of the induced Einstein tensor, i.e.~of the geometry of the system.

The physical intuition behind our ansatz is that the divergence of the geometry at $\mcalC_2$ is not due to an (essential) singularity of the bulk geometry, and neither to how the cod-2 brane is embedded in the ambient space. Rather, it is due to how the cod-1 brane is embedded in the ambient space, in the direction orthogonal to $\mcalC_2$ (i.e.~to the ridge conf\mbox{}iguration). Therefore, in the context of our geometrical ansatz, when a disappearing divergence behaviour is associated to the bulk metric, we rule it out as pathological. Regarding the cod-1 embedding, we rule out such behaviour when taking place away from $\mcalC_2$, and also when they are associated to the partial derivatives in the directions parallel to $\mcalC_2$. Regarding the behaviour in the direction normal to $\mcalC_2$, the requirement of having a ridge and not a cusp at $\mcalC_2$ is related to the idea that the divergence of the geometry is due to the second normal derivative of the embedding $\de_{\t}^{2} \hvf_{^{(\r)}}^{_{A}}$, and not to the f\mbox{}irst normal derivative $\de_{\t} \hvf_{^{(\r)}}^{_{A}}$. Therefore, we rule out as pathological any disappearing divergence behaviour associated to $\de_{\t} \hvf_{^{(\r)}}^{_{A}}$ (and higher derivatives with respect to $\chd$ thereof).

A natural way to formalize this is to impose that the families $g^{_{(\r)}}_{_{AB}}$ and $\hvf_{^{\! (\r)}}^{_{A}}$ converge in suitable Banach spaces. Since the conf\mbox{}igurations belonging to the families are smooth by hypothesis, when we restrict their domain to a compact closure open set $\Th$ they belong authomatically to the spaces of functions of class $C^n$ on $\oTh$ for every $n \geq 0$, and of Lipschitzian functions on $\oTh$. As we mention in appendix \ref{appendix convergence}, it is possible to introduce natural norms in these spaces which make them Banach. Convergence in such norms constrains, on the one hand, the properties of the limit function, because of the completeness of Banach spaces. On the other hand, it constrains also the way the limit function is approached by the family, because of the specif\mbox{}ic form of the norms $\norm{\phantom{f}}_{C^{n}(\oTh)}$ and $\norm{\phantom{f}}_{Lip (\oTh)}$. For example, the convergence in the norm $\norm{\phantom{f}}_{Lip (\oTh)}$ not only guarantees that the limit function is Lipschitzian, but also that the partial derivatives of f\mbox{}irst order remain bounded, a condition which is clearly very useful to formalize our ansatz.

\subsubsection{The convergence properties of the families}
\label{Convergence properties of the families}

For the reasons mentioned above, we impose the following convergence conditions on the families to ensure that the construction faithfully reproduces our geometric ansatz.

Regarding the bulk metric, since we want to keep under control the behaviour of $g^{_{(\r)}}_{_{AB}}$ and its partial derivatives in $\oV_{\!\!p}$, we assume that the family of smooth functions $\{ g^{_{(\r)}}_{_{AB}} \}_{_{\! \r}}$ converges in the norm $\norm{\phantom{f}}_{C^{n} (\oV_{\!\!p})}$. In particular, this implies that the family $\{ g^{_{(\r)}}_{_{AB}} \}_{_{\! \r}}$ converges uniformly in $\oV_{\!\!p}$ to the limit conf\mbox{}iguration $g_{_{\! AB}}$, and that the family of any choice of partial derivatives of order $\leq n$ of $g^{_{(\r)}}_{_{AB}}$ converges uniformly in $\oV_{\!\!p}$ to the correspondent choice of partial derivatives of $g_{_{\! AB}}$.

Regarding the cod-1 embedding, among the partial derivatives of $\{ \hvf_{^{\! (\r)}}^{_{A}} \}_{_{\! \r}}$ of order $\leq 2$ we want only $\{ \de^{2}_{\t} \, \hvf_{^{\! (\r)}}^{_{A}} \}_{_{\! \r}}$ to be unbounded, and only inside the ribbon brane. Moreover, we want everything to be smooth with respect to the $\chd$ variables. We therefore impose the following conditions. First of all we impose that, indicating
\beq
Q_{q}^{_{(\r)}} = \bar{U}_{q} \times \big\{ \big( -l, -\r \big) \cup \big( \r , l \big) \big\} \quad ,
\eeq
the families $\{ \de^{\a} \hvf_{^{\! (\r)}}^{_{A}} \}_{_{\! \r}}$ with $\abs{\a} \leq n$ converge uniformly in $\{ \oQ_{q}^{_{(\r)}} \}_{_{\! \r}}$ to $\de^{\a} \hvf^{_{A}}$ (this condition is formalized precisely in appendix \ref{appendix convergence}), and that the limits
\begin{align}
\label{Elias}
\lim_{\r \to 0^+} &\de^{\a} \hvf_{^{\! (\r)}}^{_{A}}(\chd, -\r) & \lim_{\r \to 0^+} &\de^{\a} \hvf_{^{\! (\r)}}^{_{A}}(\chd, \r)
\end{align}
exist f\mbox{}inite for $\abs{\a} \leq n$ (with the convention that the partial derivative of order zero is the function itself).\footnote{In the the multi-index notation, we indicate $\a = (\a_1, \ldots, \a_d)$, where the $\a_i$ are non-negative real numbers, and def\mbox{}ine $\abs{\a} \equiv \a_1 + \ldots + \a_d$. Considering a function $f(x_{1}, \ldots , x_{d})$, we then indicate $\de^\a f = \frac{\de^{\abs{\a}} f}{\de x_{1}^{\a_1} \cdots \, \de x_{d}^{\a_d}}$.} These requirements keep under control the behaviour of $\hvf_{^{\! (\r)}}^{_{A}}$ and its partial derivatives outside of the ribbon brane, as well as assuring that the limit function $\hvf^{_{A}}$ is smooth for $\t \geq 0$ and for $\t \leq 0$.

As a second condition, we impose that the family $\hvf_{^{\! (\r)}}^{_{A}}$ converges in the norm
\beq
\label{Hitomi}
\norm{f}_{} = \norm{f}_{Lip (\oW_{\!\! q})} + \sum_{j = 1}^{n} \sum_{\abs{\a} = j} \norm{\de^{\a}_{\chd} f}_{Lip (\oW_{\!\! q})} \quad .\footnote{With $\de^{\a}_{\chd}$ we indicate the partial derivative with respect to the coordinates $\chd$ only, associated to the multi-index $\a$.}
\eeq
This implies, on the one hand, that the families $\{ \de_{\t} \hvf_{^{\! (\r)}}^{_{A}} \}_{_{\! \r}}$ and $\{ \de_{\t} \de^{\a}_{\chd} \hvf_{^{\! (\r)}}^{_{A}} \}_{_{\! \r}}$ with $\abs{\a} \leq n$ remain bounded in $\oW_{\!\! q}$ when $\r \to 0^+$, so the f\mbox{}irst partial derivatives with respect to $\t$ are under control. On the other hand, it implies that the families $\{ \hvf_{^{\! (\r)}}^{_{A}} \}_{_{\! \r}}$ and $\{ \de^{\a}_{\chd} \hvf_{^{\! (\r)}}^{_{A}} \}_{_{\! \r}}$ with $\abs{\a} \leq n$ converge uniformly in $\oW_{\!\! q}$ respectively to $\hvf^{_{A}}$ and $\de^{\a}_{\chd} \hvf^{_{A}}$, since convergence in the norm (\ref{Hitomi}) implies convergence in the norm (\ref{Anri}). This guarantees that the behaviour of $\{ \hvf_{^{\! (\r)}}^{_{A}} \}_{_{\! \r}}$ and $\{ \de^{\a}_{\chd} \hvf_{^{\! (\r)}}^{_{A}} \}_{_{\! \r}}$ with respect to the coordinates $\chd$ is kept under control, and that $\hvf^{_{A}}$ and $\de^{\a}_{\chd} \hvf^{_{A}}$ are continuous both in $\chd$ and in $\t$.

\subsubsection{Converge properties and limit conf\mbox{}igurations}

It is worth commenting on the requirement of the limits (\ref{Elias}) existing and being f\mbox{}inite. By the proposition \ref{proposition Qr} of appendix \ref{appendix convergence}, this implies that the limit function $\hvf^{_{A}}$ and all its partial derivatives of order $\leq n$ have a f\mbox{}inite limit for $\t \to 0^+$ and for $\t \to 0^-$ (although the two limit are in general dif\mbox{}ferent), since we have
\beq
\label{Alinete}
\lim_{\t \to 0^{\pm}} \de^{\a} \hvf^{_{A}}(\chd, \t) = \lim_{\r \to 0^+} \de^{\a} \hvf_{^{\! (\r)}}^{_{A}} (\chd, \pm \r) \quad .
\eeq
This justif\mbox{}ies the assertion that the existence and f\mbox{}initeness of the limits (\ref{Elias}) assures the limit function $\hvf^{_{A}}$ being smooth for $\t \geq 0$ and for $\t \leq 0 \, $.

Besides, note that the expressions $\de^{\a} \hvf_{^{\! (\r)}}^{_{A}}(\chd, -\r)$ and $\de^{\a} \hvf_{^{\! (\r)}}^{_{A}}(\chd, \r)$ indicate the value of $\hvf_{^{\! (\r)}}^{_{A}}$ and its partial derivatives on the sides of the ribbon brane. Therefore, the relation (\ref{Alinete}) implies that the thin limit conf\mbox{}iguration describes \emph{only the external} f\mbox{}ield conf\mbox{}iguration, as it should be in a thin limit description, and not part of the interior of the ribbon brane. This holds also for the normal vector, since the stronger convergence properties of the bulk metric imply
\begin{align}
\label{Bruna}
\lim_{\r \to 0^+} \de^{\a} g^{_{(\r)}}_{_{AB}}\big\vert_{\hvf_{_{\!\! (\r)}}^{\cdot}(\chd, \r)} &= \lim_{\r \to 0^+} \de^{\a} g^{_{(\r)}}_{_{AB}}\big\vert_{\hvf_{_{\!\! (\r)}}^{\cdot}(\chd, -\r)} = \de^{\a} g_{_{AB}}\big\vert_{\hvf^{\cdot}(\chd, 0)} \quad ,
\end{align}
from which it follows that
\begin{align}
\label{Sheila}
\mbf{n}_{_{+}}(\chd) &= \lim_{\r \to 0^+} \mbf{n}_{_{(\r)}}(\chd, \r) & \mbf{n}_{_{-}}(\chd) &= \lim_{\r \to 0^+} \mbf{n}_{_{(\r)}}(\chd, - \r) \quad ,
\end{align}
where $\mbf{n}_{_{+}}$ and $\mbf{n}_{_{-}}$ are def\mbox{}ined in (\ref{flauntit}).

A word about $n$. Ideally we would like to set $n = \infty$, to guarantee $g_{_{\! AB}}$ being of class $C^{\infty}$ and $\hvf^{_{A}}$ being of class $C^{\infty}$ with respect to $\chd$. However this is not possible for technical reasons, since the space of functions of class $C^{\infty}$ is not equipped with a natural norm. One option would be to def\mbox{}ine the convergence of the families in suitable Fr\'echet spaces, instead of in Banach spaces. For simplicity we decide to work with Banach spaces, and assume that $n$ is ``big enough'' so that the functions are suf\mbox{}f\mbox{}iciently regular for any need we may have. Therefore, when we speak about smoothness we actually intend it in the sense of ``of class $C^n$ with $n$ arbitrarily big''. With this precisation, it is clear that the convergence properties imposed in section \ref{Convergence properties of the families} are not only compatible to, but actually imply the properties of the limit conf\mbox{}igurations described in section \ref{The limit configuration}.

\section{The codimension-2 pillbox integration}
\label{section The codimension-2 pillbox integration}

We now turn to the derivation of the junction conditions associated to the thin limit of the ribbon brane. In cod-1 GNC, the Israel junction conditions (\ref{junctionconditionseq}) for the families of metrics and embeddings read
\begin{align}
- 2 \Msf \, \hg^{\la\r}_{^{(\r)}} \, \hK_{\la\r}^{_{(\r)}} + \Mft \, \hG_{\t\t}^{_{(\r)}} \, &= \,\, \hT_{\t\t}^{_{(\r)}} \label{junctionconditionseq tt} \\[4mm]
2 \Msf \, \hK_{\t\m}^{_{(\r)}} + \Mft \, \hR_{\t\m}^{_{(\r)}} \, &= \,\, \hT_{\t\m}^{_{(\r)}} \label{junctionconditionseq tm} \\[3mm]
2 \Msf \Big( \hK_{\m\n}^{_{(\r)}} - \hK^{_{(\r)}}_{\phantom{\m\n}} \, \hg_{\m\n}^{_{(\r)}} \Big) + \Mft \, \hG_{\m\n}^{_{(\r)}} \, &= \,\, \hT_{\m\n}^{_{(\r)}} \quad , \label{junctionconditionseq mn}
\end{align}
where we used the relations (\ref{c1 induced metric GNC trivial k app}) and $\hK^{_{(\r)}}_{\phantom{\m\n}} = \hK_{\t\t}^{_{(\r)}} + \hg^{\la\r}_{^{(\r)}} \hK_{\la\r}^{_{(\r)}}$. Regarding the left hand side of (\ref{junctionconditionseq tt})--(\ref{junctionconditionseq mn}), the results summarized in appendix \ref{appendix Einstein tensor extrinsic curvature} imply that the only terms which can diverge in the $\r \to 0^+$ limit are to be found in the $\m\n$ equation, i.e.~(\ref{junctionconditionseq mn}). This is hoped-for, since by (\ref{cristina rho}) the only components of the cod-1 energy-momentum tensor which give a non-vanishing contribution to the pillbox integration are the $\m\n$ ones. This suggests that we are following a consistent path, and conf\mbox{}irms the expectation that the junction conditions associated to the thin limit of the ribbon brane are to be found by performing the pillbox integration of (\ref{junctionconditionseq mn}), namely
\beq
\label{pillbox integration}
\lim_{\r \to 0^+} \, \int_{-\r}^{\r} d \t \,\, \bigg[ 2 \Msf \Big( \hK^{_{(\r)}}_{\m\n} -  \hg^{_{(\r)}}_{\m\n} \, \, \big( \hK^{_{(\r)}}_{\t\t} + \hg_{^{(\r)}}^{\m\n} \hK^{_{(\r)}}_{\m\n} \big) \Big) + \Mft \, \hG^{_{(\r)}}_{\m\n} \bigg] = \,\, \bar{T}_{\m\n} \quad .
\eeq

For ease of exposition, we consider separately the integration of the induced gravity part and of the extrinsic curvature part, so we def\mbox{}ine
\begin{align}
\mscrI^{G}_{\m\n}(\chd) &\equiv \lim_{\r \to 0^+} \, \int_{-\r}^{\r} d \t \,\, \hG^{_{(\r)}}_{\m\n} \label{ind grav part} \quad , \\[2mm]
\mscrI^{K}_{\m\n}(\chd) &\equiv \lim_{\r \to 0^+} \, \int_{-\r}^{\r} d \t \,\, \Big[ \hK^{_{(\r)}}_{\m\n} -  \hg^{_{(\r)}}_{\m\n} \, \, \big( \hK^{_{(\r)}}_{\t\t} + \hg_{_{(\r)}}^{\m\n} \hK^{_{(\r)}}_{\m\n} \big) \Big] \quad . \label{ext curv part}
\end{align}

\subsection{The induced gravity part}

To perform the pillbox integration (\ref{ind grav part}), it is useful to consider the $4+1$ splitting correspondent to the foliation of $\ti{\mcalC}_1$ where each leaf is a $\t$-constant hypersurface. From (\ref{c1 induced metric GNC trivial k app}), (\ref{c1 induced metric GNC trivial app}) and (\ref{c1 induced metric GNC app}) it is easy to see that $\hg_{\m\n}^{_{(\r)}}(\chd, \t)$ and $\hg_{\m\n}(\chd, \t)$, with $\t$ f\mbox{}ixed, are respectively the family and limit conf\mbox{}iguration of the metric induced on the 4D hypersurface labelled by $\t$.

As we mention in appendix \ref{appendix The induced metric}, $\hg_{\m\n}^{_{(\r)}}$ and $\hg_{\m\n}$ as functions of $\chd$ are of class $C^{n-1}$ in $\oW_{\!\!q}$, and their derivatives with respect to the $\chd$ coordinates change continuously with $\t$ and $\chd$. This implies that the 4D intrinsic geometry of each leaf is smooth, and changes continuously across $\t = 0$ also in the limit $\r \to 0^+$. On the other hand, the extrinsic curvature of each leaf is proportional to $\detau \hg_{\m\n}^{_{(\r)}}$, therefore tends in the limit $\r \to 0^+$ to a conf\mbox{}iguration which is discontinuous at $\t = 0$ and displays a f\mbox{}inite jump when passing from $\t = 0^-$ to $\t = 0^+$. Moreover, the conditions (\ref{Alinete}) and (\ref{Bruna}) imply
\begin{align}
\label{Patty}
\detau \, \hg_{\m\n} \big\vert_{\t = 0^+} &= \lim_{\r \to 0^+} \detau \, \hg_{\m\n}^{_{(\r)}} \big\vert_{\t = \r} \quad & \quad \detau \, \hg_{\m\n} \big\vert_{\t = 0^-} &= \lim_{\r \to 0^+} \detau \, \hg_{\m\n}^{_{(\r)}} \big\vert_{\t = - \r} \quad ,
\end{align}
where we indicated
\begin{align}
\detau \, \hg_{\m\n} \big\vert_{\t = 0^+} &= \lim_{\t \to 0^+} \detau \, \hg_{\m\n} \quad & \quad \detau \, \hg_{\m\n} \big\vert_{\t = 0^-} &= \lim_{\t \to 0^-} \detau \, \hg_{\m\n} \quad .
\end{align}
The situation is in fact completely analogous to that of a cod-1 brane in GR, and therefore the pillbox integration in (\ref{ind grav part}) can be performed exactly as in \cite{MisnerThorneWheeler} giving
\beq
\label{pillbox integral ind grav}
\mscrI^{G}_{\m\n} = \Big[ \bar{K}_{\m\n} - \bar{g}_{\m\n} \, \bar{K} \Big]_{\pm} \quad ,
\eeq
where $\bar{K} = \bar{g}^{\r\s} \, \bar{K}_{\r\s}$.

\subsection{The extrinsic curvature part}
\label{The extrinsic curvature part}

Regarding the pillbox integration (\ref{ext curv part}), the results mentioned in appendix \ref{appendix Einstein tensor extrinsic curvature} imply that the only unbounded term in the integrand is the one that contains $\hK_{\t\t}^{_{(\r)}}$, and that the only divergent term in $\hK_{\t\t}^{_{(\r)}}$ is $n^{_{\! (\r)}}_{_{L}} \, \de_{\t}^2 \, \hvf^{_{L}}_{^{\! (\r)}}$. Since in the limit $\r \to 0^+$ all the bounded terms give a vanishing contribution, because the measure of the integration interval shrinks ot zero, we get
\beq
\mscrI^{K}_{\m\n}(\chd) = - \lim_{\r \to 0^+} \, \int_{-\r}^{\r} \hg^{_{(\r)}}_{\m\n} \, \Big( n^{_{\! (\r)}}_{_{L}} \, \de_{\t}^2 \hvf^{_{L}}_{^{\! (\r)}} \Big) \, d \t \quad . \label{ext curv part 1}
\eeq

This expression can be further simplif\mbox{}ied by noting that only the term in round parenthesis diverges when $\r \to 0^+$, while $\hg^{_{(\r)}}_{\m\n}$ remain bounded and converges uniformly to $\hg_{\m\n}$. As we prove in appendix \ref{appendix integration}, if the following limits exist and are f\mbox{}inite
\begin{align}
&\lim_{\r \to 0^+} \int_{-\r}^{\r} n_{^{L}}^{_{\! (\r)}} \, \de_{\t}^{2} \hvf^{_{L}}_{^{\! (\r)}} \,\, d\t & &\lim_{\r \to 0^+} \int_{-\r}^{\r} \babs{n_{^{L}}^{_{\! (\r)}} \, \de_{\t}^{2} \hvf^{_{L}}_{^{\! (\r)}}} \,\, d\t \quad , \label{absolute value}
\end{align}
then $\mscrI^{K}_{\m\n}(\chd)$ exists and is f\mbox{}inite, and we have
\begin{equation}
\label{pillbox integral extr curv 1}
\mscrI^{K}_{\m\n}(\chd) = - \, \bar{g}_{\m\n} (\chd) \, \mscrI (\chd) \quad ,
\end{equation}
where
\begin{equation}
\label{integral}
\mscrI (\chd) = \lim_{\r \to 0^+} \int_{-\r}^{\r} n_{^{L}}^{_{\! (\r)}} \, \de_{\t}^{2} \hvf^{_{L}}_{^{\! (\r)}} \,\, d\t \quad.
\end{equation}
We prove below the existence and f\mbox{}initeness of the pillbox integral (\ref{integral}), postponing to section \ref{section Discussion} the discussion on the existence and f\mbox{}initeness of the second pillbox integral of (\ref{absolute value}). Note in passing that, as a consistency check, $\mscrI (\chd)$ remains invariant under the ref\mbox{}lection $\t \to -\t$, as it should be since the right hand side of (\ref{pillbox integration}) is not inf\mbox{}luenced by the transformation.

\subsubsection{The auxiliary vector f\mbox{}ields}
\label{The auxiliary vector fields}

The evaluation of the pillbox integral $\mscrI$ is greatly facilitated by the introduction of two suitable families of auxiliary vector f\mbox{}ields. As a f\mbox{}irst step in this direction, let's introduce the families of vector f\mbox{}ields def\mbox{}ined on $\oW_{\!\! q}$
\begin{align}
\textsc{n}_{^{\! (\r)}}^{_{A}}(\chd, \t) &= \de_{\t} \,\hvf^{_{A}}_{^{\! (\r)}} & u_{^{[\m]}}^{_{\! (\r) A}}(\chd, \t) &= \de_{\m} \,\hvf^{_{A}}_{^{\! (\r)}} \quad ,
\end{align}
where $\dem \equiv \de_{\ch^{\m}}$. They are the push-forward with respect to the family of cod-1 embeddings of the vector basis associated to the cod-1 GNC. It follows that they are tangent to the family of cod-1 branes, and moreover that $\mbfscn_{_{\! (\r)}}$ is orthogonal to the set of linearly independent vector f\mbox{}ields $\{ \mbf{u}_{^{[\m]}}^{_{(\r)}} \}_{\m = 0, \ldots , 3}$. So, for every $(\chd, \t)$, the vectors $\mbf{n}_{_{(\r)}}$ and $\mbfscn_{_{\! (\r)}}$ are orthonormal, and are both orthogonal to the set $\{ \mbf{u}_{^{[\m]}}^{_{(\r)}} \}_{\m}$. Analogously, we def\mbox{}ine the vector f\mbox{}ields
\begin{align}
\textsc{n}^{_{\!A}}(\chd, \t) &= \de_{\t} \,\hvf^{_{A}} & u_{^{[\m]}}^{_{A}}(\chd, \t) &= \de_{\m} \,\hvf^{_{A}} \quad , \label{def N and u}
\end{align}
which are respectively def\mbox{}ined on $\oW_{\!\! q}\setminus \{ \t = 0 \}$ and $\oW_{\!\! q}$. For every $(\chd, \t)$ with $\t \neq 0$, $\mbfscn$ and $\{ \mbf{u}_{_{[\m]}} \}_{\m = 0, \ldots , 3}$ are tangent to the limit conf\mbox{}iguration of the cod-1 brane; moreover, the vectors $\mbf{n}$ and $\mbfscn$ are orthonormal, and are both orthogonal to the set $\{ \mbf{u}_{_{[\m]}} \}_{\m}$. The integral (\ref{integral}) can be therefore conveniently expressed as
\begin{equation}
\label{integral N}
\mscrI = \lim_{\r \to 0^+} \int_{-\r}^{\r} n_{^{L}}^{_{\! (\r)}} \, \de_{\t} \textsc{n}_{^{\! (\r)}}^{_{L}} \,\, d\t \quad,
\end{equation}
where dependence on $\chd$ is understood in the left hand side and dependence on $(\chd, \t)$ is understood inside the integral. 
Since the integration is executed at $\chd$ constant, we can f\mbox{}ix $\chd$ and consider the integrand as a function of $\t$ only.

The evaluation of the pillbox integral (\ref{integral N}) is dif\mbox{}f\mbox{}icult because both $n_{^{L}}^{_{\! (\r)}}$ and $\de_{\t} \textsc{n}_{^{\! (\r)}}^{_{L}}$ have a non-trivial behavior at $\t = 0$ in the $\r \to 0^+$ limit. Ideally, we would like to express the integrand in a way such that a unique object embodies the non-trivial behaviour of the geometry at $\t = 0 \,$. Indeed, such an object would be the only one surviving in the pillbox integration. To pursue this goal, it is crucial to notice that, despite the families of smooth vector f\mbox{}ields $n_{^{(\r)}}^{_{L}}$ and $\textsc{n}_{^{\! (\r)}}^{_{L}}$ have a non-trivial behaviour at $\t = 0$ when $\r \to 0^+$, the vector f\mbox{}ields $u_{^{[\m]}}^{_{\! (\r) A}}$ do not. In fact, by our ansatz, the families of smooth functions $u_{^{[\m]}}^{_{\! (\r) A}}$ converge uniformly to continuous conf\mbox{}igurations $u_{^{[\m]}}^{_{A}}$, and the partial derivatives $\de_{\t} u_{^{[\m]}}^{_{\! (\r) A}}$ remain bounded in a neighbourhood of $\t = 0$ when $\r \to 0^+$. Intuitively, this implies that we can complete the set $\{ u_{^{[\m]}}^{_{\! (\r) A}} \}_{_{\m}}$ to a basis by adding two vector f\mbox{}ields $y_{^{(\r)}}^{_{A}}$ and $z_{^{(\r)}}^{_{A}}$ which share the same convergence properties of the $u_{^{[\m]}}^{_{\! (\r) A}}$.

Indeed, we show in appendix \ref{appendix auxiliary fields} that it is possible to f\mbox{}ind two families of orthonormal \emph{auxiliary vector f\mbox{}ields} $y_{^{(\r)}}^{_{A}}$ and $z_{^{(\r)}}^{_{A}}$ which are orthogonal to the vector f\mbox{}ields $u_{^{[\m]}}^{_{\! (\r) A}}$, and which ``behave well'' in $\t = 0$ when $\r \to 0^+$, with a proviso to be discussed shortly. This is not unexpected, since (unlike $n_{^{(\r)}}^{_{L}}$ and $\textsc{n}_{^{\! (\r)}}^{_{L}}$) the auxiliary vector f\mbox{}ields are not forced to be rigidly tangent or normal to the codimension-1 brane. By ``behave well'', we mean that the families of smooth vector f\mbox{}ields $y_{^{(\r)}}^{_{A}}$ and $z_{^{(\r)}}^{_{A}}$ converge uniformly to \emph{continuous} conf\mbox{}igurations $y^{_{A}}$ and $z^{_{A}}$ which are again orthonormal, and that the partial derivatives $\de_{\t} y_{^{(\r)}}^{_{A}}$ and $\de_{\t} z_{^{(\r)}}^{_{A}}$ remain bounded in a neighbourhood of $\t = 0$ when $\r \to 0^+$. In particular, this implies that
\begin{align}
\label{petrushka}
\lim_{\r \to 0^+} y_{_{\! (\r)}}^{_{A}} \big( \pm \r \big) &= y^{_{A}} (0) & \lim_{\r \to 0^+} z_{_{\! (\r)}}^{_{A}} \big( \pm \r \big) &= z^{_{A}} (0) \quad .
\end{align}
The proviso mentioned above relates to the fact that the f\mbox{}ields $y_{^{(\r)}}^{_{A}}(\t)$, $z_{^{(\r)}}^{_{A}}(\t)$, $y^{_{A}}(\t)$ and $z^{_{A}}(\t)$ may not be def\mbox{}ined on $[-l, l \, ]$ and for $\r \in (0, l \, ] \,$, but only on $[-d, d \, ]$ and for $\r \in (0, \bd \, ] \,$, with $0 \! < \! d \! \leq \! l$ and $\bd \! \leq \! l \,$. This however is not a problem, because we use the construction of the auxiliary vector f\mbox{}ields only to evaluate the \emph{limit} (\ref{integral N}), and for $\r$ small enough the integration interval $\big[ \! -\r , \r \, \big]$ is surely contained in $[-d, d \, ] \,$.

\subsubsection{The slope function}
\label{The slope function}

The idea now is to expand the f\mbox{}ields $\big( \mbfscn_{_{\! (\r)}} , \mbf{n}_{_{(\r)}} \big)$ on the basis provided by $\big( \mbf{y}_{_{\! (\r)}} , \mbf{z}_{_{(\r)}} \big)$, and expand the f\mbox{}ields $\big( \mbfscn , \mbf{n} \big)$ on the basis provided by $\big( \mbf{y} , \mbf{z} \big)$. In fact, since the couples $\big( \mbfscn_{_{\! (\r)}} , \mbf{n}_{_{(\r)}} \big)$ and $\big( \mbf{y}_{_{\!\! (\r)}} , \mbf{z}_{_{(\r)}} \big)$ are both orthogonal to the set of vectors $\{ \mbf{u}_{^{[\m]}}^{_{\! (\r)}} \}_{_{\m}}$, they have to be linearly dependent for every $\t$. The same is true of the couples $\big( \mbfscn , \mbf{n} \big)$ and $\big( \mbf{y} , \mbf{z} \big)$, since they are both orthogonal to the set of vectors $\{ \mbf{u}_{_{[\m]}} \}_{_{\m}}$ for every $\t \neq 0 \,$.

Since the two couples of vector f\mbox{}ields $\big( \mbfscn_{_{\! (\r)}} , \mbf{n}_{_{(\r)}} \big)$ and $\big( \mbf{y}_{_{\!\! (\r)}} , \mbf{z}_{_{(\r)}} \big)$ are both orthonormal, they have to be linked by $O(2)$ matrices, which can be chosen to have positive determinant (in case by swapping $\mbf{y}_{_{\!\! (\r)}}$ and $\mbf{z}_{_{(\r)}}$).\footnote{In other words, we can choose the orientation of $\big( \mbf{y}_{_{\!\! (\r)}} , \mbf{z}_{_{(\r)}} \big)$ to be the same of that of $\big( \mbfscn_{_{\! (\r)}} , \mbf{n}_{_{(\r)}} \big) \,$.} It follows that we can write
\begin{align}
\mbfscn_{_{\! (\r)}}(\t) &= \cos \big(S_{_{\! (\r)}} (\t)\big) \,\, \mbf{y}_{_{\!\! (\r)}}(\t) + \sin \big(S_{_{\! (\r)}} (\t)\big) \,\, \mbf{z}_{_{(\r)}}(\t) \label{auxiliary 1} \\[3mm]
\mbf{n}_{_{(\r)}}(\t) &= - \sin \big(S_{_{\! (\r)}} (\t)\big) \, \mbf{y}_{_{\!\! (\r)}}(\t) + \cos \big(S_{_{\! (\r)}} (\t)\big) \, \mbf{z}_{_{(\r)}}(\t) \label{auxiliary 2}
\end{align}
for some family of functions $\big\{ S_{_{\! (\r)}} \big\}_{_{\!\! \r}} : [-d , d \,] \to \mbbR \,$, and likewise for the limit conf\mbox{}igurations we can write
\begin{align}
\mbfscn(\t) &= \cos \big(S (\t)\big) \,\, \mbf{y}(\t) + \sin \big(S (\t)\big) \,\, \mbf{z}(\t) \label{auxiliary 1 lim} \\[3mm]
\mbf{n}(\t) &= - \sin \big(S (\t)\big) \, \mbf{y}(\t) + \cos \big(S (\t)\big) \, \mbf{z}(\t) \quad , \label{auxiliary 2 lim}
\end{align}
for some function $S : [-d , d \,] \setminus \{0\} \to \mbbR \,$. In analogy to \cite{Sbisa:2014gwh}, we call $S_{_{\! (\r)}}$ and $S$ respectively the family and the limit value of the \emph{slope function}. This is motivated by the observation that $\tan S_{_{\! (\r)}}$ is the slope of the embedding function as a function of $\t$ when represented in the system of axis $\big(  \mbf{y}_{_{\!\! (\r)}}, \mbf{z}_{_{(\r)}} \big)$, since $\textsc{n}^{_{\! A}}_{_{\! (\r)}} = \de_{\t} \, \hvf^{_{A}}_{_{\! (\r)}}$ (the same is true for $\tan S$ in the system of axis $\big( \mbf{y}, \mbf{z} \big)$, since $\textsc{n}^{_{\! A}} = \de_{\t} \hvf^{_{A}}$).

Strictly speaking, the relations (\ref{auxiliary 1})--(\ref{auxiliary 2 lim}) do not f\mbox{}ix the functions $S_{_{\! (\r)}}$ and $S$ nor their properties, because of the periodicity of the sine and of the cosine functions. However, as we show in appendix \ref{appendix slope function}, we can always choose the functions $S_{_{\! (\r)}}$ to be smooth on $[-d , d \, ]$, and the function $S$ to be smooth on $[-d , d \, ] \setminus \{ 0 \}$. Moreover, exploiting the requirement of absence of self-intersections of the cod-1 brane, we can choose them in such a way that the family $\big\{ S_{_{\! (\r)}} \big\}_{_{\!\! \r}}$ converges to $S$ uniformly in $\{ [-d , -\r \,] \cup [\, \r , d \,] \}_{_{\! \r}}\,$. This implies that the following relations hold
\begin{align}
\label{Setiba}
\lim_{\r \to 0^+} S_{_{\! (\r)}} (-\r) &= S_{_{\!-}} & \lim_{\r \to 0^+} S_{_{\! (\r)}} (\r) &= S_{_{\!+}} \quad ,
\end{align}
where the limits
\begin{align}
\label{Mona Lisa}
S_{_{\!-}} &= \lim_{\t \to 0^-} S \big( \t \big) & S_{_{\!+}} &= \lim_{\t \to 0^+} S \big( \t \big)
\end{align}
exist and are f\mbox{}inite. The conditions of smoothness and uniform convergence in $\{ [-d , -\r \,] \cup [\, \r , d \,] \}_{_{\! \r}}$ drastically reduces the arbitrarity in the choice of $\big\{ S_{_{\! (\r)}} \big\}_{_{\!\! \r}}$ and $S \,$, and, regarding the function $S$, leave undetermined only an overall additive phase (i.e.~independent of $\t$) which have no relevance at all in the discussion to follow.

Note that $S_{_{\! (\r)}}$ and $S$ inherit from $\de_{\t} \, \hvf^{_{A}}_{^{\! (\r)}}$ and $\de_{\t} \, \hvf^{_{A}}$ a non-trivial behaviour at $\t = 0\,$. In fact, $S$ has to be discontinuous in $\t = 0$ to satisfy (\ref{auxiliary 1 lim}) and (\ref{auxiliary 2 lim}), and $\de_{\t} S_{_{\! (\r)}}$ cannot remain bounded in a neighbourhood of $\t = 0$ to satisfy (\ref{auxiliary 1}) and (\ref{auxiliary 2}). On the other hand, $\{ \mbfscn_{_{\! (\r)}} \}_{_{\! \r}}$, $\{ \mbf{n}_{_{(\r)}} \}_{_{\! \r}}$, $\mbfscn$ and $\mbf{n}$ are completely determined in terms of $S_{_{\! (\r)}}$ and $S$ and of the auxiliary vector f\mbox{}ields, where the latters are regular in $\t = 0$. Therefore the non-trivial behaviour of the geometry in $\t = 0$ is \emph{completely} encoded in the (family and limit conf\mbox{}iguration of the) slope function. We conclude that the introduction of the auxiliary vector f\mbox{}ields indeed permits to express the non-trivial behaviour of $\mbfscn_{_{\! (\r)}}$ and $\mbfscn$ in terms of the non-trivial behaviour of a unique scalar object, the angular coordinate of $\mbfscn_{_{\! (\r)}}$ and $\mbfscn$ in the frame provided by the auxiliary f\mbox{}ields.

\subsection{Evaluation of the pillbox integral}

The ability to express the families $\mbfscn_{_{\! (\r)}}$ and $\mbf{n}_{_{(\r)}}$ in terms of $\mbf{y}_{_{\!\! (\r)}}$, $\mbf{z}_{_{(\r)}}$ and $S_{_{\! (\r)}}$ permits to evaluate the pillbox integral (\ref{integral N}) in a very simple way. In fact, using (\ref{auxiliary 1})--(\ref{auxiliary 2}) to express the integrand of (\ref{integral N}), we get
\begin{multline}
n_{^{L}}^{_{\! (\r)}} \, \de_{\t} \textsc{n}_{^{\! (\r)}}^{_{L}} = \de_{\t} S_{_{\! (\r)}} - \sin S_{_{\! (\r)}} \cos S_{_{\! (\r)}} \, y_{_{L}}^{_{(\r)}} \de_{\t} y_{^{(\r)}}^{_{L}} - \sin^{2}\! S_{_{\! (\r)}} \, y_{_{L}}^{_{(\r)}} \de_{\t} z_{^{(\r)}}^{_{L}} + \\
+ \cos^{2}\! S_{_{\! (\r)}} \, z_{_{L}}^{_{(\r)}} \de_{\t} y_{^{(\r)}}^{_{L}} + \sin S_{_{\! (\r)}} \cos S_{_{\! (\r)}} \, z_{_{L}}^{_{(\r)}} \de_{\t} z_{^{(\r)}}^{_{L}} \label{Heydouga}
\end{multline}
and it is easy to recognize that the only diverging term on the right hand side is $\de_{\t} S_{_{\! (\r)}}$. Plugging (\ref{Heydouga}) into (\ref{integral N}), and using (\ref{Setiba}), we get
\begin{equation}
\label{integral S}
\mscrI = \lim_{\r \to 0^+} \Big[ S_{_{\! (\r)}} (\r) - S_{_{\! (\r)}} (-\r) \Big] = S_{_{\!+}} - S_{_{\!-}} \quad , 
\end{equation}
which is independent of an arbitrary phase common to $S_{_{\!+}}$ and $S_{_{\!-}}$, as we mentioned above. As we explain in the appendix \ref{appendix slope function}, the requirement of absence of self-intersections of the cod-1 brane place the constraint $\abs{S_{_{\!+}} - S_{_{\!-}}} < \pi$ on the possible values of $S_{_{\!+}}$ and $S_{_{\!-}}$.

\subsubsection{Independence from the auxiliary f\mbox{}ields}

The expression (\ref{integral S}) for the pillbox integral $\mscrI$, as it stands, seems to depend on the specif\mbox{}ic choice of the auxiliary vector f\mbox{}ields, which is highly non-unique. We now show that this is not the case. For ease of notation, let's indicate with $\mscrR \big( \theta \big)$ the matrix which executes a 2D rotation of an angle $\theta$ (with the convention that positive angles correspond to rotations in the counterclockwise direction).

Consider a dif\mbox{}ferent choice of auxiliary f\mbox{}ields $\mbf{y}_{^{\! (\r)}}^{\p}$ and $\mbf{z}_{^{(\r)}}^{\p}$ which converge to continuous conf\mbox{}igurations $\mbf{y}^{\p}$ and $\mbf{z}^{\p}$, and such that $\big( \mbf{y}_{^{\! (\r)}}^{\p} , \mbf{z}_{^{(\r)}}^{\p} \big)$ have the same orientation of $\big( \mbfscn_{_{\! (\r)}} , \mbf{n}_{_{(\r)}} \big)$ (respectively, $\big( \mbf{y}^{\p} , \mbf{z}^{\p} \big)$ have the same orientation of $\big( \mbfscn , \mbf{n} \big)$).\footnote{To stave of\mbox{}f possible confusion, we remark that in this section a prime does \emph{not} indicate a derivative.} Then relations completely analogous to (\ref{auxiliary 1})--(\ref{auxiliary 2 lim}) hold, namely
\begin{equation}
\label{Raiane}
\begin{pmatrix}
\mbfscn_{_{\! (\r)}}(\t) \\
\mbf{n}_{_{(\r)}}(\t)
\end{pmatrix}
= \mscrR \Big( S^{\, \p}_{_{\! (\r)}} (\t) \Big)
\begin{pmatrix}
\mbf{y}_{_{\!\! (\r)}}^{\p}(\t) \\
\mbf{z}_{_{\!(\r)}}^{\p}(\t)
\end{pmatrix}
\end{equation}
and
\begin{equation}
\begin{pmatrix}
\mbfscn(\t) \\
\mbf{n}(\t)
\end{pmatrix}
= \mscrR \Big( S^{\, \p} (\t) \Big)
\begin{pmatrix}
\mbf{y}^{\p}(\t) \\
\mbf{z}^{\p}(\t)
\end{pmatrix} \quad ,
\end{equation}
where $S_{^{\! (\r)}}^{\, \p}$ and $S^{\, \p}$ are respectively the family and the limit value of the new slope function. Analogously to (\ref{petrushka}) and (\ref{Setiba}) we have
\begin{align}
\label{petrushka p}
\lim_{\r \to 0^+} \mbf{y}_{_{\!\! (\r)}}^{\p} \big( I_{_{\!(\r)}} \big) &= \mbf{y}^{\p} \big( 0 \big) & \lim_{\r \to 0^+} \mbf{z}_{_{\!(\r)}}^{\p} \big( I_{_{\!(\r)}} \big) &= \mbf{z}^{\p} \big( 0 \big)
\end{align}
and
\begin{align}
\label{Setiba p}
\lim_{\r \to 0^+} S_{_{\! (\r)}}^{\, \p} (\r) &= S^{\, \p}_{_{\!+}} \phantom{iii} & \phantom{iiii} \lim_{\r \to 0^+} S^{\, \p}_{_{\! (\r)}} (-\r) &= S^{\, \p}_{_{\!-}} \quad ,
\end{align}
where $S^{\, \p}_{_{\!+}} = \lim_{\t \to 0^+} S^{\, \p} \big( \t \big)$ and $S^{\, \p}_{_{\!-}} = \lim_{\t \to 0^-} S^{\, \p} \big( \t \big)$. To evaluate the integral (\ref{integral N}), we can follow the same procedure as above using the ``primed'' auxiliary f\mbox{}ields, getting
\begin{equation}
\label{integral Sp}
\mscrI = S_{_{\!+}}^{\, \p} - S_{_{\!-}}^{\, \p}
\end{equation}
with $\abs{S_{_{\!+}}^{\, \p} - S_{_{\!-}}^{\, \p}} < \pi \,$.

Now, recall that the couples $\big( \mbf{y}_{_{\! (\r)}} , \mbf{z}_{_{(\r)}} \big)$ and $\big( \mbf{y}_{_{\!\! (\r)}}^{\p} , \mbf{z}_{_{(\r)}}^{\p} \big)$ are orthonormal by construction and have concordant orientation (since they both have concordant orientation with $\big( \mbfscn_{_{\! (\r)}} , \mbf{n}_{_{(\r)}} \big)$). Therefore they are linked by $SO(2)$ matrices, and explicitly we can write
\begin{equation}
\label{Cintia}
\begin{pmatrix}
\mbf{y}_{_{\! (\r)}}(\t) \\
\mbf{z}_{_{(\r)}}(\t)
\end{pmatrix}
= \mscrR \Big( A_{_{(\r)}} (\t) \Big)
\begin{pmatrix}
\mbf{y}_{_{\!\! (\r)}}^{\p}(\t) \\
\mbf{z}_{_{(\r)}}^{\p}(\t)
\end{pmatrix}
\end{equation}
for some family of functions $\{ A_{_{(\r)}} (\t) \}_{_{\! \r}}$. Likewise, the couples $\big( \mbf{y} , \mbf{z} \big)$ and $\big( \mbf{y}^{\p} , \mbf{z}^{\p} \big)$ are linked by a $SO(2)$ matrix, and explicitly
\begin{equation}
\label{Cintia limit}
\begin{pmatrix}
\mbf{y}(\t) \\
\mbf{z}(\t)
\end{pmatrix}
= \mscrR \Big( A(\t) \Big)
\begin{pmatrix}
\mbf{y}^{\p}(\t) \\
\mbf{z}^{\p}(\t)
\end{pmatrix}
\end{equation}
for some function $A(\t) \,$. We refer to $\{ A_{_{(\r)}} \}_{_{\! \r}}$ and $A$ as the family and limit conf\mbox{}iguration of the \emph{transition function}. Inserting (\ref{Cintia}) into (\ref{auxiliary 1})--(\ref{auxiliary 2}) and comparing with (\ref{Raiane}), we f\mbox{}ind
\begin{equation}
\label{Amanda}
\mscrR \Big( S_{_{\! (\r)}} (\t) \Big) \circ \mscrR \Big( A_{_{(\r)}} (\t) \Big) = \mscrR \Big( S^{\, \p}_{_{\! (\r)}} (\t) \Big) \quad ,
\end{equation}
and analogously
\begin{equation}
\mscrR \Big( S(\t) \Big) \circ \mscrR \Big( A(\t) \Big) = \mscrR \Big( S^{\, \p}(\t) \Big) \quad .
\end{equation}

The value and properties of the transition functions are not f\mbox{}ixed by the relations (\ref{Cintia}) and (\ref{Cintia limit}), but, exactly as for the slope functions, we can choose them to be smooth. Moreover, following the procedure described in appendix \ref{appendix slope function} for the slope functions, we can impose the family $\{ A_{_{(\r)}} \}_{_{\! \r}}$ to converge to $A$ uniformly in $\{ [-d , -\r \,] \cup [\, \r , d \,] \}_{_{\! \r}}\,$. Since the limit conf\mbox{}iguration of the auxiliary f\mbox{}ields (dif\mbox{}ferently from $\mbfscn$) is continuous in $\t = 0 \,$, we have that
\begin{equation}
\lim_{\t \to 0^-} A (\t) = \lim_{\t \to 0^+} A(\t)
\end{equation}
which implies
\begin{equation}
\label{Amandinha}
\lim_{\r \to 0^+} A_{_{(\r)}} (-\r) = \lim_{\r \to 0^+} A_{_{(\r)}} (\r) \quad .
\end{equation}
On the other hand, from (\ref{Amanda}) it follows that
\begin{equation}
\label{Lucelia}
S_{_{\! (\r)}} (\t) + A_{_{(\r)}} (\t) = S^{\, \p}_{_{\! (\r)}} (\t) + 2 \pi \, m_{_{(\r)}}(\t)
\end{equation}
where $m_{_{(\r)}}(\t)$ is a $\mbbZ \,$-valued family of functions, so using (\ref{Amandinha}) we get
\begin{equation}
S_{_{\!+}} - S_{_{\!-}} = S_{_{\!+}}^{\,\p} - S_{_{\!-}}^{\,\p} + 2 \pi \lim_{\r \to 0^+} \Big[ m_{_{(\r)}} (\r) - m_{_{(\r)}} (-\r) \Big] \quad .
\end{equation}
Since the term in square parenthesis is an integer, the conditions $\abs{S_{_{\!+}} - S_{_{\!-}}} < \pi$ and $\abs{S_{_{\!+}}^{\, \p} - S_{_{\!-}}^{\, \p}} < \pi$ imply
\begin{equation}
\lim_{\r \to 0^+} \Big[ m_{_{(\r)}} (\r) - m_{_{(\r)}} (-\r) \Big] = 0 \quad ,
\end{equation}
and therefore
\begin{equation}
S_{_{\!+}} - S_{_{\!-}} = S_{_{\!+}}^{\,\p} - S_{_{\!-}}^{\,\p} \quad .
\end{equation}
It follows that, evaluating the integral (\ref{integral N}) using the ``unprimed'' and ``primed'' auxiliary f\mbox{}ields, we get the same answer.

\section{The thin limit of the Cascading DGP model}
\label{section Thin limit of the Cascading DGP model}

The results of the previous section permit to evaluate exactly the left hand side of (\ref{pillbox integration}), and therefore derive the cod-2 junction conditions. Before turning to the the latter conditions and to the thin limit of the Cascading DGP model, we characterize geometrically the expression (\ref{integral S}) for the pillbox integral $\mscrI$.

\subsection{The geometrical meaning of the pillbox integral}
\label{The geometrical meaning of the pillbox integral}

Despite the expression
\beq
\mscrI = S_{_{\!+}} - S_{_{\!-}} \label{integral S pop}
\eeq
is very simple, it is not always useful in practice. When dealing with the codimension-2 junction conditions, we would like to avoid going each time through the construction of the auxiliary f\mbox{}ields, so we would like to express $\mscrI$ in terms of quantities readily calculable from the bulk metric and the embedding. Moreover, we would like to understand the geometrical meaning of the expression (\ref{integral S pop}). We address these points below.

\subsubsection{Covariant expression for the pillbox integral}
\label{Covariant expression}

To link $S_{_{\!+}} - S_{_{\!-}}$ to readily calculable quantities, let's note that, taking into account that $\mbf{y}$ and $\mbf{z}$ are continuous, the relations (\ref{auxiliary 1 lim})--(\ref{auxiliary 2 lim}) imply
\begin{align}
\mbfscn_{_{\pm}} &= \cos S_{_{\!\pm}} \,\, \mbf{y}(0) + \sin S_{_{\!\pm}} \,\, \mbf{z}(0) \label{auxiliary N} \\[2mm]
\mbf{n}_{_{\pm}} &= - \sin S_{_{\!\pm}} \,\, \mbf{y}(0) + \cos S_{_{\!\pm}} \,\, \mbf{z}(0) \quad , \label{auxiliary n}
\end{align}
where $\mbfscn_{_{\pm}}$ and $\mbf{n}_{_{\pm}}$ are def\mbox{}ined in (\ref{flauntit}). Let's consider the equations (\ref{auxiliary N})--(\ref{auxiliary n}) characterized by the sign ``$-$'': we can invert them to express $\mbf{y}(0)$ and $\mbf{z}(0)$ in terms of $\sin S_{_{\!-}}$, $\cos S_{_{\!-}}$, $\mbfscn_{_{-}}$ and $\mbf{n}_{_{-}}$. Plugging the relations thus obtained into the equations (\ref{auxiliary N})--(\ref{auxiliary n}) characterized by the sign ``+'', we obtain
\begin{align}
\mbfscn_{_{+}} &= \cos \big( S_{_{\!+}} - S_{_{\!-}} \big) \,\, \mbfscn_{_{\!-}} + \sin \big( S_{_{\!+}} - S_{_{\!-}} \big) \,\, \mbf{n}_{_{-}} \label{Jump 1} \\[2mm]
\mbf{n}_{_{+}} &= - \sin \big( S_{_{\!+}} - S_{_{\!-}} \big) \,\, \mbfscn_{_{\!-}} + \cos \big( S_{_{\!+}} - S_{_{\!-}} \big) \,\, \mbf{n}_{_{-}} \label{Jump 2} \quad ,
\end{align}
and, using the orthonormality of $\big(\mbfscn_{_{+}}, \mbf{n}_{_{+}} \big)$ and $\big(\mbfscn_{_{\!-}}, \mbf{n}_{_{-}} \big)$ we get
\begin{align}
\cos \big( S_{_{\!+}} - S_{_{\!-}} \big) &= \big\langle \mbfscn_{_{+}}, \mbfscn_{_{\!-}} \big\rangle  = \big\langle \mbf{n}_{_{+}}, \mbf{n}_{_{-}} \big\rangle  \label{Debora 1} \\[2mm]
\sin \big( S_{_{\!+}} - S_{_{\!-}} \big) &= \big\langle \mbfscn_{_{+}}, \mbf{n}_{_{-}} \big\rangle  = - \, \big\langle \mbf{n}_{_{+}}, \mbfscn_{_{\!-}} \big\rangle  \label{Debora 2} \quad ,
\end{align}
where we used the notation
\beq
\big\langle \mbf{a}, \mbf{b} \big\rangle  = g_{_{AB}}\big\rvert_{\vfd(0)} \,\, a^{_{A}} \,\, b^{_{B}} \quad .
\eeq

We now want to express $S_{_{\!+}} - S_{_{\!-}}$ itself, and not its sine and cosine, in terms of the scalar products of $\mbfscn_{_{+}}$, $\mbfscn_{_{-}}$, $\mbf{n}_{_{+}}$ and $\mbf{n}_{_{-}}$. As is well known, the cosine function is not globally invertible, and to invert it it is necessary to restrict its domain, there being inf\mbox{}inite possible ways to do that. Since, as we already mentioned, the absence of self-intersections of the cod-1 brane implies $\abs{S_{_{\!+}} - S_{_{\!-}}} < \pi$, we choose to perform the inversion on the interval $[0,\pi]$ and def\mbox{}ine
\beq
\arccos \, : \,\, [-1, 1] \to [0 , \pi] \quad .
\eeq
It then follows from (\ref{Debora 1}) and from the property of the cosine function of being even that
\beq
S_{_{\!+}} - S_{_{\!-}} = \sgn \big( S_{_{\!+}} - S_{_{\!-}} \big) \,\, \arccos \Big( \big\langle \mbf{n}_{_{+}}, \mbf{n}_{_{-}} \big\rangle \Big)
\eeq
where $\sgn$ indicates the sign function.\footnote{The sign function is def\mbox{}ined as $\sgn(x) = 1$ for $x > 0$, $\sgn(x) = 0$ for $x = 0$ and $\sgn(x) = -1$ for $x < 0$.} On the other hand, since $\sgn (S_{_{\!+}} - S_{_{\!-}}) = \sgn \big( \sin (S_{_{\!+}} - S_{_{\!-}}) \big)\,$, using (\ref{Debora 2}) we conclude that
\beq
\label{Luiza}
S_{_{\!+}} - S_{_{\!-}} = - \, \sgn \Big( \big\langle \mbfscn_{_{\!-}}, \mbf{n}_{_{+}} \big\rangle \Big) \,\, \arccos \Big( \big\langle \mbf{n}_{_{+}}, \mbf{n}_{_{-}} \big\rangle \Big) \quad .
\eeq
This expression is completely satisfactory, since it is covariant and built from objects directly related to the embedding and the bulk metric.

Of course there are three other equivalent ways to express (\ref{Luiza}), since, using (\ref{Debora 1}) and (\ref{Debora 2}), we can express $\sin ( S_{_{\!+}} - S_{_{\!-}})$ and $\cos (S_{_{\!+}} - S_{_{\!-}})$ in two dif\mbox{}ferent ways each. As a consistency check, it is easy to verify that the right hand side of (\ref{Luiza}) remains invariant under the ref\mbox{}lection $\t \to -\t$, since in this case the normal vectors change as follows
\begin{align}
\label{Sonia}
\mbf{n}_{_{+}} &\to \mbf{n}_{_{-}} & \mbf{n}_{_{-}} &\to \mbf{n}_{_{+}} & \mbfscn_{_{+}} &\to - \mbfscn_{_{\!-}} & \mbfscn_{_{\!-}} &\to - \mbfscn_{_{+}}
\end{align}
and by (\ref{Debora 2}) we have $\big\langle \mbfscn_{_{+}}, \mbf{n}_{_{-}} \big\rangle = - \, \big\langle \mbf{n}_{_{+}}, \mbfscn_{_{\!-}} \big\rangle$.

\subsubsection{The relation with the ridge angles}

Reinstating now the dependence on the coordinates $\chd$, the relation (\ref{Luiza}) implies that we can express $\mscrI(\chd)$ as follows
\beq
\label{Ana Luiza}
\mscrI(\chd) = - \, \sgn \Big( \big\langle \mbfscn_{_{\!-}}, \mbf{n}_{_{+}} \big\rangle (\chd) \Big) \,\, \arccos \Big( \big\langle \mbf{n}_{_{+}}, \mbf{n}_{_{-}} \big\rangle (\chd) \Big) \quad ,
\eeq
which implies that the pillbox integral $\mscrI(\chd)$ coincides with the oriented dihedral angle of the ridge $\O(\chd)$ (see relation \ref{The Final Countdown}). It is very pleasing that the extrinsic curvature part of the pillbox integration turns out to be directly related to the geometrical quantity suited to describe an oriented ridge conf\mbox{}iguration.

The interpretation of $\mscrI(\chd)$ as a closing/def\mbox{}icit angle, along the lines of appendix \ref{appendix ridge}, is however subtler. In fact, for a generic thin source conf\mbox{}iguration the bulk is not f\mbox{}lat and the cod-1 brane is not made up of two f\mbox{}lat semi-hyperplanes; equivalently, the manifolds $\mscrM$, $\mcalM_{_{+}}$ and $\mcalM$ are not globally factorizable into a 4D Lorentzian part and a Riemannian 2D part. It follows that the closing and def\mbox{}icit angles have no global meaning. However, they can be given a local meaning, as we see below. Let's take $q \in \mcalC_2$ and, referring to the discussion of section \ref{The thin limit of the codimension-1 brane} about the def\mbox{}initions of the manifolds $\mscrM$, $\mcalM_{_{+}}$ and $\mcalM \,$, decompose the tangent space $T_{q} \mscrM = T_{q} \, \mcalC_2 \oplus \mcal{E}_{q}$ in the direct sum of the subspace tangent to $\mcalC_2$ and its orthogonal complement $\mcal{E}_{q} = (T_{q} \, \mcalC_2)^{\perp}$. As it is well-known, the exponential map $\exp_{q} : \mcal{F} \subset T_q \mscrM \to \mscrM$ is a local dif\mbox{}feomorphism between a neighbourhood $\mcal{F}$ of the zero tangent vector and a neighbourhood $\mscr{F} \subset \mscrM$ of $q$. We def\mbox{}ine the \emph{local normal section} of $\mscrM$ at $q$ as the set $\mscr{N}_{q} \subset \mscrM$ such that $\mscr{N}_{q} = \exp_{q}(\mcal{E}_{q} \cap \mcal{F})$, where ``normal'' is intended with respect to $\mcalC_2$ and ``local'' means that it is def\mbox{}ined only in a neighbourhood of $q$. Since $\mcalM_{_{+}}$ is identif\mbox{}ied with a subset of $\mscrM$, we def\mbox{}ine the local normal section of $\mcalM_{_{+}}$ at $q$ as the set $\mscr{N}_{q}^{_{+}} = \mscr{N}_{q} \cap \mcalM_{_{+}}$. It follows that the collection $\{ \mscr{N}_{q} \}_{q}$ foliates $\mscrM$ in a neighbourhood of $\mcalC_2$, and $\{ \mscr{N}_{q}^{_{+}} \}_{q}$ foliates $\mcalM_{_{+}}$ in a neighbourhood of $\mcalC_2 \,$.

To characterize geometrically $\mscr{N}_{q}^{_{+}}$, we f\mbox{}irst note that the set of vectors of $\mcal{E}_{q}$ which correspond to curves which from $q$ point into the manifold $\mcalM_{_{+}}$ is the slice $\mcal{W}_{q} \subset \mcal{E}_{q}$ which have $\mbfscn_{_{\!-}}$ and $\mbfscn_{_{+}}$ as borders. The set $\exp_{q}(\mcal{W}_{q} \cap \mcal{F})$ does not coincide in general with $\mscr{N}_{q}^{_{+}}$, since the border of the latter is not necessarily made of geodesics of $\mscrM$. However, the dif\mbox{}ference between $\exp_{q}(\mcal{W}_{q} \cap \mcal{F})$ and $\mscr{N}_{q}^{_{+}}$ becomes more and more negligible as we get closer to $q$, and so we can identify them as long as we are concerned only with their properties at $q$. On the other hand, since the dif\mbox{}ferential of the exponential map evaluated in $0$ is the identity \cite{DoCarmoBook}
\beq
d_{0} \exp_{q} = id \quad ,
\eeq
we can in turn identify $\exp_{q}(\mcal{W}_{q} \, \cap \, \mcal{F})$ with $\mcal{W}_{q} \, \cap \, \mcal{F}$, as far as we consider only their properties at $q$. We conclude that, although $\mscr{N}_{q}^{_{+}}$ is not a slice of the Euclidean plane as in the pure tension case, at $q$ we can indeed approximate it with the slice $\mcal{W}_{q} \,$, which is an oriented (Euclidean) 2D ridge. Using the identif\mbox{}ication $\mscr{N}_{q}^{_{+}} \cong \mcal{W}_{q}$ at $q\,$, we can then def\mbox{}ine the \emph{local closing angle} of $\mcalM_{_{+}}$ at $q$ as the closing angle of $\mscr{N}_{q}^{_{+}} \cong \mcal{W}_{q} \,$. It takes a moment's thought to realize that the oriented dihedral angle of $\mcal{W}_{q}$ coincides with the oriented dihedral angle $\O(\chd)$ of the 6D ridge, evaluated at the coordinate $\chd$ which corresponds to $q\,$. We conclude that we can indeed interpretate the pillbox integral $\mscrI(\chd)$ (and $\O(\chd)$) as the local closing angle of $\mcalM_{_{+}}$ at $q \,$.

Regarding the $\mbbZ_2$-symmetric manifold $\mcalM \,$, as we explained in section \ref{The thin limit of the codimension-1 brane}, it is constructed by creating a copy of $\mcalM_{_{+}}$ and glueing together the two mirror copies in a $\mbbZ_2$-symmetric way with respect to the boundary (the cod-1 brane). It follows from the discussion above that $\mcalM$ is foliated by $\{ \mscr{N}_{q}^{_{+}} \leftrightarrow \mscr{N}_{q}^{_{+}} \}_{q}$, where $\leftrightarrow$ indicates the $\mbbZ_2$-glueing operation, so we def\mbox{}ine the local normal section of $\mcalM$ at q as the set $\mscr{N}_{q}^{_{+}} \leftrightarrow \mscr{N}_{q}^{_{+}}$. Moreover, it follows that $\mscr{N}_{q}^{_{+}} \leftrightarrow \mscr{N}_{q}^{_{+}}$ can be approximated at $q$ with the subset $\mcal{W}_{q} \leftrightarrow \mcal{W}_{q}$ of the two-dimensional Euclidean plane. Therefore, in analogy with the analysis above, we def\mbox{}ine the \emph{local def\mbox{}icit angle} of $\mcalM$ at $q$ as the def\mbox{}icit angle associated to $\mscr{N}_{q}^{_{+}} \cong \mcal{W}_{q} \leftrightarrow \mscr{N}_{q}^{_{+}} \cong \mcal{W}_{q} \,$, and we indicate it with $\D(\chd)$. It is easy to realize that the local def\mbox{}icit angle of $\mcalM$ at $q$ is equal to twice the local closing angle of $\mcalM_{_{+}}$ at $q$, so we have
\beq
\label{Oooohh}
\D(\chd) = 2 \, \O(\chd) \quad .
\eeq

In conclusion, we can give the pillbox integral $\mscrI (\chd)$ two geometrical interpretations, as the local closing angle of $\mcalM_{_{+}}$ or as (half of) the local def\mbox{}icit angle of $\mcalM$
\beq
\label{Glauzia}
\mscrI (\chd) = \O(\chd) = \half \,\, \D(\chd) \quad .
\eeq

\subsection{The cod-2 junction conditions and the Cascading DGP model}
\label{The cod-2 junction conditions and the Cascading DGP model}

Taking into account the expressions (\ref{pillbox integral ind grav}), (\ref{pillbox integral extr curv 1}) and (\ref{Glauzia}) for the pillbox integrals (\ref{ind grav part}) and (\ref{ext curv part}), where $\O(\chd)$ is explicitly given in (\ref{The Final Countdown}), we can express the pillbox integration (\ref{pillbox integration}) of the Israel junction conditions across the ribbon brane as

\begin{equation}
\label{Ana Luiza 2}
- 2 \Msf \, \O \,\, \bar{g}_{\m\n} + \Mft \, \Big[ \bar{K}_{\m\n} - \bar{g}_{\m\n} \, \bar{K} \Big]_{\!\pm} = \,\, \bar{T}_{\m\n}
\end{equation}
or equivalently as

\begin{equation}
\label{Ana Luiza 3}
- \Msf \, \D \,\, \bar{g}_{\m\n} + \Mft \, \Big[ \bar{K}_{\m\n} - \bar{g}_{\m\n} \, \bar{K} \Big]_{\!\pm} = \,\, \bar{T}_{\m\n} \quad ,
\end{equation}
where dependence on the coordinates $\chd$ is understood. These equations are, for a nested branes with induced gravity set-up, the codimension-2 analogous of the Israel junction conditions.

Several comments are in order. The f\mbox{}irst is that the details of the internal structure of the ribbon brane, which are encoded in the behaviour of the families $\hT_{ab}^{_{(\r)}}$, $\hvf_{^{(\r)}}^{_{A}}$ and $g_{_{AB}}^{_{(\r)}}\vert_{\hvfd_{^{(\r)}}}$ for $\t \in [-\r , \r \,]$, do not inf\mbox{}luence how the limit external f\mbox{}ield conf\mbox{}iguration $g_{_{\!AB}}$, $\vf^{_{A}}$ is sourced by the thin source conf\mbox{}iguration $\bar{T}_{\m\n} \,$. In other words, in our implementation of the thin limit the details of the internal structure of the ribbon brane disappear when $\r \to 0^+$. This is a necessary condition for the thin limit of the ribbon brane to be well-def\mbox{}ined. Mathematically, this follows from the fact that the divergent parts of the integrands in (\ref{ind grav part}) and (\ref{integral}) can be written as total derivatives. Therefore, in the $\r \to 0^+$ limit the  pillbox integrals pick up only the value of the (divergent part of the) integrands at the boundary of the ribbon brane. While this is expected for the induced gravity part, since we know this holds for shell conf\mbox{}igurations in GR, the result is non-trivial for the extrinsic curvature part, where the quantity which appears is the total derivative of the slope function.

The second comment is that all the geometrical quantities which appear in the cod-2 junction conditions are well-def\mbox{}ined for a ridge conf\mbox{}iguration, and are sourced by the pillbox integration of the energy-momentum tensor across the ribbon brane. Therefore our geometrical ansatz (see section \ref{The geometric ansatz}) is consistent. This point is tightly linked to the previous one, since it follows from the fact that the quantities which appear as total derivatives in the pillbox integrals are well-def\mbox{}ined for ridge conf\mbox{}igurations. The third comment is that, despite we used a preferred class of gauges to implement the thin limit (we adopted a ``bulk-based'' approach), the junction conditions we arrived at are perfectly covariant, since they are made of tensorial objects on $\mcalC_2 \,$. In particular $\O$ (or equivalently $\D$) are scalars from the point of view of $\mcalC_2 \,$, and they are def\mbox{}ined (see (\ref{The Final Countdown})--(\ref{Oooohh})) in a coordinate-free way in terms of 6D tensorial objects. Therefore, the approach outlined in section \ref{Implementation of the thin limit} to implement the thin limit is, a posteriori, justif\mbox{}ied.

Regarding the discussion at the beginning of section \ref{section The thin limit of the ribbon brane}, to claim that we derived a thin limit description for our system we should prove that the thin source conf\mbox{}iguration f\mbox{}ixes uniquely the external f\mbox{}ield conf\mbox{}iguration. In other words, we should prove that the codimension-2 junction conditions, supplemented by the source-free Israel junction conditions and the bulk Einstein equations
\begin{align}
\label{Bulkeq 2}
\mbf{G} =& \,\, 0 \qquad \,\, (\textrm{bulk})\\[3mm]
\label{junctionconditionseq 2}
2 \Msf \Big( \tmbf{K} -  \tmbf{g} \, \, tr \big( \tmbf{K} \big) \Big) + \Mft \, \tmbf{G} =& \,\, 0 \qquad (\textrm{cod-1 brane}) \\[2mm]
\label{cod2 jceq}
- \Msf \, \D \,\, \bar{\mbf{g}} + \Mft \, \Big[ \bar{\mbf{K}} - \bar{\mbf{g}} \,\, tr \big( \bar{\mbf{K}} \big) \Big]_{\!\pm} =& \,\, \bar{\mbf{T}} \qquad (\textrm{cod-2 brane})
\end{align}
admit a unique solutions once suitable boundary conditions are specif\mbox{}ied. This point is rather technical, and we defer its discussion to section \ref{section Discussion}.

\subsubsection{The 6D Cascading DGP model}

To formulate the thin limit description of the 6D Cascading DGP model, we proceed in two steps. First of all, we notice that in the junction conditions (\ref{Ana Luiza 2})--(\ref{Ana Luiza 3}) the codimension-2 brane is, f\mbox{}inally, thin. This means that we can introduce a genuine 4D induced gravity term on the codimension-2 brane exactly as done in \cite{Dvali:2000hr}, which in practice amounts to perform the substitution $\bar{T}_{\m\n} \to \bar{T}_{\m\n} - \Mfs \, \bar{G}_{\m\n} \,$. We therefore obtain the following codimension-2 junction conditions
\begin{equation}
- \Msf \, \D \,\, \bar{g}_{\m\n} + \Mft \, \Big[ \bar{K}_{\m\n} - \bar{g}_{\m\n} \, \bar{K} \Big]_{\!\pm} + \Mfs \, \bar{G}_{\m\n} = \,\, \bar{T}_{\m\n}
\end{equation}
or the equivalent conditions where $\D$ is replaced by $2 \, \O \,$.

Secondly, we introduce a mirror symmetry with respect to the codimension-2 brane. Considering $\mcalC_2$ as a submanifold of $\mcalC_1$, the absence of self-intersections implies that $\mcalC_2$ divides $\mcalC_1$ into three disjoint pieces $\mcalC_1 = \mcalC_{1}^{_{+}} \cup \mcalC_2 \cup \mcalC_{1}^{_{-}}$, where $\mcalC_{1}^{_{+}}$ and $\mcalC_{1}^{_{-}}$ are open 5D submanifolds of $(\mcalC_1, \ti{\mcalA})$ which have in common the boundary $\mcalC_2 \,$. We assume that $\mcalC_{1}^{_{+}}$ and $\mcalC_{1}^{_{-}}$ are dif\mbox{}feomorphic, and that they are $\mbbZ_2$ symmetric with respect to $\mcalC_2 \,$. In the cod-1 GNC, this implies that the $\t\t$ and $\m\n$ components of $\tmbf{g}$, $\tmbf{G}$, $\tmbf{K}$ and $\bar{\mbf{K}}$ are even with respect to the transformation $\t \to -\t$, while the the $\t\m$ components are odd. In particular, the codimension-2 junction conditions can be equivalently written as
\begin{equation}
- \Msf \, \D \,\, \bar{g}_{\m\n} + 2 \Mft \, \Big[ \bar{K}_{\m\n} - \bar{g}_{\m\n} \, \bar{K} \Big]_{\!+} + \Mfs \, \bar{G}_{\m\n} = \,\, \bar{T}_{\m\n} \quad .
\end{equation}
Let us furthermore stress that, following \cite{Sbisa:2014gwh, Sbisa:2014vva} and dif\mbox{}ferently from \cite{deRham:2007rw, deRham:2007xp, deRham:2010rw}, we do \emph{not} impose the mirror symmetry with respect to $\mcalC_2$ to hold into the bulk, but only on the codimension-1 brane (in other words, we don't impose the manifold $\mcalM$ to satisfy a double mirror symmetry $\mbbZ_2 \times \mbbZ_2$).

Therefore, we obtain the following system of equations for the thin limit of the 6D Cascading DGP model
\begin{align}
\label{Bulkeq Cascading}
\mbf{G} =& \,\, 0 \qquad \,\, (\textrm{bulk})\\[3mm]
\label{junctionconditionseq Cascading}
2 \Msf \Big( \tmbf{K} -  \tmbf{g} \, \, tr \big( \tmbf{K} \big) \Big) + \Mft \, \tmbf{G} =& \,\, 0 \qquad (\textrm{cod-1 brane}) \\[2mm]
\label{cod2 jceq Cascading}
- \Msf \, \D \,\, \bar{\mbf{g}} + 2 \Mft \, \Big[ \bar{\mbf{K}} - \bar{\mbf{g}} \,\, tr \big( \bar{\mbf{K}} \big) \Big]_{\!+} + \Mfs \, \bar{\mbf{G}} =& \,\, \bar{\mbf{T}} \qquad (\textrm{cod-2 brane}) \quad .
\end{align}
Also in this case, we postpone to section \ref{section Discussion} the discussion on the existence and uniqueness of solutions. It is encouraging that the pure tension solutions of \cite{Sbisa:2014gwh, deRham:2010rw, Sbisa:2014vva} and the perturbative solutions of \cite{Sbisa:2014gwh, Sbisa:2014vva}, as well as the exact solution of \cite{Minamitsuji:2008fz}, can be obtained from the system (\ref{Bulkeq Cascading})--(\ref{cod2 jceq Cascading}) (as it is straightforward to check).

Before turning to the general discussion, let's brief\mbox{}ly comment on the covariant conservation of these equations. The covariant divergence of the equation (\ref{Bulkeq Cascading}) gives the condition $\nabla_{_{\!\! B}} G_{^{A}}^{_{\,\,\, B}} = 0$ which is satisf\mbox{}ied identically, i.e.~for any choice of the bulk metric (independently if it satisf\mbox{}ies the equations of motion or not). This is what happens in GR in fact. However, this is \emph{not} the case for the equations (\ref{junctionconditionseq Cascading}) and (\ref{cod2 jceq Cascading}). Taking the covariant divergence of the equation (\ref{junctionconditionseq Cascading}), for example, one gets
\beq
\nabla_{\! b} \, \ti{K}_{a}^{\,\,\, b} - \nabla_{\! a} \, \ti{K} = 0
\eeq
where $\ti{K} = \ti{g}^{rs} \, \ti{K}_{rs}$, which does not vanish for a generic choice of the bulk metric and of the cod-1 embedding. In fact, by the Gauss-Codazzi relation (see e.g.~\cite{MisnerThorneWheeler, DoCarmoBook}) one gets
\beq
\nabla_{\! b} \, \ti{K}_{a}^{\,\,\, b} - \nabla_{\! a} \, \ti{K} = - g^{rs} \, R^{n}_{\,\,\,\, ras}
\eeq
where $R$ is the bulk Riemann tensor and $n$ indicates the normal component. Dif\mbox{}ferently from GR, the vanishing of the covariant divergence of (\ref{junctionconditionseq Cascading}) is just a \emph{consequence} of the equation of motion (\ref{junctionconditionseq Cascading}) being satisf\mbox{}ied, in other words it holds only on-shell. A similar thing happens for the equation (\ref{cod2 jceq Cascading}). Taking its covariant divergence we get
\beq
\label{cachaca}
\Msf \,\, \bar{\nabla}_{\!\! \m} \, \D = 2 \Mft \, \Big[ \bar{\nabla}_{\! \n} \, \bar{K}_{\m}^{\,\,\, \n} - \bar{\nabla}_{\! \m} \, \bar{K} \, \Big]_{\!+} \quad ,
\eeq
which is not identically satisf\mbox{}ied but is again enforced by the equation (\ref{cod2 jceq Cascading}). The fact of $\bar{\nabla}_{\! \n} \, \bar{K}_{\m}^{\,\,\, \n} - \bar{\nabla}_{\! \m} \, \bar{K}$ not being identically zero is actually welcome, because it allows the local def\mbox{}icit angle to change in space and time (otherwise the condition (\ref{cachaca}) would imply $\bar{\nabla}_{\!\! \m} \, \D = 0$, and therefore $\D$ constant).

\section{Discussion}
\label{section Discussion}

The main result of our paper is the system of equations (\ref{Bulkeq Cascading})--(\ref{cod2 jceq Cascading}), which def\mbox{}ines the \emph{nested branes realization of the 6D Cascading DGP model}. Regarding the discussion of section \ref{section Introduction},
it is apparent that the cod-1 and cod-2 branes are equipped with induced gravity terms of the correct dimensionality (respectively, 5D and 4D), as we desired.

To arrive at this system, we started from a conf\mbox{}iguration where the two branes are both thick, and assumed that a hierarchy holds between their thicknesses. Roughly speaking, we assumed that the cod-1 brane is much thinner that the cod-2 brane. This allowed to take the thin limit of the conf\mbox{}iguration, so to say, in two steps. First we considered the approximate description where the cod-1 brane is thin and the cod-2 brane turns into a thick ribbon brane. At this point we consistently introduced a 5D induced gravity term on the cod-1 brane. Then we considered the thin limit of the ribbon brane inside the thin cod-1 brane equipped with induced gravity. This limit turned out to give consistent junction conditions at the thin codimension-2 brane, and permitted to equip the latter brane with a truly 4D induced gravity term.

The systems (\ref{Bulkeq 2})--(\ref{cod2 jceq}) and (\ref{Bulkeq Cascading})--(\ref{cod2 jceq Cascading}) consistently generalize the equations of \cite{Sbisa:2014gwh} and \cite{Sbisa:2014vva} at a background-independent and fully non-perturbative level, in the following sense. Indicating with $g_{_{\! AB}}^{_{\, 0}}\,$, $\vf_{_{0}}^{_{A}}$ and $\ti{\a}_{^{0}}^{a}$ respectively the bulk metric, cod-1 embedding function and cod-2 embedding function of a pure tension conf\mbox{}iguration, we def\mbox{}ine the perturbation f\mbox{}ields
\beq
h_{_{\! AB}} = g_{_{\! AB}} - g_{_{\! AB}}^{_{\, 0}} \qquad \qquad \d \vf^{_{A}} = \vf^{_{A}} - \vf_{_{0}}^{_{A}} \qquad \qquad \d \ti{\a}^{a} = \ti{\a}^{a} - \ti{\a}_{^{0}}^{a} \quad , \label{perturbations}
\eeq
where $\d \vf^{_{A}}\,$ and $\d \ti{\a}^{a}$ are usually referred to as the bending modes of the branes.\footnote{Note that in \cite{Sbisa:2014gwh, Sbisa:2014vva} the background conf\mbox{}iguration f\mbox{}ields are indicated with an overbar, and not with a ``$0$''. We changed notation here because in the present paper the overbar is already used to denote quantities pertaining to the cod-2 brane.} Inserting the def\mbox{}initions (\ref{perturbations}) into (\ref{Bulkeq 2})--(\ref{cod2 jceq}), and imposing the cod-1 GNC, we obtain an expansion of the equations of motion in terms of powers of $h_{_{\! AB}}$ and $\d \vf^{_{A}}$ (and their derivatives). The linear part of this expansion, upon performing the 4D scalar-vector-tensor decomposition of $h_{_{\! AB}}$ and $\d \vf^{_{A}}$ and introducing the master variables, exactly reproduces the results of \cite{Sbisa:2014gwh}. The same holds, starting now from the system (\ref{Bulkeq Cascading})--(\ref{cod2 jceq Cascading}), for the master equations of \cite{Sbisa:2014vva}.

\subsection{On the physical and mathematical assumptions}

It is useful at this stage to reiterate the physical and mathematical assumptions that were made in deriving the thin limit equations.

\paragraph{Thickness hierarchy} The main physical assumption is the hierarchy $l_{2}^{\perp} \sim l_{1}$ and $l_{2}^{\shortparallel} \gg l_{1}$ between the branes thicknesses (see section \ref{The physical set-up}). This choice, which is unrelated to an eventual hierarchy between $m_5$ and $m_6\,$, def\mbox{}ines the class of thick conf\mbox{}igurations for which our procedure to derive the thin limit is valid. This does \emph{not} mean that it is impossible to def\mbox{}ine the thin limit for the thick conf\mbox{}igurations which, although belonging to the cascading branes/nested topological defects category, fall outside this condition. It may be that other classes of conf\mbox{}igurations admit a thin limit description, which may coincide or not with the one we found. A thorough assessment of the admissibility of a thin limit description for a generic thick cascading branes model is, however, outside the scope of this work. Already within the simplifying assumption of the thickness hierarchy, deriving rigorously the thin limit description is challenging.

\paragraph{Gaussian Normal Coordinates} Even before proposing the thickness hierarchy, we made the assumption that the thick cod-1 and cod-2 branes can be completely covered by the appropriate Gaussian Normal Coordinates. This technical assumption is by no means exotic, since it is implicitly used also in the pure codimension-1 case to derive the Israel junction conditions. Without it, it is even hard to def\mbox{}ine the notion of thickness of a physical brane, and it is impossible to introduce the thin source conf\mbox{}iguration as the integral of the localized stress-energy tensor, nor perform pillbox integrations. On the other hand, there are well-known physical situations where such an assumption would be ill-based, for example in the case of collapses and formation of singularities. Since the use of the GNC is not peculiar to the cascading scenarios but is equally relevant for pure codimension-1 branes, let us for a moment consider the case of a single brane.

In that case, the rationale behind this assumption is the following. A thin limit description is an ef\mbox{}fective description valid at length scales much bigger than the thickness of a brane. The assumption of GNC covering completely the brane somehow corresponds to assuming that the thickness is the only relevant characteristic scale of the physical brane. There may however exist (``extreme'') internal conf\mbox{}igurations where additional characteristic scales (such, e.g., the Schwarzschild radius of a black hole inside the brane) emerge in association to the internal dynamics, and which cannot be meaningfully described just by the integrated stress-energy tensor. From this point of view the breakdown of the GNC signals the appearance of these additional characteristic scales in the problem. In these cases, since we are dealing with a multi-scale system, the formulation of the thin limit description needs to be tailored to each specif\mbox{}ic type of conf\mbox{}igurations, and may not follow the general lines which work when the thickness is the only relevant scale of the brane. The assumption of the validity of the Gaussian Normal Coordinates therefore can be interpreted as selecting the internal conf\mbox{}igurations where a general procedure to develop the thin limit decription can be used.

In the case of cascading branes/nested topological defects, analogous considerations apply. In this vein, assuming that the branes are covered by the appropriate GNC corresponds to assuming that the thicknesses are the only relevant scales of the branes. In presence of additional characteristic length scales associated to the internal structures, the development of a thin limit description would have to pass through a discussion of the relation between all the relevant scales, and it is quite likely that even assuming the hierarchy $l_{2}^{\perp} \sim l_{1}$ and $l_{2}^{\shortparallel} \gg l_{1}$ it would not be possible to f\mbox{}ind a unique thin limit description.

Let us remark that the existence of such ``extreme'' internal conf\mbox{}igurations does not mean that the thin limit is ill-def\mbox{}ined in the sense of \cite{Geroch:1987qn}. In fact, proving the existence of the thin limit always needs some sort of regularity conditions, without which the limit never exists. The point of \cite{Geroch:1987qn} is that there exist systems where, even within the assumption of regularity of the internal source conf\mbox{}iguration, it is not possible to f\mbox{}ind a thin limit description. For example, in the case of a pure codimension-2 brane it is not possible to f\mbox{}ind a def\mbox{}inition of ``thin source conf\mbox{}iguration'' such that the latter determines uniquely the external f\mbox{}ield conf\mbox{}iguration, regardless of the regularity assumptions. It is these systems to which one refers to when saying that the thin limit is not well-def\mbox{}ined.

\paragraph{Ridges and convergence properties} In section \ref{section The thin limit of the ribbon brane} we put forward the ansatz that the thin limit conf\mbox{}igurations of our system are characterized by having a ridge at the cod-2 brane, and def\mbox{}ined the convergence properties which the sequence of embeddings and metrics have to enjoy to faithfully respect this geometrical intuition. It is important to clarify that this does not imply any assumption about the properties of the ribbon brane, but just def\mbox{}ines a working hypothesis in order to look for a thin limit description. In fact, when focusing on scales much bigger than its thickness, we can always approximate a thick ribbon conf\mbox{}iguration with a ridge conf\mbox{}iguration, and, likewise, we can always approximate the source conf\mbox{}iguration with a localised one. This is just a consequence of the ribbon having a f\mbox{}inite thickness, and has nothing to do with its internal structure. From the point of view of the physical branes, in other words, it is just a consequence of the thickness hierarchy.

What is \emph{not} in general true is that, when focusing on scales much bigger than the thickness, the equations of motion necessarily become relations which involve exclusively the geometrical quantities which characterise the ridge conf\mbox{}igurations ($\D$, $\bar{\mbf{g}}$, $\bar{\mbf{K}}$, $\bar{\mbf{G}}$) and the localized source conf\mbox{}iguration ($\bar{\mbf{T}}$). The fact that this, as we showed, indeed happens in the Cascading DGP model, strongly suggests that the thin limit is well-def\mbox{}ined. If it didn't happen it would mean that, despite we can approximate the \emph{conf\mbox{}igurations} of our system with sources perfectly localized on the crest of a ridge, following these lines it is not possible to approximate the \emph{model}, i.e.~the way source conf\mbox{}igurations and external f\mbox{}ield conf\mbox{}igurations are linked. To establish this, it is necessary to implement the idea of ``focusing on scales much bigger than the thickness of the ribbon brane'' at the level of the equations of motion. The subtleties discussed in \cite{Sbisa:2014vva} imply that implementing the idea of the system tending to a ridge conf\mbox{}iguration is very subtle, and must involve a careful choice of the convergence properties of the sequences. But again, the latter choice is related to problem of f\mbox{}inding the correct mathematical implementation of the thin limit procedure, and has nothing to do with the internal structure of the ribbon brane.

\paragraph{Cusp conf\mbox{}igurations} A word is in order about the cusp conf\mbox{}igurations. In section \ref{Ridge configurations}, along with introducing the ridge conf\mbox{}igurations, we mentioned that a conf\mbox{}iguration has a cusp when the dihedral angle is $\Upsilon = \pi \,$. The cusp conf\mbox{}igurations were not mentioned ever after. The reason for this is that from (\ref{Sofia Loren}), (\ref{The Final Countdown}) and (\ref{Oooohh}) it follows that a cusp conf\mbox{}iguration corresponds to a def\mbox{}icit angle $\abs{\D} = 2 \pi\,$. In such a case the cod-1 brane becomes degenerate, since the $\t < 0$ and $\t > 0$ sides of the cod-1 brane coincide, and develops a boundary at the cod-2 brane. Furthermore, if $\D = 2 \pi$ the bulk has vanishing volume and coincides with the cod-1 brane (it is, in fact, 5-dimensional). These conf\mbox{}igurations are therefore pathological, and cannot be considered acceptable thin conf\mbox{}igurations.

\paragraph{Lebesgue integrability} The last (technical) assumption we made is that the pillbox integral on the right in (\ref{absolute value}) exists f\mbox{}inite. The pillbox integral in question may diverge only if the embedding function oscillates unbounded an inf\mbox{}inite number of times between $\t = -\r$ and $\t = \r$, in such a way that the integrand is not Lebesgue integrable. Although it is possible to imagine conf\mbox{}igurations where this happens, we believe that such a behaviour is rather innatural for most physical systems, and we deem such conf\mbox{}igurations as unphysical.

\subsubsection{Comments}

It is worthwhile to discuss how restrictive are our assumptions, since it is clear that a thin limit description is the more useful the larger is the class of system which it can describe. From this point of view, it is remarkable that the thickness hierarchy assumption is concerned only with the overall \emph{shape} of the branes, while the only assumption we make about the internal structures and the conf\mbox{}ining mechanisms is that the GNC hold. Therefore, we believe that our thin limit equations describe a fairly rich class of systems. We note in passing that also the papers \cite{deRham:2010rw, Moyassari:2011nb}, although without stating it explicitly, work in the framework of this approximation, since their equations describe a thin cod-1 brane equipped with induced gravity inside which a regularized cod-2 brane is embedded.

It is f\mbox{}inally worth to spend some words on the role of the auxiliary vector f\mbox{}ields. It is clear that they don't enter in the thin limit systems of equations (\ref{Bulkeq 2})--(\ref{cod2 jceq}) and (\ref{Bulkeq Cascading})--(\ref{cod2 jceq Cascading}), since all the objects appearing there are def\mbox{}ined in complete generality in terms of the embedding functions and of the bulk and induced metrics. This is in fact why they are ``auxiliary'': they just help to perform the pillbox integration but play no role in the f\mbox{}inal description. The reason why they are introduced at all is that, as we mention in section \ref{The auxiliary vector fields}, the integrand of (\ref{integral N}) contains several functions which have a non-trivial behaviour and therefore contribute to the pillbox integration. For example, if we worked directly with the embedding functions, more than one component of $\hvf^{_{L} \p}_{^{\! (\r)}}$ would have a non-trivial behaviour.

This is somehow an artifact of the gauge freedom, since the GNC conditions (\ref{c1 induced metric GNC trivial k app}) imply that the $\hvf^{_{L} \p}_{^{\! (\r)}}$ are not independent, and we may express the divergent part of the integral in terms of one of them, say $\hvf^{z \, \p}_{^{\! (\r)}}$. This way to solve the constraints (\ref{c1 induced metric GNC trivial k app}) would however be geometrically contrived, although analytically fair, and as a consequence the integrand of (\ref{integral N}) would have a very complicated expression. This would make it very dif\mbox{}f\mbox{}icult to perform the pillbox integration, not to speak of giving a geometrical interpretation to the result. The interesting message is that, exploiting the gauge freedom, it is possible to express the integrand of (\ref{integral N}) in such a way that only one function embodies the divergence in the geometry. The introduction of the auxiliary vector f\mbox{}ields makes it possible to embody the divergence into a quantity, the slope function, which has a clear geometrical meaning. This allows to perform the pillbox integration easily (see (\ref{integral S})), to express the result in a coordinate independent fashion (cfr.~(\ref{Ana Luiza})) and to give to it a clear geometrical interpretation (cfr.~(\ref{Glauzia})). The latter is of course independent of the auxiliary vector f\mbox{}ields, which disappear from the f\mbox{}inal expression. In some sense, they only serve to solve the constraints (\ref{c1 induced metric GNC trivial k app}) in a geometrically clever way. See also the related discussion in section \ref{De Sitter comments}.

\subsection{On the thin limit and gravity regularization}

Arguably, the most important point to discuss is whether the system (\ref{Bulkeq Cascading})--(\ref{cod2 jceq Cascading}), with appropriate boundary conditions, possesses a unique solution for every choice of the (covariantly conserved) cod-2 energy-momentum tensor. As we commented above, this would imply that the whole thin limit procedure is well-def\mbox{}ined, and that indeed the thickness hierarchy allows to derive a thin limit description of the 6D Cascading DGP model.

Let's consider f\mbox{}irst the system (\ref{Bulkeq 2})--(\ref{cod2 jceq}) (eventually imposing the $\mbbZ_2$ ref\mbox{}lection symmetry with respect to the cod-2 brane). The assessment of the existence and uniqueness of solutions is particularly dif\mbox{}f\mbox{}icult, since the embedding of the cod-1 brane is not smooth at $\mcalC_2$ and usually the existence theorems for systems of PDE rely on choosing a coordinate system where the cod-1 brane is straight (still having a smooth bulk metric). Indeed we are not aware of a general theorem which says a f\mbox{}inal word on the matter, although it may exist. Nonetheless, in analogy with \cite{Sbisa:2014gwh}, we present the following heuristic argument. Note that (\ref{Bulkeq 2}) and (\ref{junctionconditionseq 2}) are exactly the equations of a (source-free) cod-1 DGP model, the only dif\mbox{}ference being that they hold everywhere but at the cod-2 brane. It is usually assumed that the equations for the cod-1 DGP model are well-posed, assuming that the f\mbox{}ields are smooth and imposing boundary conditions where at inf\mbox{}inity in the extra dimensions the gravitational f\mbox{}ield decays and the gravitational waves are purely outgoing. In our case, the cod-1 embedding $\hvf(\chd, \t)$ is continuous but not derivable at $\t = 0 \,$, so (\ref{Bulkeq 2}) and (\ref{junctionconditionseq 2}) do not single out a unique solution any more, because there is freedom in choosing how to patch the solutions at $\t = 0^+$ and $\t = 0^-$. Indeed, one condition is needed to f\mbox{}ix the jump of the normal derivative $\detau \hvf$ across $\t = 0 \,$. We argue that this condition is provided exactly by the cod-2 junction conditions (\ref{cod2 jceq}), so that they constitute with (\ref{Bulkeq 2}) and (\ref{junctionconditionseq 2}) a well def\mbox{}ined system of PDE with the above mentioned boundary conditions.

For what concerns the system (\ref{Bulkeq Cascading})--(\ref{cod2 jceq Cascading}), we expect that the inclusion of the cod-2 induced gravity term does not change qualitatively the discussion above. With the proviso that our heuristic argument needs to be conf\mbox{}irmed, we continue our discussion exploring the consequences of its validity.

\subsubsection{Gravity regularization}

One of the most fashinating features of the Cascading DGP model is the claimed ability of induced gravity on the cod-1 brane to ``regularize gravity'' \cite{deRham:2007rw, deRham:2007xp}. This feature is seen most clearly by neglecting the cod-2 induced gravity term, so we temporarily turn to the system (\ref{Bulkeq 2})--(\ref{cod2 jceq}). It is well-known that, if we localize on a pure cod-2 brane a source term dif\mbox{}ferent from pure tension, the external f\mbox{}ield conf\mbox{}iguration on the brane diverges when we send the brane thickness to zero \cite{Cline:2003ak, Vinet:2004bk} (unless we allow for Gauss-Bonnet terms in the bulk action \cite{Bostock:2003cv}). This behaviour is reproduced by our equations since, in the absence of the cod-1 induced gravity term ($\Mft = 0$), the system (\ref{Bulkeq 2})--(\ref{cod2 jceq}) admits solutions only for a pure tension source conf\mbox{}iguration. With any other source, a ridge conf\mbox{}iguration cannot describe our system (if $\Mft = 0$), because either the cod-1 bending diverges at the cod-2 brane, either the bulk metric becomes singular there, or both. If instead we turn on the cod-1 induced gravity term, it is clear from (\ref{cod2 jceq}) that a wider class of source conf\mbox{}igurations is allowed, since the term $\Mft \, \big[ \bar{\mbf{K}} - \bar{\mbf{g}} \,\, tr \big( \bar{\mbf{K}} \big) \big]_{\!\pm}$ is not necessarily proportional to the cod-2 induced metric (see also the discussion in \cite{Sbisa:2014gwh}). For these source conf\mbox{}igurations gravity is indeed regularized, since gravity on the cod-2 brane is always f\mbox{}inite for a ridge conf\mbox{}iguration. The question if the gravity regularization mechanism is ef\mbox{}fective for \emph{any} (covariantly conserved) source conf\mbox{}iguration is more dif\mbox{}f\mbox{}icult to answer with full rigour, since it is equivalent to ask if the system (\ref{Bulkeq 2})--(\ref{cod2 jceq}) admits solutions for any form of $\bar{\mbf{T}}$. We refer to the discussion above and to our heuristic argument in favour of a positive answer.

\subsection{Conclusions and further comments}

Our results, in the form of the system (\ref{Bulkeq Cascading})--(\ref{cod2 jceq Cascading}), constitute a solid formulation of the (nested branes realization of the) 6D Cascading DGP model. The relevance of our results is best seen by comparing with the situation in \cite{deRham:2010rw, Moyassari:2011nb} where, each time a dif\mbox{}ferent solution is looked for, the pillbox integration is performed anew. In particular the thin limit solution is not sought directly, since to obtain the equations of motion for the thin conf\mbox{}iguration a putative thick ribbon conf\mbox{}iguration is constructed f\mbox{}irst, and then the thin limit is performed. With our formulation all these operations become superf\mbox{}luous, since the pillbox integration across the codimension-2 brane has been performed in complete generality, and therefore once and for all. As a consequence, one can go straight for the thin solution. In particular, it becomes trivial to write the thin limit equations for perturbations around a chosen background, without the need of discussing the singular behaviour of the perturbations and its relation with that of the background solution.

As we mentioned above, the papers \cite{deRham:2010rw, Moyassari:2011nb} implicity consider the same realization of the Cascading DGP model we considered here. Therefore it should be possible to compare their thin limit equations with our f\mbox{}inal equations, and check if they agree. Regarding \cite{deRham:2010rw}, this comparison has already been performed in \cite{Sbisa:2014vva}, where a dif\mbox{}ference between the two results have been uncovered. It was proved in \cite{Sbisa:2014vva} that indeed the pillbox integration in \cite{deRham:2010rw} is done incorrectly, and the reason has been mainly imputed to the poor faithfulness of the brane-based approach to the geometry of the system. The fact that the system (\ref{Bulkeq Cascading})--(\ref{cod2 jceq Cascading}) reproduces the equations of \cite{Sbisa:2014vva} on one hand strengthens the case for the analysis of \cite{Sbisa:2014vva} being correct against that of \cite{deRham:2010rw}, and on the other hand provides a conf\mbox{}irmation of the results of the present paper. Regarding \cite{Moyassari:2011nb}, a thorough discussion will be given in a forthcoming paper, where cosmological solutions are studied solving the equations (\ref{Bulkeq Cascading})--(\ref{cod2 jceq Cascading}).

The necessity of rigour in deriving the codimension-2 junction conditions, mandatory for branes of codimension greater than one, stole space from the discussion of other aspects, such as exemplifying how the junction conditions may be used in practice. To partially amend for this, we show in appendix \ref{appendix de Sitter} how the de Sitter solutions on the cod-2 brane, f\mbox{}irst found in \cite{Minamitsuji:2008fz}, can be straightforwardly obtained from the thin limit equations (\ref{Bulkeq Cascading})--(\ref{cod2 jceq Cascading}).

Let us f\mbox{}inally mention that cod-2 junction conditions were already studied and derived in \cite{Dyer:2009yg} for a system reminiscent of the Cascading DGP model. The system under consideration in that case is a scalar galileon f\mbox{}ield living in a f\mbox{}lat bulk with a ridge-like boundary. In light of the connection between galileons and brane-bending modes \cite{Luty:2003vm, Nicolis:2004qq}, it would be interesting to explore in some detail the relation between the results of \cite{Dyer:2009yg} and ours.

\acknowledgments
We thank CNPq (Brazil) and FAPES (Brazil) for partial f\mbox{}inancial support.

\appendix

\section{Characteristic angles of a ridge conf\mbox{}iguration}
\label{appendix ridge}

We introduce here the notions of dihedral angle and of oriental dihedral angle of a ridge conf\mbox{}iguration. We furthermore provide an interpretation of the latter as an opening angle and as a def\mbox{}icit angle.

\subsection{The dihedral angles}

Let's consider a ridge conf\mbox{}iguration constituted by two f\mbox{}lat semi-hyperplanes $\emph{1}$ and $\emph{2}$ in a 6D Euclidean bulk. Taken a point on the crest of the ridge, i.e.~the common boundary of the semi-hyperplanes, let's call $\mcal{E}$ the (2D) subspace of the tangent space which is orthogonal to the crest. The semi-hyperplanes divide the bulk into two connected regions, which are identif\mbox{}ied by the choice of the vectors $\mbf{n}_{_{1}}, \mbf{n}_{_{2}}$ normal to the semi-hyperplanes (see f\mbox{}igure \ref{fig ridge}). By our conventions, to identify one bulk region we have to choose $\mbf{n}_{_{1}}, \mbf{n}_{_{2}}$ such that they point into the chosen region, so there are only two possible choices for the couple $(\mbf{n}_{_{1}}, \mbf{n}_{_{2}})$, one opposite to the other. Each of the two choices def\mbox{}ines an \emph{orientation} for the ridge, and we call \emph{oriented ridge} a ridge conf\mbox{}iguration for which a choice for $\mbf{n}_{_{1}}, \mbf{n}_{_{2}}$ has been made. To characterize the ridge conf\mbox{}igurations, we def\mbox{}ine the \emph{dihedral angle} $\Upsilon$ of the ridge as the the convex angle (that is, positive and unoriented) between the normal vectors of the two hyperplanes
\beq
\Upsilon(\chd) = \arccos \big\langle \mbf{n}_{_{1}}, \mbf{n}_{_{2}} \big\rangle \quad ,
\eeq
where $\langle \phantom{i}, \phantom{i} \rangle$ denotes the scalar product.\footnote{By consistency, the $\arccos$ is def\mbox{}ined on $[-1, 1] \to [0, \pi] \,$.} The dihedral angle completely characterizes a (unoriented) ridge in the sense of Euclidean geometry, meaning that geometrical objects equivalent up to isometries are identif\mbox{}ied. In other words, the dihedral angle completely characterizes the conf\mbox{}iguration of the two semi-hyperplanes. On the other hand, it is clear that the value of $\Upsilon$ does not carry any information about the orientation of the ridge, since the geometric angle between $\mbf{n}_{_{1}}$ and $\mbf{n}_{_{2}}$ is the same of the one between $-\mbf{n}_{_{1}}$ and $-\mbf{n}_{_{2}} \,$. It follows that, to characterize geometrically the oriented ridge conf\mbox{}igurations, we have to introduce another quantity.

\begin{figure}[t!]
\centering
\begin{tikzpicture}[scale=0.8,>=stealth]
\draw (0,0) -- (5,0) -- (11,1) node[below]{$\emph{1}$} -- (6,1) -- cycle;
\draw (0,0) -- (2,3) -- (8,4) node[right]{$\emph{2}$} -- (6,1) --cycle;
\draw[very thick,->] (5.5,0.5) -- (5.5,1.5) node[right,color=black] {$\mathbf{n}_{_{1}}$};
\draw[very thick,color=red,dashed,->] (5.5,0.5) -- (5.5,-0.5) node[right,color=red] {$-\mathbf{n}_{_{1}}$};
\draw[very thick,->] (4,2) -- +(-35:1) node[above,color=black] {$\mathbf{n}_{_{2}}$};
\draw[very thick,color=red,dashed,->] (4,2) -- +(145:1) node[below,color=red] {$-\mathbf{n}_{_{2}}$};
\draw[thin,loosely dashed] (0,0) -- (-5,0) -- (1,1)  -- (6,1) -- cycle;
\draw[thin,loosely dashed] (0,0) -- (-2,-3) -- (4,-2) -- (6,1) --cycle;
\end{tikzpicture}
\caption{A ridge conf\mbox{}iguration. The dashed lines are the prolongation of the semi-hyperplanes.}
\label{fig ridge}
\end{figure}

To achieve this, let's note that a geometrical way to distinguish the two bulk regions is given by recognizing which region contains the prolongation of the two semi-hyperplanes, and which one doesn't. Let's call $\mbfscn_{_{1}},\mbfscn_{_{2}}$ a choice of unit vectors $\in \mcalE$ which are tangent to the semi-hyperplanes, and such that the orientations of the ordered couples $(\mbf{n}_{_{1}}, \mbfscn_{_{1}})$ and $(\mbf{n}_{_{2}}, \mbfscn_{_{2}})$ coincide. As we show in f\mbox{}igures \ref{fig orientation and rotation 1} and \ref{fig orientation and rotation 2}, the fact that a region contains (respectively, doesn't contain) the prolongation of the semi-hyperplanes is related in a one-to-one way to the orientation of the ordered couple $(\mbf{n}_{_{1}}, \mbf{n}_{_{2}})$ being discordant (respectively, concordant) with the orientation of $(\mbfscn_{_{1}},\mbf{n}_{_{1}})$. It is easy to see that this characterization is independent of the choice of $(\mbfscn_{_{1}},\mbfscn_{_{2}})$, or equivalently of which hyperplane we call ``$\!\emph{1}\,$'' and which we call ``$\!\emph{2}\,$''. Calling $\bsl_{_{1}}$, $\bsl_{_{2}}$, $\bsm_{_{1}}$ and $\bsm_{_{2}}$ the 1-forms respectively dual to $\mbfscn_{_{1}}$, $\mbfscn_{_{2}}$, $\mbf{n}_{_{1}}$ and $\mbf{n}_{_{2}}$, the orientation of $(\mbfscn_{_{1}},\mbf{n}_{_{1}})$ is conveniently expressed by the 2-form $\bsl_{_{1}} \wedge \, \bsm_{_{1}}$, meaning that an ordered couple $(\mbf{a}, \mbf{b})$ of vectors $\in \mcalE$ has concordant (respectively, discordant) orientation with $(\mbfscn_{_{1}}, \mbf{n}_{_{1}})$ if 
\beq
(\bsl_{_{1}} \wedge \, \bsm_{_{1}}) (\mbf{a}, \mbf{b}) > 0
\eeq
(respectively, $< 0$).\footnote{Since the orientations of $(\mbf{n}_{_{1}}, \mbfscn_{_{1}})$ and $(\mbf{n}_{_{2}}, \mbfscn_{_{2}})$ coincide, we have $\bsl_{_{1}} \wedge \, \bsm_{_{1}} = \bsl_{_{2}} \wedge \, \bsm_{_{2}} \,$.} We can then def\mbox{}ine a quantity, the \emph{oriented dihedral angle}
\beq
\O = \sgn \big[ (\bsl_{_{1}} \wedge \, \bsm_{_{1}}) (\mbf{n}_{_{1}}, \mbf{n}_{_{2}}) \big] \, \arccos \langle \mbf{n}_{_{1}}, \mbf{n}_{_{2}} \rangle \quad ,
\eeq
which encodes the value of the dihedral angle \emph{and} at the same time distinguishes between the orientations of the ridge. Indeed, two oriented ridge conf\mbox{}igurations which have the same dihedral angle but opposite orientations, are characterized by oriented dihedral angles whose absolute values coincide but have opposite signs. Note furthermore that by the orthonormality of $\mbfscn_{_{1}}$ and $\mbf{n}_{_{1}}$ we have
\beq
\label{Airu}
(\bsl_{_{1}} \wedge \, \bsm_{_{1}}) (\mbf{n}_{_{1}}, \mbf{n}_{_{2}}) = - \bsl_{_{1}} (\mbf{n}_{_{2}}) = - \big\langle \mbfscn_{_{1}}, \mbf{n}_{_{2}} \big\rangle \quad ,
\eeq
so we get the alternative expression
\beq
\label{Nachi}
\O = - \sgn \big\langle \mbfscn_{_{1}}, \mbf{n}_{_{2}} \big\rangle \, \arccos \langle \mbf{n}_{_{1}}, \mbf{n}_{_{2}} \rangle \quad .
\eeq
\begin{figure}[t!]
\centering
\begin{tikzpicture}[scale=0.8]
\draw[thin] (-4,0) -- (0,0);
\draw[loosely dashed,color=black!50] (0,0) -- (4,0);
\draw[thin] (0,0) -- (3.464,2);
\draw[loosely dashed,color=black!50] (0,0) -- (-3.464,-2);
\draw[->,thick] (-3,0) -- +(0,1.5) node[left,color=black] {$\mathbf{n}_{_{1}}$};
\draw[->,thick] (-3,0) -- +(1.5,0) node[below,color=black] {$\mbfscn_{_{1}}$};
\draw[->,thick] (1.299,0.75) -- +(120:1.5) node[left,color=black] {$\mathbf{n}_{_{2}}$};
\draw[->,thick] (1.299,0.75) -- +(30:1.5) node[anchor=north west,color=black] {$\mbfscn_{_{2}}$};
\end{tikzpicture}
\hspace{14mm}
\begin{tikzpicture}[scale=0.85]
\draw[->,thick] (0,0) -- +(0,3) node[left,color=black] {$\mathbf{n}_{_{1}}$};
\draw[->,thick] (0,0) -- +(3,0) node[below,color=black] {$\mbfscn_{_{1}}$};
\draw[->,>=stealth] (1,0) arc (0:90:1);
\draw[->,thick] (0,0) -- +(120:3) node[left,color=black] {$\mathbf{n}_{_{2}}$};
\draw[->,>=stealth] (0,2) arc (90:120:2);
\end{tikzpicture}
\caption{Concordant orientation of $(\mbf{n}_{_{1}}, \mbf{n}_{_{2}})$ and $(\mbfscn_{_{1}},\mbf{n}_{_{1}})$.}
\label{fig orientation and rotation 1}
\end{figure}
\begin{figure}[t]
\centering
\begin{tikzpicture}[scale=0.8]
\draw[thin] (-4,0) -- (0,0);
\draw[loosely dashed,color=black!50] (0,0) -- (4,0);
\draw[thin] (0,0) -- (3.464,-2);
\draw[loosely dashed,color=black!50] (0,0) -- (-3.464,2);
\draw[->,thick] (-3,0) -- +(0,1.5) node[left,color=black] {$\mathbf{n}_{_{1}}$};
\draw[->,thick] (-3,0) -- +(1.5,0) node[below,color=black] {$\mbfscn_{_{1}}$};
\draw[->,thick] (1.299,-0.75) -- +(60:1.5) node[left,color=black] {$\mathbf{n}_{_{2}}$};
\draw[->,thick] (1.299,-0.75) -- +(-30:1.5) node[anchor=north,color=black] {$\mbfscn_{_{2}}$};
\end{tikzpicture}
\hspace{14mm}
\begin{tikzpicture}[scale=0.85]
\draw[->,thick] (0,0) -- +(0,3) node[left,color=black] {$\mathbf{n}_{_{1}}$};
\draw[->,thick] (0,0) -- +(3,0) node[below,color=black] {$\mbfscn_{_{1}}$};
\draw[->,>=stealth] (1,0) arc (0:90:1);
\draw[->,thick] (0,0) -- +(60:3) node[left,color=black] {$\mathbf{n}_{_{2}}$};
\draw[->,>=stealth] (0,2) arc (90:60:2);
\end{tikzpicture}
\caption{Discordant orientation of $(\mbf{n}_{_{1}}, \mbf{n}_{_{2}})$ and $(\mbfscn_{_{1}},\mbf{n}_{_{1}})$.}
\label{fig orientation and rotation 2}
\end{figure}

\subsection{The closing angle and the def\mbox{}icit angle}

We can give two geometrical interpretations to the oriented dihedral angle, which are more related to the bulk region individuated by the oriented cod-1 brane than to the brane conf\mbox{}iguration itself.

The f\mbox{}irst interpretation emerges by comparing an oriented ridge conf\mbox{}iguration with the ``f\mbox{}lat'' conf\mbox{}iguration $\O = 0 \,$. Let's consider an oriented ridge with $\O > 0\,$. In this case, $\mbf{n}$ points into the bulk region $\mcalM_{_{+}}$ which do not contain the prolongation of the semi-hyperplanes, and therefore $\mcalM_{_{+}}$ is properly contained in the bulk region associated to a f\mbox{}lat ridge. Furthermore, the bigger $\O$ the more the bulk region $\mcalM_{_{+}}$ ``closes in'', and tends to a semi-hyperplane in the limit $\O \to \pi\,$. We can pictorially say that a positive $\O$ measures how much $\mcalM_{_{+}}$ is closed with respect to the case of a f\mbox{}lat ridge conf\mbox{}iguration. In the case $\O < 0\,$, instead, $\mcalM_{_{+}}$ contains the prolongation of the semi-hyperplanes, and so the bulk region associated to a f\mbox{}lat ridge is properly contained in $\mcalM_{_{+}}$. Furthermore, the bigger $\abs{\O}$ the more $\mcalM_{_{+}}$ ``opens up'', and tends to the full 6D space in the limit $\O \to -\pi\,$. Analogously, we can say that a negative $\O$ measures how much $\mcalM_{_{+}}$ is open with respect to the case of a f\mbox{}lat ridge conf\mbox{}iguration. Adhering to the standard convention that a negative closing angle is equivalent to an opening angle, we can then interpret $\O$ as the \emph{closing angle} of the oriented ridge.

The second interpretation is uncovered by constructing a manifold $\mcalM$ which is $\mbbZ_2$ symmetric with respect to the semi-hyperplanes, by copying and pasting the bulk region $\mcalM_{_{+}}$. As is well-known, the resulting manifold $\mcalM$ is the product of the Euclidean 4D space with a 2D Riemannian manifold which is isomorphic to a cone. It can be shown that in the case $\O > 0$ the cone is characterized by a \emph{def\mbox{}icit angle} equal to $2 \, \O$, while in the case $\O < 0$ is characterized by an \emph{excess angle} equal to $2 \, \abs{\O}$ (see \cite{Sbisa:2014dwa} for an extensive analysis). Def\mbox{}ining (as is standard) the def\mbox{}icit angle $\D$ of the cone such that a negative def\mbox{}icit angle corresponds to a (positive) excess angle, we get
\beq
\label{Helena}
\D = 2 \, \O \quad .
\eeq
Therefore we get a second geometrical interpretation of the oriented dihedral angle as half of the def\mbox{}icit angle of the extra-dimensional 2D manifold. The relevance of such an interpretation of course resides in the fact that, in our set-up, the copying and pasting procedure indeed gives the physical 6D manifold.

\section{Convergence properties}
\label{appendix convergence}

We refer to \cite{Rudin:FunctAnalysis} for the details on the mathematical aspects of this and the following appendices. Let $\Th \subset \mbbR^d$ be an open set such that its closure $\oTh$ is compact ($\Th$ is then called a compact closure open set, in brief c.c.o.s.). The space $C^{n}( \oTh )$ of continuous functions $\oTh \to \mbbR$ which are of class $C^n$ in $\Th$, and whose partial derivatives of order $\leq n$ are extended with continuity to $\oTh$, is a Banach space when equipped with the norm
\beq
\label{Cn norm}
\norm{f}_{C^{n}(\oTh)} = \norm{f}_{C^{0}(\oTh)} + \sum_{j = 1}^{n} \sum_{\abs{\a} = j} \norm{\de^\a f}_{C^{0}(\oTh)}
\eeq
where $\norm{f}_{C^{0}(\oTh)} \equiv \sup_{p \in \oTh} \,\, \abs{f(p)} \,$. This norm is also called the $\sup$ norm. On the other hand, the space $Lip (\oTh)$ of Lipschitz functions is the set of functions $f : \oTh \to \mbbR$ such that 
\beq
\abs{f}_{Lip (\oTh)} = \sup_{x,y \, \in \, \oTh, \, x \neq y} \frac{\abs{f(x) - f (y)}}{\abs{x - y}}
\eeq
is f\mbox{}inite, and is a Banach space as well when equipped with the norm $\norm{f}_{Lip (\oTh)} \equiv \norm{f}_{C^{0}(\oTh)} + \abs{f}_{Lip (\oTh)} \,$. These def\mbox{}initions are generalized to functions $\oTh \to \mbbR^{m}$, $f = (f_1 , \ldots , f_m)$ by def\mbox{}ining $\norm{f} = \sum_{i = 1}^{m} \, \norm{f_i} \,$. Regarding the discussion in section \ref{Convergence properties of the families}, it is straightforward to see that the convergence in the norm (\ref{Hitomi}) implies convergence in the following norm
\beq
\norm{f}_{} = \norm{f}_{C^{0}(\oW_{\!\! q})} + \sum_{j = 1}^{n} \sum_{\abs{\a} = j} \norm{\de^{\a}_{\chd} f}_{C^{0}(\oW_{\!\! q})} \label{Anri} \quad ,
\eeq
since $\abs{\phantom{f}}_{Lip}$ is always non-negative.

In our analysis, we work with families of functions labelled by a continuous parameter $\r \in I_{_{+}} \,$, where $I_{_{+}} = ( 0, l \, ]$ (or $I_{_{+}} = ( 0, d \, ]$, with $d$ the positive number def\mbox{}ined in appendix \ref{appendix auxiliary fields}, when the auxiliary vector f\mbox{}ields and/or the slope function are concerned). The thin limit of the ribbon brane is realized by the limit $\r \to 0^+$. Since in the theory of Banach spaces often theorems are expressed in terms of sequences of functions, it is worth clarifying the equivalence between the two approaches. The link between the continuous and the numerable formulation is given by the well-known
\begin{proposition}
\label{prop 1}
Let $(X, d)$ be a metric space, and let us consider a function $F : I_{_{+}} \to X$. Then $F(x) \to \bar{F}$ when $x \to 0^+$ if and only if, for every sequence $\{ x_{n} \}_n$ contained in $I_{_{+}}$ such that $x_{n} \to 0$, we have $F_{n} \equiv F(x_{n}) \to \bar{F} \,$.
\end{proposition}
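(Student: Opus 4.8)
The plan is to prove Proposition \ref{prop 1}, which is simply the sequential characterization of a one-sided limit in a metric space, by establishing the two implications separately. Throughout, recall that $I_{_{+}} = (0, l\,]$ (or $(0, d\,]$), so that membership $x_n \in I_{_{+}}$ automatically forces $x_n > 0$, which is the only feature of $I_{_{+}}$ that we shall use.

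For the forward implication, I would assume $F(x) \to \bar{F}$ as $x \to 0^+$ and unwind the $\vep$--$\d$ definition of the one-sided limit: for every $\vep > 0$ there is $\d > 0$ such that $0 < x < \d$ implies $d\big( F(x), \bar{F} \big) < \vep$. Now take an arbitrary sequence $\{ x_n \}_n \subset I_{_{+}}$ with $x_n \to 0$. Given $\vep > 0$, pick the corresponding $\d > 0$; since $x_n \to 0$ there is $N$ with $0 < x_n < \d$ for all $n > N$ (positivity being automatic because $x_n \in I_{_{+}}$), and hence $d\big( F_n , \bar{F} \big) < \vep$ for all $n > N$. This shows $F_n \to \bar{F}$, and the argument is completely routine.

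For the reverse implication I would argue by contraposition: suppose $F(x) \not\to \bar{F}$ as $x \to 0^+$. Then there is a fixed $\vep_0 > 0$ such that, for every $\d > 0$, the punctured interval $(0, \d)$ contains some point $x$ with $d\big( F(x), \bar{F} \big) \geq \vep_0$. Taking $\d = 1/n$ for each $n$ large enough that $1/n \leq l$, choose such a point and call it $x_n$; then $\{ x_n \}_n \subset I_{_{+}}$, one has $0 < x_n < 1/n \to 0$, yet $d\big( F_n, \bar{F} \big) \geq \vep_0$ for all $n$, so $F_n \not\to \bar{F}$. This contradicts the hypothesis that $F_n \to \bar{F}$ for every sequence $x_n \to 0$ in $I_{_{+}}$, and the contrapositive is established.

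The only point deserving a comment — and it is hardly an obstacle — is the appeal to countable choice in selecting the sequence $\{ x_n \}$ in the reverse direction, which is entirely standard in this metric-space setting. Combining the two implications yields the stated equivalence, completing the proof.
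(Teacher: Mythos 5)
Your proof is correct: both implications are the standard sequential characterization of a one-sided limit in a metric space, with the routine $\ep$--$\d$ unwinding in one direction and the construction of a witnessing sequence via $\d = 1/n$ in the contrapositive direction. The paper itself states this proposition as ``well-known'' and gives no proof at all, so there is nothing to compare against; your argument fills that gap correctly and in the expected way.
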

\noi from which it follows the characterization of Banach spaces with continuous families of functions
\begin{corollary}
Let $(B, \norm{\phantom{f}})$ be a Banach space, and let $F : I_{_{+}} \to B \,$. Then $F$ has a limit in $B$ when $x \to 0^+$ if and only if it satisf\mbox{}ies the Cauchy condition in $0^+$, i.e.~for every $\ep > 0$, there exists a $\d > 0$ such that $0 < x, x^{\p} < \d$ implies $d_{_{B}}\big(F(x), F(x^{\p})\big) = \norm{F(x) - F(x^{\p})} < \ep \,$.
\end{corollary}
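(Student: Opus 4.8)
The plan is to deduce the corollary from Proposition~\ref{prop 1} together with the completeness of $B$, following the classical argument for the Cauchy criterion of a function of a real variable, with $\abs{\cdot}$ replaced throughout by $\norm{\phantom{f}}$.

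First I would dispatch the easy implication. If $F(x) \to \bar{F}$ as $x \to 0^+$, then given $\ep > 0$ there is $\d > 0$ with $\norm{F(x) - \bar{F}} < \ep/2$ whenever $0 < x < \d$; for $0 < x, x^{\p} < \d$ the triangle inequality gives $\norm{F(x) - F(x^{\p})} \leq \norm{F(x) - \bar{F}} + \norm{\bar{F} - F(x^{\p})} < \ep$, which is exactly the Cauchy condition at $0^+$.

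For the converse I would assume the Cauchy condition holds at $0^+$ and pick an arbitrary sequence $\{ x_n \}_n \subset I_{_{+}}$ with $x_n \to 0$. The claim is that $\{ F(x_n) \}_n$ is a Cauchy sequence in $B$: given $\ep > 0$, take the corresponding $\d$ from the Cauchy condition; since $x_n \to 0$ there is $N$ with $0 < x_n < \d$ for all $n \geq N$, hence $\norm{F(x_n) - F(x_m)} < \ep$ for $n, m \geq N$. By completeness of $B$ the sequence $F(x_n)$ converges to some $\bar{F} \in B$. The remaining point is that $\bar{F}$ does not depend on the chosen sequence: if $\{ y_n \}_n \subset I_{_{+}}$ is another sequence with $y_n \to 0$ and $F(y_n) \to \bar{F}^{\p}$, then the interleaved sequence $x_1, y_1, x_2, y_2, \ldots$ also tends to $0$, so by the same reasoning $F$ evaluated along it is Cauchy, hence convergent; since it admits subsequences converging to $\bar{F}$ and to $\bar{F}^{\p}$, we conclude $\bar{F} = \bar{F}^{\p}$. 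Thus there is a single $\bar{F} \in B$ such that $F(x_n) \to \bar{F}$ for every sequence $x_n \to 0$ in $I_{_{+}}$, and Proposition~\ref{prop 1} (applied with $(X, d) = (B, \norm{\phantom{f}})$) then yields $F(x) \to \bar{F}$ as $x \to 0^+$.

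I expect the only genuinely non-routine step to be the sequence-independence of the limit $\bar{F}$; the interleaving trick handles it cleanly, and everywhere else the argument is a direct transcription of the scalar-valued Cauchy criterion, with the completeness of $\mbbR$ replaced by the completeness of the Banach space $B$ and the sequential characterization of limits supplied by Proposition~\ref{prop 1}.
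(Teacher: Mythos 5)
Your proof is correct and follows exactly the route the paper intends: the corollary is stated there without a written proof, as a consequence of Proposition 1 together with completeness, and your argument (triangle inequality for the easy direction; Cauchyness of $F(x_n)$ along any sequence $x_n \to 0$, completeness of $B$, and the interleaving trick to get sequence-independence before invoking Proposition 1) is precisely the standard filling-in of that outline. No gaps.
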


\noi For simplicity, we say indif\mbox{}ferently that a function $F :I_{_{+}} \to B$ has a limit for $x \to 0^+$ or that it converges for $x \to 0^+$. In the rest of the paper the r\^ole of $F(x)$ is played by the families $\hvf_{^{(\r)}}^{_{A}}$, $g^{_{(\r)}}_{_{AB}}$ and so on, while the r\^ole of $x$ is played by the continuous parameter $\r \,$. Below we imlicitly assume $\r \in I_{_{+}}$.

In section \ref{Convergence properties} we def\mbox{}ine the convergence properties of the bulk metric and of the cod-1 embedding function. However, the induced metric, the curvature tensors and the auxiliary vector f\mbox{}ields are built from products, compositions and reciprocals of (partial derivatives of) these quantities. In light of the analysis of the appendices \ref{appendix ind metr and ext curv} and \ref{appendix auxiliary fields}, we mention the following propositions which are easy to prove.

\begin{proposition}
\label{proposition 1}
Let $\Th \subset \mbbR^m$ be a c.c.o.s., and let $\{ f_{_{\! (\r)}} \}_{_{\! \r}}, \{ h_{_{(\r)}} \}_{_{\! \r}} : \oTh \to \mbbR$ be two families of continuous functions. Suppose that $f_{_{\! (\r)}} \to f$ and $h_{_{(\r)}} \to h$ in the norm $\norm{\phantom{f}}_{C^{0}(\oTh)}$ when $\r \to 0^+$. Then the family $\{ f_{_{\! (\r)}} h_{_{(\r)}} \}_{_{\! \r}}$ converges to $f h$ in the norm $\norm{\phantom{f}}_{C^{0}(\oTh)}$.
\end{proposition}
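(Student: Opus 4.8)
The plan is to deduce this from the elementary continuity of multiplication on $C^{0}(\oTh)$ by the standard ``add and subtract'' estimate; the only point requiring a moment's care is to extract, from the convergence hypothesis, a uniform bound on one of the two families as $\r \to 0^{+}$. First I would record that the limit functions lie in $C^{0}(\oTh)$: since $\oTh$ is compact and each $f_{(\r)}$, $h_{(\r)}$ is continuous, uniform convergence forces $f$ and $h$ to be continuous, hence bounded, on $\oTh$, so that $\norm{f}_{C^{0}(\oTh)}$ and $\norm{h}_{C^{0}(\oTh)}$ are finite. (Equivalently, one may invoke that $C^{0}(\oTh)$ with the $\sup$ norm is a Banach space, so the $\norm{\phantom{f}}_{C^{0}(\oTh)}$-limit of a convergent family belongs to it.)

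Next I would obtain the uniform bound. By hypothesis there is a $\bar{\r} > 0$ such that $\norm{f_{(\r)} - f}_{C^{0}(\oTh)} < 1$ for all $0 < \r < \bar{\r}$, and hence $\norm{f_{(\r)}}_{C^{0}(\oTh)} \leq \norm{f}_{C^{0}(\oTh)} + 1$ for all such $\r$; write $M$ for the right-hand side of this bound. Then, for $0 < \r < \bar{\r}$ and every $p \in \oTh$, the pointwise inequality
\[
\abs{f_{(\r)}(p)\,h_{(\r)}(p) - f(p)\,h(p)} \leq \abs{f_{(\r)}(p)}\,\abs{h_{(\r)}(p) - h(p)} + \abs{h(p)}\,\abs{f_{(\r)}(p) - f(p)}
\]
holds, and taking the supremum over $p \in \oTh$ gives
\[
\norm{f_{(\r)}\,h_{(\r)} - f\,h}_{C^{0}(\oTh)} \leq M\,\norm{h_{(\r)} - h}_{C^{0}(\oTh)} + \norm{h}_{C^{0}(\oTh)}\,\norm{f_{(\r)} - f}_{C^{0}(\oTh)} \quad .
\]
The right-hand side tends to $0$ as $\r \to 0^{+}$ by the two convergence hypotheses, hence so does the left-hand side, which is exactly the statement that $\{ f_{(\r)} h_{(\r)} \}_{\r}$ converges to $f h$ in $\norm{\phantom{f}}_{C^{0}(\oTh)}$. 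If one prefers the sequential formulation, one replaces $\r \to 0^{+}$ throughout by an arbitrary sequence $\r_{n} \to 0^{+}$ in $I_{_{+}}$ and invokes Proposition \ref{prop 1}.

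I do not anticipate any genuine obstacle: the content is just the continuity of the bilinear product map on $C^{0}(\oTh)$, and the single step worth stating explicitly is the local uniform boundedness of $\{ f_{(\r)} \}_{\r}$, which --- as above --- follows from the convergence itself and not from any property of the limit alone. The companion statements used in appendices \ref{appendix ind metr and ext curv} and \ref{appendix auxiliary fields} (products of more than two factors, reciprocals, and compositions) are proved by the same device, using in addition a uniform lower bound on $\abs{f_{(\r)}}$ in the reciprocal case and uniform continuity of the outer function on a compact set in the composition case.
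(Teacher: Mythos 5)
Your proof is correct and complete: the uniform bound on $\{ f_{_{\! (\r)}} \}_{_{\! \r}}$ extracted from the convergence hypothesis, followed by the add-and-subtract estimate, is exactly the standard argument the paper has in mind (it states this proposition without proof, as ``easy to prove''). No gaps.
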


\begin{proposition}
\label{proposition 2}
Let $\{ f_{_{\! (\r)}} \}_{_{\! \r}} : \oXi \subset \mbbR^m \to \mbbR^p$ be a family of continuous functions, with $\Xi$ a c.c.o.s., and $\{ h_{_{(\r)}} \}_{_{\! \r}} : \oTh \subset \mbbR^p \to \mbbR$ a family of Lipschitz functions, with $\Th$ a c.c.o.s., such that $f_{_{\! (\r)}} \big( \oXi \, \big) \subset \oTh$ for every $\r \in I_{_{+}} \,$. Let $f_{_{\! (\r)}} \to f$ in the norm $\norm{\phantom{f}}_{C^{0}(\oXi)}$ and $h_{_{(\r)}} \to h$ in the norm $\norm{\phantom{f}}_{Lip(\oTh)}$ when $\r \to 0^+$. Then the family $\{ h_{_{(\r)}} \! \circ f_{_{\! (\r)}} \}_{_{\! \r}}$ converges to $h \circ f$ in the norm $\norm{\phantom{f}}_{C^{0}(\oXi)}$.
\end{proposition}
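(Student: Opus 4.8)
The plan is to reduce the statement to two uniform estimates joined by a single triangle-inequality split, in which the composition is handled by the Lipschitz control on the outer family. Before the main estimate I would record two preliminary observations. First, since each $f_{(\r)}$ maps $\oXi$ into the \emph{closed} set $\oTh$ and $f_{(\r)} \to f$ uniformly on $\oXi$, the limit map $f$ also satisfies $f(\oXi) \subset \oTh$; this makes $h \circ f$ well-defined and, more importantly, legitimises applying the Lipschitz bound for $h$ to pairs of points of the form $\big( f_{(\r)}(x), f(x) \big)$. Second, convergence of $\{ h_{(\r)} \}_{\r}$ in the norm $\norm{\phantom{f}}_{Lip(\oTh)}$ forces the limit $h$ to be Lipschitz on $\oTh$, with finite constant $L \equiv \abs{h}_{Lip(\oTh)}$ (the seminorms $\abs{h_{(\r)}}_{Lip(\oTh)}$ being bounded as $\r \to 0^+$, since they form a Cauchy family).

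With these in hand, for any $x \in \oXi$ I would split
\begin{equation}
\abs{h_{(\r)}\big(f_{(\r)}(x)\big) - h\big(f(x)\big)} \leq \abs{h_{(\r)}\big(f_{(\r)}(x)\big) - h\big(f_{(\r)}(x)\big)} + \abs{h\big(f_{(\r)}(x)\big) - h\big(f(x)\big)} \quad .
\end{equation}
Since $f_{(\r)}(x) \in \oTh$, the first term is at most $\norm{h_{(\r)} - h}_{C^{0}(\oTh)} \leq \norm{h_{(\r)} - h}_{Lip(\oTh)}$; the second term is at most $L \, \abs{f_{(\r)}(x) - f(x)}$, which, all norms on $\mbbR^p$ being equivalent and recalling the componentwise convention $\norm{v} = \sum_{i} \norm{v_i}$ for vector-valued maps, is bounded by $C \, L \, \norm{f_{(\r)} - f}_{C^{0}(\oXi)}$ for a suitable constant $C$. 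Taking the supremum over $x \in \oXi$ gives
\begin{equation}
\norm{h_{(\r)} \circ f_{(\r)} - h \circ f}_{C^{0}(\oXi)} \leq \norm{h_{(\r)} - h}_{Lip(\oTh)} + C \, L \, \norm{f_{(\r)} - f}_{C^{0}(\oXi)} \quad ,
\end{equation}
whose right-hand side tends to $0$ as $\r \to 0^+$ by hypothesis; this is precisely convergence of $\{ h_{(\r)} \circ f_{(\r)} \}_{\r}$ to $h \circ f$ in $\norm{\phantom{f}}_{C^{0}(\oXi)}$. If a sequential formulation is preferred, Proposition \ref{prop 1} lets one replace the continuous limit $\r \to 0^+$ by arbitrary null sequences throughout, but it is not needed here.

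The argument has no deep obstacle: it is the classical fact that a uniform limit commutes with composition provided the outer maps are uniformly Lipschitz. The only points that demand a little care — and where I expect the minor subtleties to lie — are the two preliminary observations: verifying that $f(\oXi) \subset \oTh$, so that the outer Lipschitz estimate is actually applicable at the relevant pairs of points, and keeping the vector-valued norm convention $\norm{v} = \sum_i \norm{v_i}$ consistent, so that the equivalence-of-norms constant $C$ is correctly accounted for. Neither of these is serious, and the proposition follows.
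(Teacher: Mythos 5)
Your proof is correct and is exactly the standard argument this statement calls for; note that the paper itself offers no proof of Propositions \ref{proposition 1}--\ref{proposition 3}, declaring them ``easy to prove,'' so there is nothing to compare against. Your two preliminary observations (that $f(\oXi) \subset \oTh$ because $\oTh$ is closed, and that $h$ inherits a finite Lipschitz constant by completeness of $Lip(\oTh)$) are precisely the points worth making explicit, and the triangle-inequality split together with the bound $\abs{f_{_{\! (\r)}}(x) - f(x)} \leq \norm{f_{_{\! (\r)}} - f}_{C^{0}(\oXi)}$ (where with the paper's componentwise convention one may even take $C = 1$, since the Euclidean norm is dominated by the $\ell^1$ norm) closes the argument.
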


\begin{proposition}
\label{proposition 3}
Let $\Th \subset \mbbR^m$ be a c.c.o.s., and let $\{ f_{_{\! (\r)}} \}_{_{\! \r}} : \oTh \to \mbbR$ be a family of continuous functions. Suppose that $f_{_{\! (\r)}} \to f$ in the norm $\norm{\phantom{f}}_{C^{0}(\oTh)}$ when $\r \to 0^+ \,$, and that $f$ does not vanish in $\oTh$. Then there exists a $\d > 0$ such that, for $0 < \r < \d$, $f_{_{\! (\r)}}$ does not vanish in $\oTh$ and the family $\{ 1/f_{_{\! (\r)}} \}_{_{\! \r}}$ converges to $1/f$ in the norm $\norm{\phantom{f}}_{C^{0}(\oTh)}$.
\end{proposition}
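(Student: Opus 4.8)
The plan is to run the standard functional-analytic argument: first produce a uniform positive lower bound for $\abs{f_{_{\! (\r)}}}$ on $\oTh$ valid for all sufficiently small $\r$, and then control the difference of reciprocals via the elementary identity $1/f_{_{\! (\r)}} - 1/f = (f - f_{_{\! (\r)}})/(f_{_{\! (\r)}} f)$.

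First I would exploit compactness. Since $f$ is continuous on the compact set $\oTh$ and does not vanish there, the function $\abs{f}$ attains a strictly positive minimum, say $c \equiv \min_{p \in \oTh} \abs{f(p)} > 0$. This is the only point where the compact-closure hypothesis on $\Th$ is genuinely used. Next I would invoke the convergence $f_{_{\! (\r)}} \to f$ in $\norm{\phantom{f}}_{C^{0}(\oTh)}$ to pick a $\d > 0$ such that $\norm{f_{_{\! (\r)}} - f}_{C^{0}(\oTh)} < c/2$ whenever $0 < \r < \d$. For such $\r$ and every $p \in \oTh$, the reverse triangle inequality gives $\abs{f_{_{\! (\r)}}(p)} \geq \abs{f(p)} - \abs{f_{_{\! (\r)}}(p) - f(p)} > c - c/2 = c/2 > 0$, so $f_{_{\! (\r)}}$ is nowhere zero on $\oTh$, and $1/f_{_{\! (\r)}}$ is a well-defined continuous function on $\oTh$ (continuity being inherited from $f_{_{\! (\r)}}$, which is bounded away from zero).

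Finally I would estimate the reciprocals directly. For $0 < \r < \d$ and any $p \in \oTh$, using $\abs{f_{_{\! (\r)}}(p)} \geq c/2$ and $\abs{f(p)} \geq c$,
\[
\Babs{\frac{1}{f_{_{\! (\r)}}(p)} - \frac{1}{f(p)}} = \frac{\abs{f(p) - f_{_{\! (\r)}}(p)}}{\abs{f_{_{\! (\r)}}(p)} \, \abs{f(p)}} \leq \frac{2}{c^{2}} \, \abs{f(p) - f_{_{\! (\r)}}(p)} .
\]
Taking the supremum over $p \in \oTh$ yields $\norm{1/f_{_{\! (\r)}} - 1/f}_{C^{0}(\oTh)} \leq (2/c^{2})\,\norm{f_{_{\! (\r)}} - f}_{C^{0}(\oTh)}$, whose right-hand side tends to $0$ as $\r \to 0^+$ by hypothesis; this is the desired convergence.

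I expect no serious obstacle here: the argument is routine. The only point requiring a modicum of care is that the lower bound $\abs{f_{_{\! (\r)}}} \geq c/2$ must hold uniformly in $p$ and $\r$ simultaneously — which is exactly what convergence in the $C^{0}$ (i.e.~sup) norm supplies, and is the reason the hypothesis is phrased in that norm rather than merely as pointwise convergence.
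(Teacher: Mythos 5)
Your proof is correct. The paper does not actually supply an argument for this proposition (it is listed among statements that are ``easy to prove''), and the route you take --- Weierstrass on the compact set $\oTh$ to get $c = \min_{\oTh}\abs{f} > 0$, the reverse triangle inequality to obtain the uniform lower bound $\abs{f_{_{\! (\r)}}} \geq c/2$ for small $\r$, and the identity $1/f_{_{\! (\r)}} - 1/f = (f - f_{_{\! (\r)}})/(f_{_{\! (\r)}} f)$ to conclude $\norm{1/f_{_{\! (\r)}} - 1/f}_{C^{0}(\oTh)} \leq (2/c^{2})\norm{f_{_{\! (\r)}} - f}_{C^{0}(\oTh)}$ --- is exactly the standard argument the author evidently has in mind.
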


\subsection{Convergence in a family of sets}

We formalize here the notion of ``convergence in a family of sets''. Indicating 
\beq
Q_{q}^{_{(\r)}} = \bar{U}_{q} \times \big\{ \big( -l, -\r \big) \cup \big( \r, l \big) \big\} \quad ,
\eeq
we say that $\{ f_{_{\! (\r)}} \}_{_{\! \r}}$ converges uniformly in $\{ \oQ_{q}^{_{(\r)}} \}_{_{\! \r}}$ to a function
\beq
f : \overline{\bar{U}}_{q} \times \big\{ \big[ -l, 0 \big) \cup \big( 0 , l \, \big] \big\} \to \mbbR
\eeq
if, for every $\ep > 0$, there exists a $\d > 0$ such that
\beq
\r < \d \quad \Rightarrow \quad \babs{f_{_{\! (\r)}}(\chd, \t) - f(\chd, \t)} < \ep \quad \text{for every} \,\, (\chd, \t) \in \oQ_{q}^{_{(\r)}} \quad.
\eeq

To characterize the uniform convergence in the family of sets $\{ \oQ_{q}^{_{(\r)}} \}_{_{\! \r}} \, $, we use the following propositions

\begin{lemma}
\label{lemma Qr 1}
Let $q \in \mcalC_2$ and let $\{ f_{_{\! (\r)}} \}_{_{\! \r}}$ be a family of continuous functions $\oW_{{\! q}} \to \mbbR \,$. Let $\{ f_{_{\! (\r)}} \}_{_{\! \r}}$ converge uniformly in $\{ \oQ_{q}^{_{(\r)}} \}_{_{\! \r}}$ to a function $f : \oW_{\! {q}} \setminus \mcalC_2 \to \mbbR \,$, and let $\Th$ be a c.c.o.s.~such that $\oTh \, \cap \, \mcalC_2 = \varnothing \,$. Then $\{ f_{_{\! (\r)}} \}_{_{\! \r}}$ converges uniformly to $f$ in $\oTh \,$.
\end{lemma}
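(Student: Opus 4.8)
The plan is to use that $\oTh$ is compact and disjoint from the codimension-2 brane, which in cod-1 GNC is the locus $\t = 0$, so that the coordinate $\t$ is bounded away from zero on $\oTh$; this forces $\oTh$ to sit inside $\oQ_{q}^{(\r)}$ for all sufficiently small $\r$, whence uniform convergence on $\oTh$ is inherited from the assumed uniform convergence in the family $\{ \oQ_{q}^{(\r)} \}_{\r}$.

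Concretely, I would argue as follows. Since $f$ and every $f_{(\r)}$ are defined on $\oW_{\! q} \setminus \mcalC_2$ and $\oW_{\! q}$ respectively, the statement presupposes $\oTh \subseteq \oW_{\! q} = \overline{\bar{U}}_q \times [-l, l \,]$ with $\oTh \cap \mcalC_2 = \varnothing$, i.e.~no point of $\oTh$ has vanishing $\t$-coordinate. The map $(\chd, \t) \mapsto \abs{\t}$ is continuous on the compact set $\oTh$, hence attains a minimum $\eta$ there; and $\eta > 0$, for otherwise $\oTh$ would contain a point with $\t = 0$, contradicting $\oTh \cap \mcalC_2 = \varnothing$. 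Consequently every $(\chd, \t) \in \oTh$ satisfies $\chd \in \overline{\bar{U}}_q$ and $\abs{\t} \geq \eta$, that is, $\oTh \subseteq \overline{\bar{U}}_q \times \big( [-l, -\eta] \cup [\eta, l \,] \big)$. Since $[-l, -\eta] \cup [\eta, l \,] \subseteq [-l, -\r] \cup [\r, l \,]$ whenever $0 < \r \leq \eta$, it follows that $\oTh \subseteq \oQ_{q}^{(\r)}$ for every such $\r$.

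It then only remains to transfer the convergence estimate. Fix $\ep > 0$. By the hypothesis of uniform convergence of $\{ f_{(\r)} \}_{\r}$ in $\{ \oQ_{q}^{(\r)} \}_{\r}$, there is a $\d > 0$ such that $\r < \d$ implies $\babs{f_{(\r)}(\chd, \t) - f(\chd, \t)} < \ep$ for all $(\chd, \t) \in \oQ_{q}^{(\r)}$. Setting $\d' = \min\{ \d, \eta \}$, for every $\r < \d'$ we have simultaneously $\oTh \subseteq \oQ_{q}^{(\r)}$ and the above bound on $\oQ_{q}^{(\r)}$, hence $\babs{f_{(\r)}(\chd, \t) - f(\chd, \t)} < \ep$ for all $(\chd, \t) \in \oTh$. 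This is precisely uniform convergence of $\{ f_{(\r)} \}_{\r}$ to $f$ on $\oTh$.

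I do not expect a genuine obstacle here: the entire content is the observation that compactness of $\oTh$ yields a uniform lower bound $\eta > 0$ on $\abs{\t}$, after which everything is bookkeeping. The only point to be mildly careful about is keeping the $\chd$-slot inside $\overline{\bar{U}}_q$, which is automatic from $\oTh \subseteq \oW_{\! q}$, so that the inclusion $\oTh \subseteq \oQ_{q}^{(\r)}$ genuinely holds for small $\r$.
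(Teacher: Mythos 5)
Your proof is correct and follows essentially the same route as the paper: compactness of $\oTh$ together with $\oTh \cap \mcalC_2 = \varnothing$ gives a strictly positive minimum of $\abs{\t}$ on $\oTh$, hence $\oTh \subset \oQ_{q}^{_{(\r)}}$ for all sufficiently small $\r$, and the uniform estimate transfers directly. The only cosmetic difference is that the paper phrases the minimum in terms of the normal geodesic distance to $\mcalC_2$ rather than the coordinate $\abs{\t}$ itself.
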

\begin{proof}
Let's def\mbox{}ine the normal distance $d_{_{\perp}}(p)$ of a point $p \in \oTh$ from $\mcalC_2$ as the distance measured from $p$ to $\mcalC_2$ along a geodesic normal to $\mcalC_2$ and passing through $p \,$. We have $d_{_{\perp}}(p) = \t (p) \,$, where $\t(p)$ is the normal coordinate of $p$ in the cod-1 Gaussian normal cordinates. It follows that $d_{_{\perp}}$ is continuous on a compact set, and therefore has a minimum in $\oTh \,$. Since $\oTh \cap \mcalC_2 = \varnothing \,$, this minimum is strictly positive. Therefore, there exists a $\d > 0$ such that $\r < \d$ implies $\oTh \subset \oQ_{q}^{_{(\r)}}$. The thesis then follows simply from the def\mbox{}inition of uniform convergence in $\{ \oQ_{q}^{_{(\r)}} \}_{_{\! \r}} \,$.
\end{proof}

\begin{corollary}
\label{corollary Qr}
Let $q \in \mcalC_2$ and let $\{ f_{_{\! (\r)}} \}_{_{\! \r}}$ be a family of continuous functions $\oW_{{\! q}} \to \mbbR \,$, which converges uniformly in $\{ \oQ_{q}^{_{(\r)}} \}_{_{\! \r}}$ to a function $f : \oW_{\! {q}} \setminus \mcalC_2 \to \mbbR \,$. Then $f$ is continuous.
\end{corollary}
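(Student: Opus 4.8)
The plan is to deduce this from Lemma~\ref{lemma Qr 1} together with the classical fact that a uniform limit of continuous functions is continuous, exploiting that continuity is a \emph{local} property: a function on a topological space is continuous as soon as its restriction to each member of some open cover of the domain is continuous. So it suffices to produce, around each point of $\oW_{\!\!q} \setminus \mcalC_2$, a relatively open neighbourhood on which $f$ is realized as a uniform limit of continuous functions.

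First I would fix an arbitrary $p \in \oW_{\!\!q} \setminus \mcalC_2$. Since $p \notin \mcalC_2$, its normal coordinate $\t(p)$ is non-zero, so the set $K_{p} = \oW_{\!\!q} \cap \{\, \abs{\t - \t(p)} \leq \abs{\t(p)}/2 \,\}$ is a compact subset of $\oW_{\!\!q}$ on which the normal distance $d_{_{\perp}}$ to $\mcalC_2$ stays bounded away from $0$; in particular $K_{p} \cap \mcalC_2 = \varnothing$, and $p$ lies in the interior of $K_{p}$ relative to $\oW_{\!\!q} \setminus \mcalC_2$. The argument proving Lemma~\ref{lemma Qr 1} — which only uses that $d_{_{\perp}}$ attains a strictly positive minimum on the (compact) set under consideration, so that this set is contained in $\oQ_{q}^{_{(\r)}}$ once $\r$ is small enough — then shows that $\{ f_{_{\! (\r)}} \}_{_{\! \r}}$ converges uniformly to $f$ on $K_{p}$. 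Since each $f_{_{\! (\r)}}$ is continuous on $K_{p}$, the uniform limit $f$ is continuous on $K_{p}$, hence in particular at $p$. As $p$ was arbitrary and the relative interiors of the sets $K_{p}$ cover $\oW_{\!\!q} \setminus \mcalC_2$, continuity of $f$ follows.

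The only mildly delicate point — such as it is — is that one cannot expect $\{ f_{_{\! (\r)}} \}_{_{\! \r}}$ to converge to $f$ uniformly on \emph{all} of $\oW_{\!\!q} \setminus \mcalC_2$ at once, since the sets $\oQ_{q}^{_{(\r)}}$ exhaust this punctured domain only in the limit $\r \to 0^+$ while approaching $\mcalC_2$ arbitrarily closely; this is exactly why the conclusion must be assembled locally from the compact pieces $K_{p}$. Note also that Lemma~\ref{lemma Qr 1} being stated for a c.c.o.s.\ rather than for an arbitrary compact subset disjoint from $\mcalC_2$ is harmless, since its proof uses nothing beyond compactness and disjointness from $\mcalC_2$. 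I do not expect any genuine obstacle beyond this bookkeeping.
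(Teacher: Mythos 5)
Your proposal is correct and follows essentially the same route as the paper: the paper exhausts $\oW_{\!\!q}\setminus\mcalC_2$ by the compact sets $\overline{S}_q(a)=\overline{\bar{U}}_q\times\{[-l,-a\,]\cup[\,a,l\,]\}$ for $0<a<l$, applies Lemma~\ref{lemma Qr 1} to get uniform convergence on each, and invokes completeness of $C^0$ with the $\sup$ norm, exactly as you do with your point-centred compact pieces $K_p$. Your explicit remark that continuity is local (and that Lemma~\ref{lemma Qr 1} really only needs compactness and disjointness from $\mcalC_2$) just makes explicit what the paper leaves implicit.
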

\begin{proof}
Let $0 < a < l \,$, and def\mbox{}ine $S_{q}(a) = \bar{U}_{q} \times \big\{ \big( -l, -a \big) \cup \big( a, l \big) \big\}$. Since $\overline{S}_{q}(a)$ is compact, the lemma \ref{lemma Qr 1} implies that $\{ f_{_{\! (\r)}} \}_{_{\! \r}}$ converges uniformly in $\overline{S}_{q}(a)$ to $f$. On the other hand, since $C^{0} \big( \overline{S}_{q}(a) \big)$ with the $\sup$ norm is a Banach space, $f$ is continuous on $\overline{S}_{q}(a) \,$. Since this holds for every $0 < a < l \,$, the thesis follows.
\end{proof}

\begin{proposition}
\label{proposition Qr}
Let $q \in \mcalC_2$ and let $\{ f_{_{\! (\r)}} \}_{_{\! \r}}$ be a family of continuous functions $\oW_{{\! q}} \to \mbbR \,$. Let $\{ f_{_{\! (\r)}} \}_{_{\! \r}}$ converge uniformly in $\{ \oQ_{q}^{_{(\r)}} \}_{_{\! \r}}$ to $f: \oW_{\! {q}} \setminus \mcalC_2 \to \mbbR \,$. Then, for every $\chd \in \overline{\bar{U}}_{q} \,$, $\lim_{\r \to 0^+} f_{_{\! (\r)}}(\chd, \r)$ exists f\mbox{}inite if and only if $\lim_{\t \to 0^{+}} f(\chd, \t)$ exists f\mbox{}inite. Furthermore, in this case we have
\beq
\lim_{\t \to 0^{+}} f(\chd, \t) = \lim_{\r \to 0^+} f_{_{(\r)}}(\chd, \r) \quad .
\eeq
An analogous proposition involving $\lim_{\r \to 0^+} f_{_{\! (\r)}}(-\chd, \r)$ and $\lim_{\t \to 0^{-}} f(\chd, \t)$ holds.
\end{proposition}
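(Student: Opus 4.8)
The plan is to reduce both implications to a single observation: the family $\{ f_{_{\! (\r)}} \}_{_{\! \r}}$ converges to $f$ \emph{along the diagonal} $\t = \pm\r$. Fix $\chd \in \overline{\bar{U}}_{q}$ once and for all. Since $Q_{q}^{_{(\r)}} = \bar{U}_{q} \times \big\{ (-l, -\r) \cup (\r, l) \big\}$, passing to closures gives $(\chd, \r) \in \oQ_{q}^{_{(\r)}}$ and $(\chd, -\r) \in \oQ_{q}^{_{(\r)}}$ for every $\r \in (0, l \,]$ (here one uses $\r \leq l$, so that $\r \in [\r, l\,]$ and $-\r \in [-l, -\r \,]$), so the uniform estimate supplied by the hypothesis may legitimately be evaluated at these diagonal points.

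First I would establish the auxiliary claim that the function $g : (0, l \,] \to \mbbR$ defined by $g(\t) = f_{_{\! (\t)}}(\chd, \t)$ satisfies $g(\t) - f(\chd, \t) \to 0$ as $\t \to 0^+$. Indeed, given $\ep > 0$, the definition of uniform convergence in $\{ \oQ_{q}^{_{(\r)}} \}_{_{\! \r}}$ produces a $\d > 0$ such that $\r < \d$ forces $\babs{f_{_{\! (\r)}}(\chd, \t') - f(\chd, \t')} < \ep$ for every $(\chd, \t') \in \oQ_{q}^{_{(\r)}}$; specialising to $\r = \t$ and $\t' = \t$ yields $\babs{g(\t) - f(\chd, \t)} < \ep$ whenever $0 < \t < \d$, which is the claim. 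The identical argument with $\t' = -\t$ and $\r = \t$ shows that $\t \mapsto f_{_{\! (\t)}}(\chd, -\t) - f(\chd, -\t) \to 0$ as $\t \to 0^+$.

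The two equivalences then follow at once. Writing $f(\chd, \t) = g(\t) + \big( f(\chd, \t) - g(\t) \big)$, and noting that $\lim_{\r \to 0^+} f_{_{\! (\r)}}(\chd, \r)$ is, after a mere renaming of the parameter, the same as $\lim_{\t \to 0^+} g(\t)$, the vanishing of the bracketed term as $\t \to 0^+$ shows that $\lim_{\t \to 0^+} f(\chd, \t)$ exists finite if and only if $\lim_{\r \to 0^+} f_{_{\! (\r)}}(\chd, \r)$ does, and that the two limits coincide whenever they exist. The statement for $\t \to 0^-$ is obtained verbatim by working along the diagonal $\t = -\r$. There is essentially no obstacle to overcome here; the only point deserving a moment of attention is the elementary verification that the diagonal points $(\chd, \pm\r)$ actually lie inside $\oQ_{q}^{_{(\r)}}$, which is what makes the uniform bound applicable there. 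Notably, continuity of $f$ away from $\mcalC_2$ (Corollary \ref{corollary Qr}) plays no role in this argument, being relevant only for later uses of the proposition.
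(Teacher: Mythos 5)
Your proof is correct and follows essentially the same route as the paper's: both rest on evaluating the uniform-convergence estimate at the diagonal points $(\chd,\pm\r)\in\oQ_{q}^{_{(\r)}}$, so that $f_{_{\! (\r)}}(\chd,\r)-f(\chd,\r)\to 0$, and then concluding by the triangle inequality. The paper merely spells out the two implications as separate $\ep$--$\d$ arguments, whereas you package them once via the observation that two functions differing by an infinitesimal have the same limiting behaviour; the content is identical.
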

\begin{proof}
Let's consider the case correspondent to the $+$ sign, the other being completely analogous. Since the proposition 

Let's f\mbox{}ix $\ep > 0$, and def\mbox{}ine $\vep = \ep/2 \, $. Since $\{ f_{_{\! (\r)}} \}_{_{\! \r}}$ converge uniformly in $\{ \oQ_{q}^{_{(\r)}} \}_{_{\! \r}}$ to $f$, it follows in particular that $\exists \, \d_{_{1}} > 0$ such that $\r < \d_{_{1}}$ implies
\beq
\babs{f_{_{\! (\r)}}(\chd, \r) - f(\chd, \r)} < \vep \quad .
\eeq \\
$(\Rightarrow)\, $ Calling $F(\chd) = \lim_{\r \to 0^+} f_{_{\! (\r)}}(\chd, \r) \,$, there exists a $\d_{_{2}} > 0$ such that $\r < \d_{_{2}}$ implies
\beq
\babs{f_{_{\! (\r)}}(\chd, \r) - F(\chd)} < \vep \quad .
\eeq
Let us call $\d = \mathrm{min} \{ \d_{_{1}}, \d_{_{2}} \} \,$. Since
\beq
\babs{f_{_{\! (\r)}}(\chd, \r) - f(\chd, \r) \pm F(\chd)} \geq \Babs{\babs{f_{_{\! (\r)}}(\chd, \r) - F(\chd)} - \babs{f(\chd, \r) - F(\chd)}} \quad ,
\eeq
we conclude that $\r < \d$ implies
\beq
\babs{f(\chd, \r) - F(\chd)} \leq \babs{f_{_{\! (\r)}}(\chd, \r) - F(\chd)} + \babs{f_{_{\! (\r)}}(\chd, \r) - f(\chd, \r)} < \frac{\vep}{2} + \frac{\vep}{2} = \ep \quad .
\eeq
$(\Leftarrow)\, $ Calling $\bar{F}(\chd) = \lim_{\t \to 0^{+}} f(\chd, \t) \,$, there exists a $\d_{_{2}} > 0$ such that $\r < \d_{_{2}}$ implies
\beq
\babs{f(\chd, \r) - \bar{F}(\chd)} < \vep \quad .
\eeq
Let us call $\d = \mathrm{min} \{ \d_{_{1}}, \d_{_{2}} \} \,$. Since
\beq
\babs{f_{_{\! (\r)}}(\chd, \r) - f(\chd, \r) \pm \bar{F}(\chd)} \geq \Babs{\babs{f_{_{\! (\r)}}(\chd, \r) - \bar{F}(\chd)} - \babs{f(\chd, \r) - \bar{F}(\chd)}} \quad ,
\eeq
we conclude that $\r < \d$ implies
\beq
\babs{f_{_{\! (\r)}}(\chd, \r) - \bar{F}(\chd)} \leq \babs{f(\chd, \r) - \bar{F}(\chd)} + \babs{f_{_{\! (\r)}}(\chd, \r) - f(\chd, \r)} < \frac{\vep}{2} + \frac{\vep}{2} = \ep \quad .
\eeq
\end{proof}

\subsubsection{Convergence of products, compositions and reciprocals}

Let us now omit the dependence on the coordinates $\chd$, and consider a set $K \subset [-d , d \, ]$ (possibly $K = [-d , d \, ]$ itself). Let's def\mbox{}ine the uniform convergence in the family of sets $\{ K \cap [-d , -\r \, ] \cup [ \, \r , d \, ] \}_{_{\! \r}}$ exactly as the uniform convergence in $\{ \oQ_{q}^{_{(\r)}} \}_{_{\! \r}} \,$. That is, we say that a family of functions $\{ f_{_{\! (\r)}} \}_{_{\! \r}}$ converges uniformly in $\{ K \cap [ -d , - \r \, ] \cup [ \, \r , d \, ] \}_{_{\! \r}}$ to a function $f : K \cap [-d , d \,] \setminus \{ 0 \} \to \mbbR$ if, for every $\ep > 0$, there exists a $\d > 0$ such that
\beq
0 < \r < \d \quad \Rightarrow \quad \babs{f_{_{\! (\r)}}(x) - f(x)} < \ep \quad \text{for every} \,\, x \in K \cap [ -d , -\r \, ] \cup [ \, \r , d \, ] \quad .
\eeq
It is trivial to see that the uniform convergence in $\{ \oQ_{q}^{_{(\r)}} \}_{_{\! \r}}$ implies the uniform convergence in $\{ K \cap [-d, -\r \,] \cup [ \, \r , d \, ] \}_{_{\! \r}}$ at every f\mbox{}ixed $\chd \in \overline{\bar{U}}_{q} \,$, for any choice of $K$. Choosing $K = [ -d , 0 \, ]$ and $K = [ \, 0 , d \, ]$ one gets the def\mbox{}inition of uniform convergence in $\{ [-d, -\r \,] \}_{_{\! \r}}$ and in $\{ [ \, \r , d \, ] \}_{_{\! \r}}$ respectively.

Using the corollary \ref{corollary Qr} and the proposition \ref{proposition Qr} it is straightforward to generalize the propositions \ref{proposition 1}, \ref{proposition 2}  and \ref{proposition 3}, that is the following propositions hold

\begin{lemma}
\label{lemma 1 Qr}
Let $\{ f_{_{\! (\r)}} \}_{_{\!\r}} : [ -d , 0 \,] \to \mbbR$ be a family of continuous functions, and let $K \in [ -d , 0 \,] \,$. Let $\{ f_{_{\! (\r)}} \}_{_{\!\r}}$ converge uniformly in $\{ [ -d, - \r \, ] \}_{_{\! \r}}$ to $f : [ -d, 0 ) \to \mbbR$.
Then $\{ f_{_{\! (\r)}} \}_{_{\!\r}}$ converges uniformly in $\{ K \cap [ -d, - \r \, ] \}_{_{\! \r}}$ to $f$.
\end{lemma}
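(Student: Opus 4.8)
The plan is to read off the conclusion directly from the definition of uniform convergence in a family of sets, using nothing beyond the elementary set inclusion $K \cap [ -d, -\r \, ] \subseteq [ -d, -\r \, ]$, valid for every $\r \in I_{_{+}}$.

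First I would unwind the hypothesis. Taking $K = [ -d , 0 \, ]$ in the general definition, the assumption that $\{ f_{_{\! (\r)}} \}_{_{\!\r}}$ converges uniformly in $\{ [ -d, -\r \, ] \}_{_{\! \r}}$ to $f$ reads: for every $\ep > 0$ there is a $\d > 0$ such that $0 < \r < \d$ implies $\babs{f_{_{\! (\r)}}(x) - f(x)} < \ep$ for all $x \in [ -d, -\r \, ]$. I would also observe that, since $K \subseteq [ -d , 0 \, ]$, one has $K \cap [ -d , d \, ] \setminus \{ 0 \} = K \setminus \{ 0 \} = K \cap [ -d , 0 )$, so the restriction of $f$ to $K \cap [ -d , 0 )$ is a well-defined function, hence a legitimate target for the claimed convergence (the ``$\cup \, [ \, \r , d \, ]$'' branch of the general definition being empty here because $K$ lies in $[ -d , 0 \, ]$).

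Then, fixing $\ep > 0$ and keeping the \emph{same} $\d$ supplied by the hypothesis, for $0 < \r < \d$ and any $x \in K \cap [ -d, -\r \, ]$ one has in particular $x \in [ -d, -\r \, ]$, whence $\babs{f_{_{\! (\r)}}(x) - f(x)} < \ep$; this is precisely the statement that $\{ f_{_{\! (\r)}} \}_{_{\!\r}}$ converges uniformly in $\{ K \cap [ -d, -\r \, ] \}_{_{\! \r}}$ to $f$. I do not expect any genuine obstacle here: the lemma is a restriction-to-a-subset observation, stated (just as Lemma \ref{lemma Qr 1} is) so that it may later be invoked for arbitrary compact pieces of $[ -d , 0 )$. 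The only point deserving a parenthetical remark is that $K \cap [ -d, -\r \, ]$ may be empty for some small $\r$ — for instance if $K \subseteq (-a, 0 \, ]$ with $a < d$, or if $K$ reduces to a single point — in which case the defining implication holds vacuously and the argument goes through unchanged.
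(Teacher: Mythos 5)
Your proof is correct: the lemma is indeed just the observation that $K \cap [-d,-\r\,] \subseteq [-d,-\r\,]$, so the same $\d$ from the hypothesis works, and your remark about the vacuous case is the right caveat. The paper omits the proof entirely (listing this lemma among the ``straightforward'' generalizations), and your restriction argument is exactly the intended one.
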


\begin{proposition}
\label{proposition 1 Qr}
Let $\{ f_{_{\! (\r)}} \}_{_{\!\r}}, \{ h_{_{(\r)}} \}_{_{\!\r}} : [ -d , 0 \,] \to \mbbR$ be two families of continuous functions. Let $\{ f_{_{\! (\r)}} \}_{_{\!\r}}$ converge uniformly in $\{ [ -d, - \r \, ] \}_{_{\! \r}}$ to $f : [ -d, 0 ) \to \mbbR$, and let $\{ h_{_{(\r)}} \}_{_{\!\r}}$ converge uniformly in $\{ [ -d, - \r \, ] \}_{_{\! \r}}$ to $h : [ -d , 0 ) \to \mbbR$. Let the limits $\lim_{\r \to 0^+} f_{_{(\r)}}(-\r)$ and $\lim_{\r \to 0^+} h_{_{(\r)}}(-\r)$ exist f\mbox{}inite. Then the family of products $\{ f_{_{\! (\r)}} h_{_{(\r)}} \}_{_{\! \r}}$ converges to $f h$ uniformly in $\{ [ -d, - \r \, ] \}_{_{\! \r}}$.
\end{proposition}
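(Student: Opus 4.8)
The plan is to reduce the statement to the familiar estimate for products of uniformly convergent families, exactly as in the proof of the proposition \ref{proposition 1}; the only genuine subtlety is that here the sets $[-d,-\r\,]$ grow with $\r$, so one cannot directly invoke the fact that a family converging uniformly on a \emph{fixed} compact set is uniformly bounded, and this has to be repaired by hand using the hypotheses on the boundary limits.

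First I would establish boundedness of the limit functions. By the one-sided analogue of the corollary \ref{corollary Qr} (applied to the interval $[-d,0)$), the functions $f$ and $h$ are continuous on $[-d,0)$; by the one-sided analogue of the proposition \ref{proposition Qr}, the hypotheses that $\lim_{\r\to 0^+}f_{_{(\r)}}(-\r)$ and $\lim_{\r\to 0^+}h_{_{(\r)}}(-\r)$ exist finite imply that $\lim_{\t\to 0^-}f(\t)$ and $\lim_{\t\to 0^-}h(\t)$ exist finite as well. Hence $f$ and $h$ extend to continuous functions on the compact interval $[-d,0\,]$, so there is an $M<\infty$ (which we may take $\geq 1$) with $\abs{f(x)}\leq M$ and $\abs{h(x)}\leq M$ for every $x\in[-d,0)$. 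Next, using the uniform convergence of $\{h_{_{(\r)}}\}_{_{\!\r}}$ in $\{[-d,-\r\,]\}_{_{\!\r}}$, there is a $\d_{0}>0$ such that $0<\r<\d_{0}$ implies $\abs{h_{_{(\r)}}(x)-h(x)}<1$, and hence $\abs{h_{_{(\r)}}(x)}\leq M+1$, for every $x\in[-d,-\r\,]$; so the whole family $\{h_{_{(\r)}}\}_{_{\!\r}}$ is uniformly bounded on the moving domain, for $\r$ small.

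Then I would close with the triangle inequality. For $x\in[-d,-\r\,]$ one writes
\beq
f_{_{(\r)}}(x)\,h_{_{(\r)}}(x) - f(x)\,h(x) = \big(f_{_{(\r)}}(x)-f(x)\big)\,h_{_{(\r)}}(x) + f(x)\,\big(h_{_{(\r)}}(x)-h(x)\big) \quad ,
\eeq
so that, for $\r<\d_{0}$,
\beq
\babs{f_{_{(\r)}}(x)\,h_{_{(\r)}}(x) - f(x)\,h(x)} \leq (M+1)\,\Big( \babs{f_{_{(\r)}}(x)-f(x)} + \babs{h_{_{(\r)}}(x)-h(x)} \Big) \quad .
\eeq
Given $\ep>0$, by the uniform convergence in $\{[-d,-\r\,]\}_{_{\!\r}}$ of the two families there is $0<\d\leq\d_{0}$ such that $0<\r<\d$ forces both $\abs{f_{_{(\r)}}(x)-f(x)}<\ep/\big(2(M+1)\big)$ and $\abs{h_{_{(\r)}}(x)-h(x)}<\ep/\big(2(M+1)\big)$ for all $x\in[-d,-\r\,]$, whence the right hand side is $<\ep$. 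This is precisely the assertion that $\{f_{_{(\r)}}h_{_{(\r)}}\}_{_{\!\r}}$ converges to $fh$ uniformly in $\{[-d,-\r\,]\}_{_{\!\r}}$.

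The hard part, as anticipated, is the first step: because the domain moves with $\r$, uniform boundedness of the families is not automatic, and it is exactly the hypothesis on the convergence of the one-sided boundary values $f_{_{(\r)}}(-\r)$, $h_{_{(\r)}}(-\r)$ — routed through the analogue of proposition \ref{proposition Qr} to show that $f$ and $h$ are bounded near $0^-$ — that supplies it. Everything after that is the routine product argument. I would also remark in passing that, by continuity of multiplication, the limit $\lim_{\r\to 0^+}(f_{_{(\r)}}h_{_{(\r)}})(-\r)$ then exists finite and equals the product of the two limits, which is what makes this proposition chain nicely with the subsequent ones needed in appendix \ref{appendix auxiliary fields}.
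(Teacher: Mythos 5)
Your proof is correct and fills in exactly the argument the paper has in mind: the paper states this proposition without proof, remarking only that it follows "straightforwardly" from corollary \ref{corollary Qr} and proposition \ref{proposition Qr}, and your use of those two results to extend $f$ and $h$ continuously to the compact interval (hence obtaining the uniform bound on the moving domains) before running the standard product estimate is precisely the intended route. No gaps.
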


\begin{proposition}
\label{proposition 2 Qr}
Let $\{ f_{_{\! (\r)}} \}_{_{\!\r}} :  [ -d , 0 \,] \to \mbbR^n$ be a family of continuous functions, and $\{ h_{_{(\r)}} \}_{_{\!\r}} : \oTh \subset \mbbR^n \to \mbbR$ a family of Lipschitz functions, with $\Th$ a c.c.o.s., such that $f_{_{\! (\r)}} \big( [ -d , 0 \,] \big) \subset \oTh$ for every $\r \,$. Let $\{ f_{_{\! (\r)}} \}_{_{\!\r}}$ converge uniformly in $\{ [ -d , -\r \, ] \}_{_{\! \r}}$ to $f : [ -d, 0 ) \to \mbbR \,$, and let $\{ h_{_{(\r)}} \}_{_{\!\r}}$ converge to $h : \oTh \to \mbbR$ in the norm $\norm{\phantom{f}}_{Lip(\oTh)} \,$. Then the family $\{ h_{_{(\r)}} \! \circ f_{_{\! (\r)}} \}_{_{\!\r}}$ converges uniformly in $\{ [ -d , -\r \, ] \}_{_{\! \r}}$ to $h \circ f$.
\end{proposition}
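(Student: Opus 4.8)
The plan is to adapt the proof of Proposition~\ref{proposition 2} to the ``family of sets'' setting, splitting at a point $x \in [-d,-\r\,]$ the error as
\[
h_{(\r)}\big(f_{(\r)}(x)\big) - h\big(f(x)\big) = \Big[\,h_{(\r)}\big(f_{(\r)}(x)\big) - h\big(f_{(\r)}(x)\big)\,\Big] + \Big[\,h\big(f_{(\r)}(x)\big) - h\big(f(x)\big)\,\Big] \quad ,
\]
so that the first bracket is controlled by the convergence $h_{(\r)} \to h$ in the sup norm and the second by the Lipschitz continuity of the limit $h$ together with the uniform convergence $f_{(\r)} \to f$ in $\{[-d,-\r\,]\}_{\r}$.

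Before the main estimate I would record three elementary facts. First, since $Lip(\oTh)$ is a Banach space and $h_{(\r)} \to h$ in its norm, the limit $h$ is Lipschitz on $\oTh$; set $L \equiv \abs{h}_{Lip(\oTh)} < \infty$. Second, because $\norm{\phantom{f}}_{C^{0}(\oTh)} \le \norm{\phantom{f}}_{Lip(\oTh)}$, the same convergence gives $\norm{h_{(\r)} - h}_{C^{0}(\oTh)} \to 0$ as $\r \to 0^{+}$. Third, uniform convergence of $\{f_{(\r)}\}_{\r}$ in $\{[-d,-\r\,]\}_{\r}$ forces pointwise convergence $f_{(\r)}(x) \to f(x)$ at each fixed $x \in [-d,0)$ (for $\r < \abs{x}$ one has $x \in [-d,-\r\,]$), whence, using $f_{(\r)}(x) \in \oTh$ and the closedness of $\oTh$, also $f(x) \in \oTh$; thus $h(f(x))$ is legitimate for every $x \in [-d,0)$. (The continuity of $f$ on $[-d,0)$ also follows by the one-dimensional analogue of Corollary~\ref{corollary Qr}, but it is not needed below.)

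For the estimate, fix $\ep > 0$ and put $\vep = \ep/2$. Choose $\d_{1} > 0$ with $\r < \d_{1} \Rightarrow \norm{h_{(\r)} - h}_{C^{0}(\oTh)} < \vep$, and $\d_{2} > 0$ with $\r < \d_{2} \Rightarrow \abs{f_{(\r)}(x) - f(x)} < \vep/(L+1)$ for all $x \in [-d,-\r\,]$; let $\d = \min\{\d_{1},\d_{2}\}$. Then for $0 < \r < \d$ and $x \in [-d,-\r\,]$, using $f_{(\r)}(x),f(x) \in \oTh$,
\begin{align*}
\babs{h_{(\r)}\big(f_{(\r)}(x)\big) - h\big(f(x)\big)} &\le \babs{h_{(\r)}\big(f_{(\r)}(x)\big) - h\big(f_{(\r)}(x)\big)} + \babs{h\big(f_{(\r)}(x)\big) - h\big(f(x)\big)} \\
&< \vep + L\,\frac{\vep}{L+1} < 2\vep = \ep \quad ,
\end{align*}
which is precisely uniform convergence of $\{h_{(\r)}\circ f_{(\r)}\}_{\r}$ in $\{[-d,-\r\,]\}_{\r}$ to $h\circ f$. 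The argument is routine; the only point that needs genuine care is the domain bookkeeping, namely checking that the common point fed to both $h$ and $h_{(\r)}$ lands in $\oTh$ uniformly over the interval $[-d,-\r\,]$ --- automatic for $f_{(\r)}(x)$ by hypothesis, and for $f(x)$ by the closedness argument above --- so that the Lipschitz and sup-norm bounds apply on the whole interval at once. Writing $L+1$ in place of $L$ in the choice of $\d_{2}$ merely guards against the degenerate case $L = 0$.
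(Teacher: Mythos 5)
Your proof is correct and is precisely the ``straightforward generalization'' of Proposition~\ref{proposition 2} that the paper asserts without writing out: the standard two-term splitting, with the first bracket controlled by $\norm{h_{_{(\r)}}-h}_{C^{0}(\oTh)}\leq\norm{h_{_{(\r)}}-h}_{Lip(\oTh)}$ and the second by the Lipschitz constant of the limit $h$. You also correctly handle the one point that genuinely needs checking, namely that $f(x)\in\oTh$ (via pointwise convergence at fixed $x$ and closedness of $\oTh$), so nothing is missing.
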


\begin{proposition}
\label{proposition 3 Qr}
Let $K \subset [-d , 0 \,]$, and let $\{ f_{_{\! (\r)}} \}_{_{\!\r}} : [ -d, 0 \,] \to \mbbR$ be a family of continuous functions. Let $\{ f_{_{\! (\r)}} \}_{_{\!\r}}$ converge uniformly in $\{ [ -d , -\r \, ] \}_{_{\! \r}}$ to $f : [ -d , 0 ) \to \mbbR$, such that $\inf_{K \cap [ -d , 0 )} \,\, \abs{f} > 0 \,$. Then there exists a $\d > 0$ such that $1/f_{_{\! (\r)}}$ is well-def\mbox{}ined on $K \cap [-d , 0 \,]$ for $0 < \r < \d$, and the family $\{ 1/f_{_{\! (\r)}} \}_{_{\! \r}}$ converges uniformly in $\{ K \cap [ -d , -\r \, ] \}_{_{\! \r}}$ to $1/f \,$.
\end{proposition}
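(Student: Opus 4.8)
The plan is to follow the pattern already used for Propositions \ref{proposition 1 Qr}--\ref{proposition 2 Qr}: localise away from $\t=0$, where the moving family $\{[-d,-\r\,]\}_{_{\!\r}}$ behaves like honest uniform convergence on fixed compact sets, and then run the reciprocal estimate from the proof of Proposition \ref{proposition 3} directly on the moving intervals. The only genuinely new feature is that the uniform bound on $[-d,-\r\,]$ is available precisely at the parameter value $\r$, so the estimates must be applied ``on the diagonal''; this is what makes a moving-domain statement possible even though no single fixed c.c.o.s.\ carries the convergence all the way up to $0$.

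First I would note (the one-sided analogue of Corollary \ref{corollary Qr}, or directly from Lemma \ref{lemma 1 Qr}) that for every $a\in(0,d)$ the family $\{f_{_{\!(\r)}}\}_{_{\!\r}}$ converges to $f$ in $\norm{\phantom{f}}_{C^{0}([-d,-a])}$, since for $\r<a$ one has $[-d,-a]\subset[-d,-\r\,]$; completeness of $C^{0}([-d,-a])$ then forces $f\in C^{0}([-d,-a])$ for every such $a$, so $f$ is continuous on $[-d,0)$ and $1/f$ is a well-defined continuous function on $K\cap[-d,0)$ by the hypothesis $c:=\inf_{K\cap[-d,0)}\abs{f}>0$. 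Next, by uniform convergence in $\{[-d,-\r\,]\}_{_{\!\r}}$ I would pick $\d_{_1}>0$ with $\sup_{[-d,-\r\,]}\babs{f_{_{\!(\r)}}-f}<c/2$ for $0<\r<\d_{_1}$. For such $\r$ and any $x\in K\cap[-d,-\r\,]$ one has $\abs{f(x)}\ge c$, hence $\abs{f_{_{\!(\r)}}(x)}\ge\abs{f(x)}-\babs{f_{_{\!(\r)}}(x)-f(x)}>c/2>0$, so $1/f_{_{\!(\r)}}$ is well defined on $K\cap[-d,-\r\,]$ (and $0\notin K\cap[-d,-\r\,]$, which is all the convergence claim needs), and there
\beq
\Babs{\frac{1}{f_{_{\!(\r)}}(x)}-\frac{1}{f(x)}}=\frac{\babs{f(x)-f_{_{\!(\r)}}(x)}}{\abs{f_{_{\!(\r)}}(x)}\,\abs{f(x)}}\le\frac{2}{c^{2}}\,\babs{f(x)-f_{_{\!(\r)}}(x)}\quad.
\eeq

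To finish, given $\ep>0$, uniform convergence in $\{[-d,-\r\,]\}_{_{\!\r}}$ again supplies a $\d\le\d_{_1}$ with $\sup_{[-d,-\r\,]}\babs{f_{_{\!(\r)}}-f}<c^{2}\ep/2$ for $0<\r<\d$; combined with the displayed bound this yields $\babs{1/f_{_{\!(\r)}}(x)-1/f(x)}<\ep$ for every $x\in K\cap[-d,-\r\,]$, i.e.~exactly uniform convergence of $\{1/f_{_{\!(\r)}}\}_{_{\!\r}}$ to $1/f$ in $\{K\cap[-d,-\r\,]\}_{_{\!\r}}$. The only point I expect to need care is the one already flagged: since $\inf\abs{f}$ is controlled only over $K\cap[-d,0)$ and not over a whole subinterval, and since $f$ is defined only on the half-open interval, one cannot simply invoke Proposition \ref{proposition 3} on a fixed compact subinterval; instead the reciprocal estimate has to be run directly on the moving sets $K\cap[-d,-\r\,]$, using at each $\r$ the uniform bound valid on $[-d,-\r\,]$.
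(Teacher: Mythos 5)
Your proof is correct and is exactly the ``straightforward generalization'' of Proposition~\ref{proposition 3} that the paper invokes without writing out: the paper gives no explicit proof of Proposition~\ref{proposition 3 Qr}, and your on-the-diagonal reciprocal estimate on the moving sets $K\cap[-d,-\r\,]$ is the intended argument. Your parenthetical caveat is also well taken: the hypotheses control $f_{_{\!(\r)}}$ only on $[-d,-\r\,]$, so non-vanishing can only be guaranteed on $K\cap[-d,-\r\,]$ rather than on all of $K\cap[-d,0\,]$ as the statement literally reads, but this is all that the convergence conclusion (and its use in appendix~\ref{appendix slope function}) requires.
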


\noi Analogous versions of the lemma \ref{lemma 1 Qr} and of the propositions \ref{proposition 1 Qr}, \ref{proposition 2 Qr} and \ref{proposition 3 Qr} hold for the uniform convergence respectively in $\{ [ \, \r , d \, ] \}_{_{\! \r}}$ and in $\{ K \cap [ \, \r , d \, ] \}_{_{\! \r}} \,$.

\subsection{Convergence of the slope function}

Let us remind the well-known

\begin{proposition}
\label{proposition well-known}
Let $K \subset \mbbR$ be a closed interval, and let $\{ f_{_{\! (\r)}} \}_{_{\! \r}}$ be a family of $C^1$ functions $K \to \mbbR \, $. Let the family of derivative functions $\{ f^{\p}_{^{\! (\r)}} \}_{_{\! \r}}$ converge uniformly in $K$ to a function $g$, and let $\{ f_{_{\! (\r)}} \}_{_{\! \r}}$ converge at least in a point of of $K$. Then the family $\{ f_{_{\! (\r)}} \}_{_{\! \r}}$ converges uniformly in $K$ to a derivable function $f$, and $f^{\p} = g$.
\end{proposition}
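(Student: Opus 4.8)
The plan is to reduce the statement to the Fundamental Theorem of Calculus together with a single uniform estimate. First I would record the preliminary remark that, since each $f_{_{\! (\r)}}$ is of class $C^{1}$ on $K$, the derivative $f^{\p}_{^{\! (\r)}}$ is continuous; as $\{ f^{\p}_{^{\! (\r)}} \}_{_{\! \r}}$ converges to $g$ uniformly on $K$, the limit $g$ is continuous on $K$ as well, and in particular integrable. Next I would fix a point $x_{0} \in K$ at which $\{ f_{_{\! (\r)}}(x_{0}) \}_{_{\! \r}}$ converges, call $c$ its limit, and use the Fundamental Theorem of Calculus to write, for every $x \in K$ and every $\r$,
\beq
f_{_{\! (\r)}}(x) = f_{_{\! (\r)}}(x_{0}) + \int_{x_{0}}^{x} f^{\p}_{^{\! (\r)}}(t) \, dt \quad .
\eeq

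I would then define the candidate limit function $f : K \to \mbbR$ by $f(x) = c + \int_{x_{0}}^{x} g(t) \, dt \,$. Since $g$ is continuous, the Fundamental Theorem of Calculus gives at once that $f$ is of class $C^{1}$ on $K$ with $f^{\p} = g \,$, which already establishes the second assertion. It remains to prove that $f_{_{\! (\r)}} \to f$ uniformly on $K \,$. Denoting by $\abs{K}$ the length of the interval $K \,$, for every $x \in K$ I would estimate
\beq
\babs{f_{_{\! (\r)}}(x) - f(x)} \leq \babs{f_{_{\! (\r)}}(x_{0}) - c} + \Babs{\int_{x_{0}}^{x} \big( f^{\p}_{^{\! (\r)}}(t) - g(t) \big) \, dt} \leq \babs{f_{_{\! (\r)}}(x_{0}) - c} + \abs{K} \, \norm{f^{\p}_{^{\! (\r)}} - g}_{C^{0}(K)} \quad .
\eeq
The right hand side is independent of $x$ and tends to $0$ when $\r \to 0^{+}$, the first term by the assumed convergence at $x_{0}$ and the second by the uniform convergence of the derivatives; hence $\norm{f_{_{\! (\r)}} - f}_{C^{0}(K)} \to 0 \,$, i.e.\ $\{ f_{_{\! (\r)}} \}_{_{\! \r}}$ converges uniformly on $K$ to the derivable function $f$ with $f^{\p} = g \,$.

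Finally, I would remark that although the statement is phrased for a family indexed by the continuous parameter $\r \in I_{_{+}}$ rather than for a sequence, no extra work is needed: by proposition \ref{prop 1} convergence as $\r \to 0^{+}$ is equivalent to convergence along every sequence $\r_{n} \to 0^{+} \,$, and the argument above never uses the index being discrete, so it applies verbatim. I do not expect a genuine obstacle here, since this is the classical term-by-term differentiation theorem; the only point requiring a word of care is the preliminary observation that $g$ inherits continuity from the $f^{\p}_{^{\! (\r)}} \,$, which is precisely what licenses the use of the Fundamental Theorem of Calculus for the limit function $f \,$.
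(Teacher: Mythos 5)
Your proof is correct. Note first that the paper itself does not prove this proposition: it is introduced with ``Let us remind the well-known\dots'' and used as a classical black box, with proofs supplied only for its generalizations to the families of sets $\{ [-d, -\r\,] \}_{_{\!\r}}$ and $\{ [\,\r, d\,] \}_{_{\!\r}}$ (propositions \ref{proposition slope minus} and \ref{proposition slope plus}). Your route --- representing each $f_{_{\!(\r)}}$ via the Fundamental Theorem of Calculus, defining the candidate limit as $f(x) = c + \int_{x_0}^{x} g$, and bounding $\norm{f_{_{\!(\r)}} - f}_{C^{0}(K)}$ by $\babs{f_{_{\!(\r)}}(x_0) - c} + \abs{K}\,\norm{f^{\p}_{^{\!(\r)}} - g}_{C^{0}(K)}$ --- is the standard textbook argument and is complete; the preliminary observation that $g$ is continuous (as a uniform limit of continuous functions) is exactly what legitimises both the definition of $f$ and the identity $f^{\p} = g$. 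The paper's own proofs of the generalizations use a slightly different device for the uniform estimate, namely the Lagrange (mean value) theorem applied to $f_{_{\!(\r)}} - f$, which avoids integrating the derivatives but otherwise buys nothing over your integral bound; proposition \ref{proposition slope plus} does, like you, define the limit function as a primitive of (the continuous extension of) $g$. The only point worth flagging is that your estimate uses $\abs{K} < \infty$: the statement says ``closed interval'', and both the estimate and the conclusion genuinely fail on an unbounded interval (e.g.\ $f_{_{\!(\r)}}(x) = \r x$ on $\mbbR$), so the hypothesis must be read as ``compact interval'', which is clearly what is intended given that every application in the paper is to $[-d,0\,]$, $[\,0,d\,]$ or $[-d,d\,]$.
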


The proposition above can be generalized to the uniform convergence in $\{ [ -d , -\r \, ] \}_{_{\! \r}}$ and in $\{ [ \, \r , d \, ] \}_{_{\! \r}}\,$. In particular we have
\begin{proposition}
\label{proposition slope minus}
Let $\{ f_{_{\! (\r)}} \}_{_{\! \r}}$ be a family of $C^1$ functions $[-d , 0 \, ] \to \mbbR \, $, and let the family of derivative functions $\{ f^{\p}_{^{\! (\r)}} \}_{_{\! \r}}$ converge uniformly in $\{ [-d , -\r \, ] \}_{_{\! \r}}$ to a function $g : [-d , 0 \, ) \to \mbbR$. Let $\{ f_{_{\! (\r)}}(-d) \}_{_{\! \r}}$ converge \'a la Cauchy. Then the family $\{ f_{_{\! (\r)}} \}_{_{\! \r}}$ converges uniformly in $\{ [-d , -\r \, ] \}_{_{\! \r}}$ to a derivable function $f$, and $f^{\p} = g \,$.
\end{proposition}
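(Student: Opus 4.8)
The plan is to follow the classical argument behind Proposition \ref{proposition well-known} almost verbatim, via the fundamental theorem of calculus, the only new feature being that the set on which the derivatives converge uniformly is the shrinking family $\{ [-d , -\r \, ] \}_{_{\! \r}}$ rather than one fixed compact interval.

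First I would pin down the candidate limit. Since $\mbbR$ is complete, the hypothesis that $\{ f_{_{\! (\r)}}(-d) \}_{_{\! \r}}$ converges \'a la Cauchy provides a finite number $c = \lim_{\r \to 0^+} f_{_{\! (\r)}}(-d) \,$. Next, reasoning as in Corollary \ref{corollary Qr} — restricting to a fixed subinterval $[-d , -a \, ]$ with $0 < a < d \,$, on which $\{ f^{\p}_{^{\! (\r)}} \}_{_{\! \r}}$ converges uniformly to $g$ for $\r$ small (by Lemma \ref{lemma 1 Qr} applied to $\{ f^{\p}_{^{\! (\r)}} \}_{_{\! \r}}$) — one sees that $g$ is continuous on each such $[-d , -a \, ]$, hence on all of $[-d , 0 ) \,$. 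I would then set
\beq
f(x) = c + \int_{-d}^{x} g(t) \, d t \qquad x \in [-d , 0 ) \quad ,
\eeq
so that the fundamental theorem of calculus immediately gives that $f$ is $C^1$ on $[-d , 0 )$ with $f^{\p} = g \,$.

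The remaining step is to prove $f_{_{\! (\r)}} \to f$ uniformly in $\{ [-d , -\r \, ] \}_{_{\! \r}} \,$. Since each $f_{_{\! (\r)}}$ is $C^1$ on $[-d , 0 \, ] \,$, one may write $f_{_{\! (\r)}}(x) = f_{_{\! (\r)}}(-d) + \int_{-d}^{x} f^{\p}_{^{\! (\r)}}(t) \, d t$ for every $x \in [-d , 0 \, ] \,$; subtracting, for $x \in [-d , -\r \, ]$ one obtains
\beq
\babs{f_{_{\! (\r)}}(x) - f(x)} \leq \babs{f_{_{\! (\r)}}(-d) - c} + \int_{-d}^{x} \babs{f^{\p}_{^{\! (\r)}}(t) - g(t)} \, d t \quad .
\eeq
The only place the shrinking domain matters is the observation that the integration interval $[-d , x \, ]$ is contained in $[-d , -\r \, ]$ (because $x \leq -\r$); hence the integrand can be bounded by the quantity controlling the uniform convergence of $\{ f^{\p}_{^{\! (\r)}} \}_{_{\! \r}}$ in $\{ [-d , -\r \, ] \}_{_{\! \r}}$, and one never needs values of $f^{\p}_{^{\! (\r)}} - g$ near $\t = 0 \,$, where that control degenerates. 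Given $\ep > 0 \,$, choosing $\r$ small enough that simultaneously $\babs{f_{_{\! (\r)}}(-d) - c} < \ep$ and $\sup_{t \in [-d , -\r \, ]} \babs{f^{\p}_{^{\! (\r)}}(t) - g(t)} < \ep$ then yields $\babs{f_{_{\! (\r)}}(x) - f(x)} < \ep \, (1 + d)$ for every $x \in [-d , -\r \, ] \,$, which is the desired uniform estimate. I do not expect a genuine obstacle here: once the bookkeeping with the nested intervals is arranged as above, everything reduces to the classical argument, and the analogous statement for the uniform convergence in $\{ [ \, \r , d \, ] \}_{_{\! \r}}$ follows by the reflection $\t \to -\t \,$.
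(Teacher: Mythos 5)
Your proposal is correct and follows essentially the same route as the paper: the paper also constructs $f$ on the exhausting compacts $[-d,-a\,]$ (via Proposition \ref{proposition well-known} rather than an explicit primitive of $g$) and then bounds $\babs{f_{_{\! (\r)}}(x)-f(x)}$ by the endpoint difference at $-d$ plus a term controlled by $\sup_{[-d,-\r\,]}\babs{f^{\p}_{^{\! (\r)}}-g}$, using the Lagrange theorem where you use the fundamental theorem of calculus. The key observation in both arguments is identical, namely that for $x\in[-d,-\r\,]$ the interval $[-d,x\,]$ stays inside $[-d,-\r\,]$, so the degenerating control near $\t=0$ is never needed.
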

\begin{proof}
Let us take a number $a \in (0 , d)\,$. Since the family $\{ f^{\p}_{^{\! (\r)}} \}_{_{\! \r}}$ converges uniformly in $\{ [-d , -\r \, ] \}_{_{\! \r}}$ to $g$, it converges uniformly to $g$ also in $[-d , -a \, ] \,$. Then by the proposition \ref{proposition well-known} the family $\{ f_{_{\! (\r)}} \}_{_{\! \r}}$ converges uniformly in $[-d , -a \, ]$ to a derivable function $f$, and $f^{\p} = g \,$. Since this holds for every $a \in (0 , d) \,$, this procedure def\mbox{}ines a derivable function $f : [-d , 0 \, ) \to \mbbR \,$. Let us see that $\{ f_{_{\! (\r)}} \}_{_{\! \r}}$ converges uniformly in $\{ [-d , -\r \, ] \}_{_{\! \r}}$ to $f$. Let us f\mbox{}ix $\ep > 0$, and def\mbox{}ine $\vep_{_{1}} = \ep/2$, $\vep_{_{2}} = \ep/2d \,$. Be $x \in [-d , 0 \, )$. Applying the Lagrange theorem to $f_{_{\! (\r)}} - f$ on $[-d , x \,]$, we get
\beq
\big( f_{_{\! (\r)}} - f \big)(x) = \big( f_{_{\! (\r)}} - f \big)(-d) + (x + d) \, \big( f^{\p}_{^{\! (\r)}} - f^{\p} \big)(\xi)
\eeq
and therefore
\beq
\babs{f_{_{\! (\r)}}(x) - f (x)} \leq \babs{f_{_{\! (\r)}}(-d) - f (-d)} + d \, \babs{f^{\p}_{^{\! (\r)}}(\xi) - f^{\p}(\xi)}
\eeq
From the hypothesis we know that $\exists \, \d_{_{1}} > 0$ such that
\beq
0 < \r < \d_{_{1}} \quad \Rightarrow \quad \babs{f_{_{\! (\r)}}(-d) - f (-d)} < \vep_{_{1}} \quad ,
\eeq
and that $\exists \, \d_{_{2}} > 0$ such that
\beq
0 < \r < \d_{_{2}} \quad \Rightarrow \quad \babs{f^{\p}_{^{\! (\r)}}(\xi) - f^{\p}(\xi)} < \vep_{_{2}} \quad \forall \, \xi \in [-d, -\r \, ] \quad .
\eeq
Calling $\d = \min \{ \d_{_{1}}, \d_{_{2}} \}$, I get
\beq
0 < \r < \d \quad \Rightarrow \quad \babs{f_{_{\! (\r)}}(x) - f (x)} < \vep_{_{1}} + d \, \vep_{_{2}} = \ep
\eeq
for every $x \in [-d, -\r \, ] \,$, and the thesis follows.
\end{proof}

Moreover we have
\begin{proposition}
\label{proposition slope plus}
Let $\{ f_{_{\! (\r)}} \}_{_{\! \r}}$ be a family of $C^1$ functions $[0 , d \, ] \to \mbbR \, $, and let the family of derivative functions $\{ f^{\p}_{^{\! (\r)}} \}_{_{\! \r}}$ converge uniformly in $\{ [ \, \r, d \, ] \}_{_{\! \r}}$ to a function $g : ( 0, d \, ] \to \mbbR$ such that $\lim_{x \to 0^+} g(x) = G$ exists f\mbox{}inite. Let $\lim_{\r \to  0^+} f_{_{\! (\r)}}(\r) = F$ exist f\mbox{}inite. Then the family $\{ f_{_{\! (\r)}} \}_{_{\! \r}}$ converges uniformly in $\{ [ \, \r, d \, ] \}_{_{\! \r}}$ to a derivable function $f$, and $f^{\p} = g \,$.
\end{proposition}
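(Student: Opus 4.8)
The plan is to follow the pattern of the proof of Proposition \ref{proposition slope minus}, with one essential change forced by the hypotheses: here $f_{_{\! (\r)}}$ is controlled only at the \emph{moving} point $x=\r$, not at a fixed endpoint, so Proposition \ref{proposition well-known} cannot be invoked directly to anchor the convergence. Instead I would construct the limit function explicitly by integration and estimate the remainder by hand, and it is precisely here that the extra hypothesis $\lim_{x\to0^+}g(x)=G$ finite is used.

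First I would note that $g$ is continuous on $(0,d\,]$: for each $a\in(0,d)$ and each $\r<a$ one has $[\,a,d\,]\subset[\,\r,d\,]$, so $\{f^{\p}_{^{\! (\r)}}\}_{_{\!\r}}$ converges uniformly to $g$ on the compact interval $[\,a,d\,]$; since $C^{0}([\,a,d\,])$ with the $\sup$ norm is Banach and the $f^{\p}_{^{\! (\r)}}$ are continuous, $g$ is continuous on $[\,a,d\,]$, and letting $a\to0^{+}$ gives continuity on $(0,d\,]$ (this is the argument of Corollary \ref{corollary Qr}, transplanted to the family $\{[\,\r,d\,]\}_{_{\!\r}}$). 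Together with $\lim_{x\to0^+}g(x)=G$, the extension $\bg$ defined by $\bg(x)=g(x)$ on $(0,d\,]$ and $\bg(0)=G$ is continuous on $[\,0,d\,]$, hence bounded, say $\abs{\bg}\leq M$; consequently $\int_{0}^{x}\bg(t)\,dt$ is well defined, of class $C^{1}$ in $x$, with derivative $\bg(x)$. I then \emph{define} $f:(0,d\,]\to\mbbR$ by $f(x)=F+\int_{0}^{x}\bg(t)\,dt\,$; by construction $f$ is derivable and $f^{\p}=g$ on $(0,d\,]$, so it only remains to prove that $\{f_{_{\! (\r)}}\}_{_{\!\r}}$ converges to this $f$ uniformly in $\{[\,\r,d\,]\}_{_{\!\r}}$.

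Since each $f_{_{\! (\r)}}$ is $C^{1}$ on $[\,0,d\,]$, the fundamental theorem of calculus gives $f_{_{\! (\r)}}(x)=f_{_{\! (\r)}}(\r)+\int_{\r}^{x}f^{\p}_{^{\! (\r)}}(t)\,dt$ for every $x\in[\,\r,d\,]$, so
\beq
f_{_{\! (\r)}}(x)-f(x)=\Big(f_{_{\! (\r)}}(\r)-F\Big)+\int_{\r}^{x}\Big(f^{\p}_{^{\! (\r)}}(t)-g(t)\Big)\,dt-\int_{0}^{\r}\bg(t)\,dt \quad .
\eeq
Fixing $\ep>0$, I would bound the three terms separately: the hypothesis $\lim_{\r\to0^+}f_{_{\! (\r)}}(\r)=F$ makes the first term smaller than $\ep/3$ for $\r$ small; the uniform convergence of $\{f^{\p}_{^{\! (\r)}}\}_{_{\!\r}}$ in $\{[\,\r,d\,]\}_{_{\!\r}}$ yields $\abs{f^{\p}_{^{\! (\r)}}(t)-g(t)}<\ep/(3d)$ for all $t\in[\,\r,d\,]$ once $\r$ is small, so the middle term is at most $(\ep/3d)(x-\r)\leq\ep/3$; and $\babs{\int_{0}^{\r}\bg(t)\,dt}\leq M\r<\ep/3$ once $\r<\ep/(3M)$ (the term is identically zero if $M=0$). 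Taking $\r$ below the minimum of these thresholds gives $\abs{f_{_{\! (\r)}}(x)-f(x)}<\ep$ \emph{uniformly in} $x\in[\,\r,d\,]$, which is exactly the asserted uniform convergence. This is in the same spirit as the Lagrange-type estimate at the end of the proof of Proposition \ref{proposition slope minus}.

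The only genuinely new ingredient, and the main (if mild) obstacle, is the term $\int_{0}^{\r}\bg(t)\,dt$: it has no counterpart on the ``$-$'' side, because there the anchoring point $-d$ is fixed whereas here the anchor $\r$ slides toward $0$. It is killed only by the hypothesis that $G$ is finite, which forces $\bg$ to be bounded near $0$; without that control the scheme would break down. Everything else is a routine rerun of the estimates already carried out for Propositions \ref{proposition well-known} and \ref{proposition slope minus}.
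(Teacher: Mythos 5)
Your proposal is correct and follows essentially the same route as the paper's proof: both define $f(x)=F+\int_{0}^{x}\bar{g}$, split $f_{_{\! (\r)}}(x)-f(x)$ into the three contributions $f_{_{\! (\r)}}(\r)-F$, the derivative discrepancy on $[\,\r,x\,]$, and the tail $\int_{0}^{\r}\bar{g}$ bounded by $M\r$, and kill each with an $\ep/3$. The only cosmetic difference is that you use the fundamental theorem of calculus where the paper invokes the Lagrange theorem, and you make explicit the continuity of $g$ on $(0,d\,]$ which the paper takes for granted.
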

\begin{proof}
Let us def\mbox{}ine the function $\bar{g} : [ \, 0 , d \,] \to \mbbR$
\begin{equation}
\bar{g}(x) =
\begin{cases}
\phantom{a} g(x) & \text{if} \,\,\, x \in (0, d \,] \quad ,\\
\phantom{a} G & \text{if} \,\,\, x = 0 \quad .
\end{cases}
\end{equation}
By hypothesis $\bar{g}$ is continuous and therefore primitivable. Let's def\mbox{}ine then the function $f : [ \, 0 , d \,] \to \mbbR$ as the primitive of $\bar{g}$ which obeys
\beq
f(x) = F + \int_{0}^{x} \bar{g}(y) \, dy \quad .
\eeq
The function $f$ is derivable (since it is a primitive) and $f^{\p} = g$ over $( 0 , d \,] \,$. Let's see that the family $\{ f_{_{\! (\r)}} \}_{_{\! \r}}$ converges uniformly in $\{ [ \, \r, d \, ] \}_{_{\! \r}}$ to $f$. Since $\abs{\bar{g}}$ is continuous on the compact set $[ \, 0 , d \,]$, by the Weierstrass theorem it assumes there a maximum value $M \, $. Let us f\mbox{}ix $\ep > 0$, and def\mbox{}ine $\vep_{_{1}} = \ep/3$ and $\vep_{_{2}} = \ep/3d \,$. Taken $0 < \r < d \,$ and applying the Lagrange theorem on $[ \, \r , x \,] \,$, we have that
\beq
\big( f_{_{\! (\r)}} - f \big)(x) = \big( f_{_{\! (\r)}} - f \big)(\r) + (x - \r) \, \big( f^{\p}_{^{\! (\r)}} - f^{\p} \big)(\xi)
\eeq
for every $x \in (\r, d \,]$, where $\xi \in ( \r , x ) \,$. It follows that
\beq
\babs{\big( f_{_{\! (\r)}} - f \big)(x)} \leq \babs{f_{_{\! (\r)}}(\r) - F \,} + \babs{F - f(\r)} + d \, \babs{\big( f^{\p}_{^{\! (\r)}} - g \big)(\xi)}
\eeq
and using
\beq
\babs{F - f(\r)} = \bbabs{\int_{0}^{\r} \bar{g}(y) \, dy \,} \leq \int_{0}^{\r} \babs{\bar{g}(y)} \, dy \leq M \, \r
\eeq
we arrive at
\beq
\babs{f_{_{\! (\r)}}(x) - f(x)} \leq \babs{f_{_{\! (\r)}}(\r) - F \,} + M \, \r + d \, \babs{f^{\p}_{^{\! (\r)}}(\xi) - g(\xi)} \quad .
\eeq
Since $\lim_{\r \to  0^+} f_{_{\! (\r)}}(\r) = F$, there exists $\d_{_{1}} > 0$ such that $0 < \r < \d_{_{1}} \Rightarrow \babs{f_{_{\! (\r)}}(\r) - F \,} < \vep_{_{1}}$. Moreover, calling $\d_{_{2}} = \ep/3 M$, we have that $0 < \r < \d_{_{2}} \Rightarrow M \r < \vep_{_{1}}$. Finally, since $\{ f^{\p}_{^{\! (\r)}} \}_{_{\! \r}}$ converge uniformly in $\{ [ \, \r, d \, ] \}_{_{\! \r}}$ to $g \,$, there exists $\d_{_{3}} > 0$ such that
\beq
0 < \r < \d_{_{3}} \quad \Rightarrow \quad \babs{f^{\p}_{^{\! (\r)}}(\xi) - g(\xi)} < \vep_{_{2}} \quad , \quad \forall \, \xi \in [\, \r, d \,] \quad .
\eeq
Calling $\d = \min \{ \d_{_{1}} , \d_{_{2}} , \d_{_{3}} \}$ we conclude that
\beq
0 < \r < \d \quad \Rightarrow \babs{f_{_{\! (\r)}}(x) - f(x)} < 2 \, \vep_{_{1}} + d \, \vep_{_{2}} = \ep \quad , \quad \forall \, x \in [\, \r, d \,] \quad .
\eeq
\end{proof}

\section{Induced metric and curvature tensors}
\label{appendix ind metr and ext curv}

We discuss here the implications of our ansatz on the behaviour of the cod-1 induced metric and the curvature tensors which characterize the geometry of the cod-1 brane.

\subsection{The induced metric}
\label{appendix The induced metric}

By def\mbox{}inition, in cod-1 GNC the $\t\t$ and $\t\m$ components of cod-1 induced metric satisfy \cite{CarrollBook}
\begin{align}
\label{c1 induced metric GNC trivial k app}
\hg_{\t\t}^{_{(\r)}} &= 1 & \hg_{\t \m}^{_{(\r)}} &= 0 \\[2mm]
\label{c1 induced metric GNC trivial app}
\hg_{\t\t} &= 1 & \hg_{\t \m} &= 0 \quad ,
\end{align}
so the only non-trivial components are the $\m\n$ ones
\begin{align}
\label{c1 induced metric GNC app}
\hg_{\m\n}^{_{(\r)}} &= \dem \hvf_{^{(\r)}}^{_{A}} \,\, \den \hvf_{^{(\r)}}^{_{B}} \,\, g^{_{(\r)}}_{_{AB}} \big\rvert_{\hvfd_{^{(\r)}}} &
\hg_{\m\n} &= \dem \hvf^{_{A}} \,\, \den \hvf^{_{B}} \,\, g_{_{AB}} \big\rvert_{\hvfd} \quad ,
\end{align}
where we indicated $\dem \equiv \de_{\ch^{\m}}$ and $\den \equiv \de_{\ch^{\n}}$.

Taking into account our ansatz and using the propositions \ref{proposition 1} and \ref{proposition 2} of appendix \ref{appendix convergence}, it is straightforward to see that $\hg_{\m\n}^{_{(\r)}}$ converges uniformly to $\hg_{\m\n}$ in $\oW_{\!\!q}$, which implies that $\hg_{\m\n}$ is continuous in $\oW_{\!\!q}$. The same result holds for the higher derivatives of $\hg_{\m\n}^{_{(\r)}}$ in the parallel directions, as can be easily checked by expanding $\de_{\chd}^{\a} \, \hg_{\m\n}^{_{(\r)}}$ as in (\ref{detau hgmn}). Therefore, any derivative $\de_{\chd}^{\a} \, \hg_{\m\n}^{_{(\r)}}$ with $\abs{\a}\leq n-1$ converges uniformly to $\de_{\chd}^{\a} \, \hg_{\m\n}$ in $\oW_{\!\!q}$. In particular, at $\t$ f\mbox{}ixed $\hg_{\m\n}$ is a function of class $C^{n-1}(\oW_{\!\!q})$ of the variables $\chd$, and any derivative $\de_{\chd}^{\a} \, \hg_{\m\n}$ with $\abs{\a}\leq n-1$ is a continuous function of $\t$ and $\chd$.

Regarding the partial derivatives of f\mbox{}irst order with respect to $\t$, expanding $\de_\t \, \hg_{\m\n}^{_{(\r)}}$ as
\begin{multline}
\label{detau hgmn}
\detau \, \hg_{\m\n}^{_{(\r)}} = \detau \dem \hvf_{^{(\r)}}^{_{A}} \,\, \den \hvf_{^{(\r)}}^{_{B}} \,\, g^{_{(\r)}}_{_{AB}} \big\rvert_{\hvfd_{^{(\r)}}} + \dem \hvf_{^{(\r)}}^{_{A}} \,\, \detau \den \hvf_{^{(\r)}}^{_{B}} \,\, g^{_{(\r)}}_{_{AB}} \big\rvert_{\hvfd_{^{(\r)}}} + \\
+ \dem \hvf_{^{(\r)}}^{_{A}} \,\, \den \hvf_{^{(\r)}}^{_{B}} \,\, \detau \hvf_{^{(\r)}}^{_{L}} \, \de_{_{L}} g^{_{(\r)}}_{_{AB}} \big\rvert_{\hvfd_{^{(\r)}}}
\end{multline}
it is easy to see that it remains bounded in $\oW_{\!\!q}$, since it is a linear combination of products of bounded functions. The same is true of any derivative $\detau \de_{\chd}^{\a} \, \hg_{\m\n}^{_{(\r)}}$ with $\abs{\a}\leq n-1$. Moreover, expanding as in (\ref{detau hgmn}) the $\t$ derivative of the limit conf\mbox{}iguration $\detau \hg_{\m\n}$, it is immediate to see that it is in general discontinuous in $\t = 0$, since it contains $\de_{\t} \hvf^{_{A}}$ and $\de_{\t} \dem \hvf^{_{A}}$ which by our ansatz display that behaviour. The same is true of any derivative $\detau \de_{\chd}^{\a} \, \hg_{\m\n}^{_{(\r)}}$ with $\abs{\a}\leq n-1$. It is not dif\mbox{}f\mbox{}icult to see that the conditions (\ref{Alinete}) and (\ref{Bruna}) imply
\begin{align}
\label{Patty app}
\lim_{\r \to 0^+} \detau \, \hg_{\m\n}^{_{(\r)}} \big\vert_{\t = \r} &= \detau \, \hg_{\m\n} \big\vert_{\t = 0^+} \quad & \quad \lim_{\r \to 0^+} \detau \, \hg_{\m\n}^{_{(\r)}} \big\vert_{\t = - \r} &= \detau \, \hg_{\m\n} \big\vert_{\t = 0^-} \quad .
\end{align}

\subsection{The Einstein tensor and the extrinsic curvature}
\label{appendix Einstein tensor extrinsic curvature}

Let us now consider the Einstein tensor built from the metric induced on the cod-1 brane. As we saw above, the only partial derivative of $\hg_{\m\n}^{_{(\r)}}$ of order $\leq 2$ which does not remain bounded on $\oW_{\!\!q}$ is $\de_{\t}^{2} \, \hg_{\m\n}^{_{(\r)}}$. Keeping ($\approx$) only the terms containing $\de_{\t}^{2}$, the Ricci tensor reads
\begin{align}
\hR^{_{(\r)}}_{\t\t} &\approx - \half \,\, \hg^{\la\s}_{_{(\r)}} \, \de_{\t}^{2} \, \hg_{\la\s}^{_{(\r)}} \\[2mm]
\hR^{_{(\r)}}_{\t\m} &\approx 0 \\[2mm]
\hR^{_{(\r)}}_{\m\n} &\approx - \half \,\, \de_{\t}^{2} \, \hg_{\m\n}^{_{(\r)}} \quad ,
\end{align}
from which it is easy to see that $\hG^{_{(\r)}}_{\t\t}$ and $\hG^{_{(\r)}}_{\t\m}$ remain bounded on $\oW_{\!\!q}$. Therefore, the only components of $\hG^{_{(\r)}}_{ab}$ which can diverge are the $\m\n$ ones.

Regarding the extrinsic curvature of the cod-1 brane, the expression (\ref{c-1 extrinsic curvature ok}) for our family of conf\mbox{}igurations reads, in cod-1 GNC, as follows
\begin{align}
\hK^{_{(\r)}}_{\t\t} &= n^{_{\! (\r)}}_{_{L}} \Big( \de_{\t}^{2} \hvf^{_{L}}_{^{\!(\r)}} + \G^{_{(\r) L}}_{_{AB}}\Big\rvert_{\hvf_{^{\!(\r)}}^{\cdot}} \de_{\t} \hvf^{_{A}}_{^{\!(\r)}} \, \de_{\t} \hvf^{_{B}}_{^{\!(\r)}} \Big) \label{c-1 extrinsic curvature GNC tt app} \\[2mm]
\hK^{_{(\r)}}_{\t\m} &= n^{_{\! (\r)}}_{_{L}} \Big( \de_{\t} \de_{\m} \hvf^{_{L}}_{^{\!(\r)}} + \G^{_{(\r) L}}_{_{AB}}\Big\rvert_{\hvf_{^{\!(\r)}}^{\cdot}} \, \de_{\t} \hvf^{_{A}}_{^{\!(\r)}} \, \de_{\m} \hvf^{_{B}}_{^{\!(\r)}} \Big) \label{c-1 extrinsic curvature GNC tm app} \\[2mm]
\hK^{_{(\r)}}_{\m\n} &= n^{_{\! (\r)}}_{_{L}} \Big( \demden \, \hvf^{_{L}}_{^{\!(\r)}} + \G^{_{(\r) L}}_{_{AB}}\Big\rvert_{\hvf_{^{\!(\r)}}^{\cdot}} \dem \hvf^{_{A}}_{^{\!(\r)}} \, \den \hvf^{_{B}}_{^{\!(\r)}} \Big) \quad . \label{c-1 extrinsic curvature GNC mn app}
\end{align}
Let's f\mbox{}irst of all note that the evaluation in $\hvf_{^{\!(\r)}}^{\cdot}$ of the bulk connection coef\mbox{}f\mbox{}icients $\G^{_{(\r) L}}_{_{AB}}$ remains bounded on $\oW_{\!\!q}$ when $\r \to 0^+$, as a consequence of our ansatz and of propositions \ref{proposition 1} and \ref{proposition 2} of appendix \ref{appendix convergence}. Moreover, by our ansatz the terms $\dem \hvf^{_{A}}_{^{\!(\r)}}$, $\de_{\t} \hvf^{_{A}}_{^{\!(\r)}}$, $\demden \, \hvf^{_{A}}_{^{\!(\r)}}$ and $\de_{\t} \de_{\m} \hvf^{_{A}}_{^{\!(\r)}}$ remain bounded on $\oW_{\!\!q}$. Therefore, among the terms in round parenthesis of (\ref{c-1 extrinsic curvature GNC tt app})--(\ref{c-1 extrinsic curvature GNC mn app}), the only term which can diverge is $\de_{\t}^{2} \hvf^{_{L}}_{^{\!(\r)}}$. On the other hand, the normal 1-form $n^{_{\! (\r)}}_{_{L}}$ is def\mbox{}ined (apart from an overall sign, which encodes the choice of orientation) by the relations
\begin{align}
\label{Mariana}
n^{_{\! (\r)}}_{_{L}} \, \de_{\t} \hvf^{_{L}}_{^{\!(\r)}} &= 0 & n^{_{\! (\r)}}_{_{L}} \, \de_{\m} \hvf^{_{L}}_{^{\!(\r)}} &= 0 & g_{_{(\r)}}^{_{AB}}\big\rvert_{\hvfd_{^{\!(\r)}}} \, n^{_{\! (\r)}}_{_{A}} \, n^{_{\! (\r)}}_{_{B}} &= 1 \quad ,
\end{align}
where $\de_{\t} \hvf^{_{L}}_{^{\!(\r)}}$, $\de_{\m} \hvf^{_{L}}_{^{\!(\r)}}$ and $g_{_{(\r)}}^{_{AB}}\big\rvert_{\hvfd_{^{\!(\r)}}}$ remain bounded on $\oW_{\!\!q}$. Therefore, the same holds for $n^{_{\! (\r)}}_{_{L}}$ as well. We conclude that $\hK^{_{(\r)}}_{\t\m}$ and $\hK^{_{(\r)}}_{\m\n}$ remain bounded on $\oW_{\!\!q}$ when $\r \to 0^+$, and that the only term in $\hK^{_{(\r)}}_{\t\t}$ which does not remain bounded is $n^{_{\! (\r)}}_{_{L}} \de_{\t}^{2} \hvf^{_{L}}_{^{\!(\r)}}$.

\section{Convergence properties and integration}
\label{appendix integration}

\begin{proposition}
Let $f_{_{\! (\r)}}$ and $h_{_{(\r)}}$ be two families of continuous functions $[-l , l] \to \mbbR$, such that $f_{_{\! (\r)}}$ converges to $f$ in the norm of $C^{0} \big( [-l , l] \big)$ and that the following limits exist
\begin{align}
&i) \, I_{h} = \lim_{\r \to 0^+} \int_{-\r}^{\r} \! d\t \, \, h_{_{(\r)}} \quad , \\[2mm]
&ii) \, I_{\abs{h}} = \lim_{\r \to 0^+} \int_{-\r}^{\r} \! d\t \, \, \babs{h_{_{(\r)}}} \quad . 
\end{align}
Then $\int_{-\r}^{\r} \! d\t \, f_{_{\! (\r)}} \, h_{_{(\r)}}$ has a limit for $\r \to 0^+$, and we have
\beq
\lim_{\r \to 0^+} \int_{-\r}^{\r} \! d\t \, \, f_{_{\! (\r)}} \, h_{_{(\r)}} = f(0) \, I_{h} \quad .
\eeq
\end{proposition}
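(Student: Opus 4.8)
The plan is to split off the contribution of $f(0)$ and absorb the remainder using the finiteness of $I_{\abs{h}}$. First I would write, for each $\r$,
\beq
\int_{-\r}^{\r} d\t \,\, f_{_{\! (\r)}} \, h_{_{(\r)}} = f(0) \int_{-\r}^{\r} d\t \,\, h_{_{(\r)}} + \int_{-\r}^{\r} d\t \,\, \big( f_{_{\! (\r)}} - f(0) \big) \, h_{_{(\r)}} \quad .
\eeq
By hypothesis $i)$ the first term on the right hand side converges to $f(0) \, I_{h}$ as $\r \to 0^+$, so everything reduces to showing that the second term vanishes in the limit.

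Second I would estimate the remainder by
\beq
\bbabs{\int_{-\r}^{\r} d\t \,\, \big( f_{_{\! (\r)}} - f(0) \big) \, h_{_{(\r)}}} \leq \Big( \sup_{\abs{\t} \leq \r} \babs{f_{_{\! (\r)}}(\t) - f(0)} \Big) \int_{-\r}^{\r} d\t \,\, \babs{h_{_{(\r)}}} \quad ,
\eeq
where the integral on the right is bounded for $\r$ small, since by hypothesis $ii)$ it converges to $I_{\abs{h}}$. For the supremum I would use the triangle inequality $\babs{f_{_{\! (\r)}}(\t) - f(0)} \leq \babs{f_{_{\! (\r)}}(\t) - f(\t)} + \babs{f(\t) - f(0)}$: the first summand is bounded by $\norm{f_{_{\! (\r)}} - f}_{C^{0}([-l,l])}$, which tends to zero by assumption, while the second is controlled by the continuity of $f$ at $\t = 0$ (note $f$, being the uniform limit of continuous functions, is continuous). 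Concretely, given $\vep > 0$ one picks $\d > 0$ so that $\abs{\t} < \d$ implies $\abs{f(\t) - f(0)} < \vep$; then for $\r < \d$ and $\abs{\t} \leq \r$ one gets $\sup_{\abs{\t} \leq \r} \babs{f_{_{\! (\r)}}(\t) - f(0)} \leq \norm{f_{_{\! (\r)}} - f}_{C^{0}([-l,l])} + \vep < 2\vep$ for $\r$ small enough.

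Combining the two estimates, for all sufficiently small $\r$ the remainder is bounded by $2 \vep \, \big( I_{\abs{h}} + 1 \big)$; since $\vep$ is arbitrary, the remainder tends to zero and the claimed identity follows. The only mildly delicate point is the simultaneous control of the two ways $\r$ enters $\sup_{\abs{\t} \leq \r} \babs{f_{_{\! (\r)}}(\t) - f(0)}$ — through the family index and through the shrinking interval — but inserting $f(\t)$ decouples these effects cleanly, so there is no real obstacle. It is worth emphasizing that hypothesis $ii)$ is essential: it is precisely what rules out a ``disappearing divergence'' of $h_{_{(\r)}}$ that would survive multiplication by $f_{_{\! (\r)}} - f(0)$ and thus spoil the conclusion.
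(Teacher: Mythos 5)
Your proof is correct and follows essentially the same route as the paper: the same insertion of $f(\t)$ in the triangle inequality $\abs{f_{_{\! (\r)}}(\t) - f(0)} \leq \abs{f_{_{\! (\r)}}(\t) - f(\t)} + \abs{f(\t) - f(0)}$, controlled by uniform convergence and continuity of $f$ at $0$, with hypothesis $ii)$ bounding $\int_{-\r}^{\r}\abs{h_{_{(\r)}}}$. The only (cosmetic) difference is that by pulling $f(0)$ out of the integral at the start you avoid the paper's separate treatment of the cases $f(0)=0$ and $f(0)\neq 0$, which there is only needed because an $\ep/\abs{f(0)}$ appears in one estimate.
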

\begin{proof}
Let's f\mbox{}ix $\ep > 0$, and consider separately the subcases $f(0) = 0$ and $f(0) \neq 0$.

\noi 1) Be $f(0) = 0$. We have 
\begin{align}
\bbabs{\int_{-\r}^{\r} \! d\t \, f_{_{\! (\r)}} \, h_{_{(\r)}}} &\leq \int_{-\r}^{\r} \! d\t \, \, \babs{ f_{_{\! (\r)}}(\t) - f (0)} \, \babs{h_{_{(\r)}}(\t)} \leq \nn \\
&\leq \int_{-\r}^{\r} \! d\t \, \, \Big( \babs{ f_{_{\! (\r)}}(\t) - f (\t)} + \babs{ f(\t) - f (0)} \Big) \, \babs{h_{_{(\r)}}(\t)} \quad .
\end{align}
Since $f_{_{\! (\r)}} \to f$ uniformly, $\exists \, \d_1 > 0$ such that $0 < \r < \d_1 \Rightarrow \babs{ f_{_{\! (\r)}}(\t) - f (\t)} < \ep/2 \big( I_{\abs{h}} + 1 \big)$, $\forall \t \in [-l, l \, ]$. Since $\big( C^{0}([-l , l]), \norm{\phantom{f}}_{C^{0}([-l , l])} \big)$ is a Banach space, $f$ is continuous. Therefore, $\exists \, \d_2 > 0$ such that $\abs{\t} < \d_2 \Rightarrow \babs{ f(\t) - f (0)} < \ep/2 \big( I_{\abs{h}} + 1 \big)$. On the other hand, the hypothesis $ii)$ implies that $\exists \, \d_3 > 0$ such that $0 < \r < \d_3 \Rightarrow \int_{- \r}^{\r} d\t \, \abs{h_{_{(\r)}}} < (I_{\abs{h}} + 1)$. Calling $\d = \min \{ \d_1 , \d_2 , \d_3 \}$ we have that $0 < \r < \d$ implies
\beq
\bbabs{\int_{-\r}^{\r} \! d\t \, f_{_{\! (\r)}} \, h_{_{(\r)}}} \leq \int_{-\r}^{\r} \! d\t \,\, \frac{\ep}{I_{\abs{h}} + 1} \,\, \babs{h_{_{(\r)}}(\t)} < \ep \quad .
\eeq

\noi 2) Be $f(0) \neq 0$. We have
\begin{multline}
\bbabs{\int_{-\r}^{\r} \! d\t \, \, f_{_{\! (\r)}} \, h_{_{(\r)}} - f(0) \, I_{h}} \leq \\
\leq \bbabs{\int_{-\r}^{\r} \! d\t \, \, \Big( f_{_{\! (\r)}}(\t) - f (0)\Big) \, h_{_{(\r)}}(\t)} + \bbabs{\int_{-\r}^{\r} \! d\t \, \, f(0) \, h_{_{(\r)}}(\t) - f(0) \, I_{h}} \leq \\
\leq \int_{-\r}^{\r} \! d\t \, \, \babs{ f_{_{\! (\r)}}(\t) - f (0)} \, \babs{h_{_{(\r)}}(\t)} + \bbabs{ \, f(0) \int_{-\r}^{\r} \! d\t \, \Big( h_{_{(\r)}}(\t) - I_{h} \Big) } \leq \\
\leq \int_{-\r}^{\r} \! d\t \, \, \Big( \babs{ f_{_{\! (\r)}}(\t) - f (\t)} + \babs{ f(\t) - f (0)} \Big) \, \babs{h_{_{(\r)}}(\t)} + \babs{f(0)} \bbabs{\int_{-\r}^{\r} \! d\t \, \Big( h_{_{(\r)}}(\t) - I_{h} \Big)} \,\, . \label{important inequality}
\end{multline}
Similarly to the case 1), we can f\mbox{}ind a $\d^{\p} > 0$ such that $0 < \r < \d^{\p}$ implies 
\beq
\int_{-\r}^{\r} \! d\t \, \, \Big( \babs{ f_{_{\! (\r)}}(\t) - f (\t)} + \babs{ f(\t) - f (0)} \Big) \, \babs{h_{_{(\r)}}(\t)} < \frac{2 \, \ep}{3} \quad .
\eeq
On the other hand, the hypothesis $i)$ implies that $\exists \, \d_4 > 0$ such that $0 < \r < \d_4 \Rightarrow$
\beq
\bbabs{\int_{-\r}^{\r} \! d\t \,\, h_{_{(\r)}}(\t) - I_{h}} < \frac{\ep}{3 \, \abs{f(0)}} \quad . 
\eeq
Calling $\d = \min \{ \d^{\p}, \d_4 \}$, by the inequality (\ref{important inequality}) we have that $0 < \r < \d$ implies
\beq
\bbabs{\int_{-\r}^{\r} \! d\t \, \, f_{_{\! (\r)}} \, h_{_{(\r)}} - f(0) \, I_{h}} < \frac{2 \, \ep}{3} + \frac{\ep}{3} = \ep \quad .
\eeq
\end{proof}

\section{The auxiliary vector f\mbox{}ields}
\label{appendix auxiliary fields}

Let's consider the vector f\mbox{}ields
\begin{align}
u_{^{[\m]}}^{_{\! (\r) A}}(\t) &= \de_{\m} \,\hvf^{_{A}}_{^{(\r)}}\big\rvert_{\chd} & u_{^{[\m]}}^{_{A}}(\t) &= \de_{\m} \,\hvf^{_{A}}\big\rvert_{\chd} \quad ,
\end{align}
where it is intended that we pick a $\chd \in \mbbR^4$ and we work at $\chd$ f\mbox{}ixed. By our ansatz, $\mbf{u}_{^{[\m]}}^{_{\! (\r)}}(\t)$ is smooth and converges uniformly to $\mbf{u}_{_{[\m]}}(\t)$ which is continuous on $[-l,l]$ and smooth on $[-l,l] \setminus \{ 0 \}$. The convergence properties we impose in section \ref{Convergence properties} imply that the families $\detau u_{^{[\m]}}^{_{\! (\r) A}}(\t)$ remain bounded.

\subsection{Pseudo-orthogonalizing the vector f\mbox{}ields}

As a f\mbox{}irst step towards def\mbox{}ining the auxiliary vector f\mbox{}ields, we want to orthogonalize the vector f\mbox{}ields $\mbf{u}_{^{[\m]}}^{_{\! (\r)}}$ and $\mbf{u}_{_{[\m]}}$. This procedure, which would be completely straightforward in a Riemannian space, presents some subtleties in a pseudo-Riemannian space. In fact, the Gram-Schmidt orthogonalization procedure does not necessarily work in the latter case, since for some value of $\m$ it may generate a null vector which does not allow to carry on further the procedure. However, as we show below, it is always possible to perform a linear transformation on $\mbf{u}_{^{[\m]}}^{_{\! (\r)}}$ and $\mbf{u}_{_{[\m]}}$, such that the Gram-Schmidt procedure on the transformed vector f\mbox{}ields is well-def\mbox{}ined in a neighbourhood of $\t = 0$ for $\r$ small enough.

The 4D vector subspace generated by the vectors $\mbf{u}_{_{[\m]}}(0)$ is endowed with a non-degenerate, symmetric bilinear form of signature $(3,1)$ induced by the metric $\mbf{g}\big\rvert_{\vfd (0)}$, which can be diagonalized by a general theorem. This implies that we can f\mbox{}ind a $4 \times 4$ matrix $\mcal{D}$ such that the vectors
\beq
\mbf{v}_{_{\! [\m]}} = \mcal{D}_{\m}^{\,\,\, \n} \,\, \mbf{u}_{_{[\n]}}(0)
\eeq
are pseudo-orthonormal with respect to the metric $\mbf{g}\big\rvert_{\vfd (0)}$.\footnote{Where ``pseudo-'' refers to the fact that one of the vectors $\mbf{v}$ has negative norm.} We now def\mbox{}ine the vector f\mbox{}ields
\beq
\mbf{v}_{_{\! [\m]}}(\t) = \mcal{D}_{\m}^{\,\,\, \n} \,\, \mbf{u}_{_{[\n]}}(\t) \quad ,
\eeq
and from them we def\mbox{}ine the vector f\mbox{}ields $\mbf{w}_{_{\! [\m]}}(\t)$ by using the Gram-Schmidt orthogonalization procedure
\begin{align}
\mbf{w}_{_{\! [0]}}(\t) &= \mbf{v}_{_{\! [0]}}(\t) \label{GramSchmidt 0} \\[2mm]
\mbf{w}_{_{\! [\m]}}(\t) &= \mbf{v}_{_{\! [\m]}}(\t) - \sum_{i = 0}^{\m - 1} \, \frac{\big\langle \mbf{w}_{_{\! [i]}} , \mbf{v}_{_{\! [\m]}} \big\rangle (\t)}{\big\langle \mbf{w}_{_{\! [i]}} , \mbf{w}_{_{\! [i]}} \big\rangle (\t)} \,\, \, \mbf{w}_{_{\! [i]}}(\t) \label{GramSchmidt mu}
\end{align}
where we introduced the notation
\beq
\big\langle \mbf{a} , \mbf{b} \big\rangle (\t) = a^{_{M}}(\t) \,\, b^{_{N}}(\t) \,\, g_{_{MN}}\Big\rvert_{\vfd (\t)} \quad .
\eeq
It shoud be clear that $\mbf{w}_{_{\! [\m]}}(0) = \mbf{v}_{_{\! [\m]}}(0)$, since the latter vectors are already pseudo-orthonormal, however in general $\mbf{w}_{_{\! [\m]}}(\t) \neq \mbf{v}_{_{\! [\m]}}(\t)$ for $\t \neq 0$ (the matrix $\mcal{D}$ does not depend on $\t$).

To show that $\mbf{w}_{_{\! [\m]}}(\t)$ are well-def\mbox{}ined in a neighbourhood of $\t = 0$, let us f\mbox{}ix $\ep$ with $0< \ep < 1$ and note that $\mbf{u}_{_{[\m]}}(\t)$ are continuous by our ansatz, therefore so are $\mbf{v}_{_{\! [\m]}}(\t)$. Since $g_{_{MN}}\big\rvert_{\vfd(\t)}$ is continuous by our ansatz, $\babs{\big\langle \mbf{w}_{_{\! [0]}}, \mbf{w}_{_{\! [0]}} \big\rangle (\t)}$ is as well. Now, by construction we have that $\babs{\big\langle \mbf{w}_{_{\! [0]}}, \mbf{w}_{_{\! [0]}} \big\rangle (0)} = 1$, and so there exist a $d_{_{0}} > 0$ such that
\beq
\babs{\big\langle \mbf{w}_{_{\! [0]}}, \mbf{w}_{_{\! [0]}} \big\rangle (\t)} \geq \ep \quad \text{for} \quad \t \in [-d_{_{0}}, d_{_{0}}] \quad .
\eeq
This implies that $\mbf{w}_{_{\! [1]}}(\t)$ is well-def\mbox{}ined and continuous on $[-d_{_{0}}, d_{_{0}}]$. Since $\babs{\big\langle \mbf{w}_{_{\! [1]}}, \mbf{w}_{_{\! [1]}} \big\rangle (0)} = 1$ by construction, there exist a $d_{_{1}}$, with $0 < d_{_{1}} \leq d_{_{0}}$, such that $\babs{\big\langle \mbf{w}_{_{\! [1]}}, \mbf{w}_{_{\! [1]}} \big\rangle (\t)} \geq \ep$ for $\t \in [-d_{_{1}}, d_{_{1}}]$. 
It is clear that we can iterate this process, f\mbox{}inding $\{ d_{_{0}}, d_{_{1}}, d_{_{2}}, d_{_{3}} \}$ with $0 < d_{_{3}} \leq d_{_{2}} \leq d_{_{1}} \leq d_{_{0}}$ such that the f\mbox{}ields $\{ \mbf{w}_{_{\! [\m]}}(\t) \}_{_{\m}}$ are well-def\mbox{}ined on $[-d_{_{3}}, d_{_{3}}]$ and
\beq
\babs{\big\langle \mbf{w}_{_{\! [\m]}}, \mbf{w}_{_{\! [\m]}} \big\rangle (\t)} \geq \ep \quad \text{for} \quad \t \in [-d_{_{3}}, d_{_{3}}] \quad \text{and} \quad \m = 0, \ldots , 3 \quad .
\eeq
It is then easy to prove that the f\mbox{}ields $\{ \mbf{w}_{_{\! [\m]}}(\t) \}_{_{\m}}$ are continuous on $[-d_{_{3}},d_{_{3}}]$ and smooth on $[-d_{_{3}} , d_{_{3}}] \setminus \{ 0 \}$, exactly as the f\mbox{}ields $\{ \mbf{u}_{_{\! [\m]}}(\t) \}_{_{\m}}$.

Let us now def\mbox{}ine the vector f\mbox{}ields
\beq
\mbf{v}_{_{\! [\m]}}^{_{(\r)}}(\t) = \mcal{D}_{\m}^{\,\,\, \n} \,\, \mbf{u}_{_{[\n]}}^{_{(\r)}}(\t) \quad ,
\eeq
and from them we def\mbox{}ine the vector f\mbox{}ields $\mbf{w}_{_{\! [\m]}}^{_{(\r)}}(\t)$ by using the Gram-Schmidt orthogonalization procedure
\begin{align}
\mbf{w}_{_{\! [0]}}^{_{(\r)}}(\t) &= \mbf{v}_{_{\! [0]}}^{_{(\r)}}(\t) \label{GramSchmidt k 0} \\[2mm]
\mbf{w}_{_{\! [\m]}}^{_{(\r)}}(\t) &= \mbf{v}_{_{\! [\m]}}^{_{(\r)}}(\t) - \sum_{i = 0}^{\m - 1} \, \frac{\big\langle \mbf{w}_{_{\! [i]}}^{_{(\r)}} , \mbf{v}_{_{\! [\m]}}^{_{(\r)}} \big\rangle_{_{\!\! \r}} (\t)}{\big\langle \mbf{w}_{_{\! [i]}}^{_{(\r)}} , \mbf{w}_{_{\! [i]}}^{_{(\r)}} \big\rangle_{_{\!\! \r}} (\t)} \,\, \, \mbf{w}_{_{\! [i]}}^{_{(\r)}}(\t) \label{GramSchmidt k mu} \quad ,
\end{align}
where we introduced the notation
\beq
\big\langle \mbf{a} , \mbf{b} \big\rangle_{_{\!\! \r}} (\t) = a^{_{M}}(\t) \,\, b^{_{N}}(\t) \,\, g_{_{MN}}^{_{(\r)}}\Big\rvert_{\vfd_{(\r)}(\t)} \quad .
\eeq
By our ansatz, the smooth f\mbox{}ields $\mbf{u}_{^{[\m]}}^{_{\! (\r)}}(\t)$ converge uniformly in $[-l, l \, ]$ to the f\mbox{}ields $\mbf{u}_{_{[\m]}}(\t)$, with bounded $\de_{\t}$ derivative. It is immediate to see that the same holds for $\mbf{v}_{_{\! [\m]}}^{_{(\r)}}(\t) \to \mbf{v}_{_{\! [\m]}}(\t)$. The propositions \ref{proposition 1} and \ref{proposition 2} of appendix \ref{appendix convergence} then imply that $\babs{\big\langle \mbf{v}_{_{\! [\m]}}^{_{(\r)}} ,\mbf{v}_{_{\! [\m]}}^{_{(\r)}} \big\rangle_{_{\!\!\r}} (\t)}$ converges uniformly to $\babs{\big\langle \mbf{v}_{_{\! [\m]}} ,\mbf{v}_{_{\! [\m]}} \big\rangle (\t)}$ in $[-l, l \, ]$ for $\m = 0 , \ldots , 3 \,$. It follows that there exist a $\d_{_{0}} > 0$ such that
\beq
0 < \r \leq \d_{_{0}} \,\, \Rightarrow \,\, \Babs{\big\langle \mbf{w}_{_{\! [0]}}^{_{(\r)}} ,\mbf{w}_{_{\! [0]}}^{_{(\r)}} \big\rangle_{_{\!\! \r}} (\t)} \geq \frac{\ep}{2} \qquad \text{for} \quad \t \in [- d_{_{0}}, d_{_{0}}] \quad ,
\eeq
which in turn implies that $\mbf{w}_{_{\! [1]}}^{_{(\r)}}(\t)$ is well-def\mbox{}ined and continuous on $[- d_{_{0}}, d_{_{0}}]$ for $\r \leq \d_{_{0}}$. Moreover, taking into account our ansatz, it can be seen that $\mbf{w}_{_{\! [1]}}^{_{(\r)}}(\t)$ is in fact derivable and has bounded derivative in $[- d_{_{0}}, d_{_{0}}]$ for $\r \leq \d_{_{0}}$. Furthermore, by propositions \ref{proposition 1}, \ref{proposition 2} and \ref{proposition 3} of appendix \ref{appendix convergence} it follows that $\mbf{w}_{_{\! [1]}}^{_{(\r)}}(\t)$ converges uniformly in $[- d_{_{0}}, d_{_{0}}]$ to $\mbf{w}_{_{\! [1]}}(\t) \,$. Also in this case we can iterate the procedure, f\mbox{}inding positive numbers $\d_{_{3}} \leq \d_{_{2}} \leq \d_{_{1}} \leq \d_{_{0}}$ such that
\beq
0 < \r \leq \d_{_{3}} \,\, \Rightarrow \,\, \Babs{\big\langle \mbf{w}_{_{\! [\m]}}^{_{(\r)}} ,\mbf{w}_{_{\! [\m]}}^{_{(\r)}} \big\rangle_{_{\!\! \r}} (\t)} \geq \frac{\ep}{2} \qquad \text{for} \quad \t \in [- d_{_{3}}, d_{_{3}}] \quad \text{and} \quad \m = 0, \ldots , 3 \quad ,
\eeq
which implies that the f\mbox{}ields $\mbf{w}_{_{\! [\m]}}^{_{(\r)}}(\t)$ are well-def\mbox{}ined in $[- d_{_{3}}, d_{_{3}}]$ for $\r \leq \d_{_{3}}$. It is again easy to prove that the f\mbox{}ields $\mbf{w}_{_{\! [\m]}}^{_{(\r)}}(\t)$, which as we mentioned above converge uniformly to the f\mbox{}ields $\mbf{w}_{_{\! [\m]}}(\t)$, are smooth in $[- d_{_{3}}, d_{_{3}}]$ for $\r \leq \d_{_{3}}$ and their f\mbox{}irst derivative remains bounded.

\subsection{Constructing the auxiliary f\mbox{}ields}
\label{Constructing the auxiliary fields}

We can now use the f\mbox{}ields $\mbf{w}_{_{\! [\m]}}(\t)$ and the families $\mbf{w}_{_{\! [\m]}}^{_{(\r)}}(\t)$ to construct the auxiliary vector f\mbox{}ields. Let's complete the set $\{ \mbf{w}_{_{\! [\m]}}(0) \}_{_{\m}}$ to a pseudo-orthonormal basis by means of two vectors $\mbf{r}$ and $\mbf{s}$ (which we remark are independent of $\t$), and def\mbox{}ine
\begin{align}
\mbf{Y}(\t) &= \mbf{r} - \sum_{i = 0}^{3} \, \frac{\big\langle \mbf{w}_{_{\! [i]}} , \mbf{r} \big\rangle (\t)}{\big\langle \mbf{w}_{_{\! [i]}} , \mbf{w}_{_{\! [i]}} \big\rangle (\t)} \,\, \, \mbf{w}_{_{\! [i]}}(\t) \label{def Y} \\[2mm]
\mbf{Y}_{_{\!\! (\r)}}(\t) &= \mbf{r} - \sum_{i = 0}^{3} \, \frac{\big\langle \mbf{w}_{_{\! [i]}}^{_{(\r)}} , \mbf{r} \big\rangle_{_{\!\! \r}} (\t)}{\big\langle \mbf{w}_{_{\! [i]}}^{_{(\r)}} , \mbf{w}_{_{\! [i]}}^{_{(\r)}} \big\rangle_{_{\!\! \r}} (\t)} \,\, \, \mbf{w}_{_{\! [i]}}^{_{(\r)}}(\t) \label{def Yk} \quad .
\end{align}
It is obvious from the construction above that the vector f\mbox{}ields $\mbf{Y}(\t)$ and $\mbf{Y}_{_{\!\! (\r)}}(\t)$ are well def\mbox{}ined on $[- d_{_{3}}, d_{_{3}} ]$ for $\r \leq \d_{_{3}}$. Performing an analysis analogous to the one performed above, we conclude that there exists a $d_{_{4}} > 0$ and a $\d_{_{4}} > 0$, with $d_{_{4}} \leq d_{_{3}}$ and $\d_{_{4}} \leq \d_{_{3}}$, such that
\beq
\big\langle \mbf{Y} , \mbf{Y} \big\rangle (\t) \geq \ep \quad \text{and} \quad \big\langle \mbf{Y}_{_{\!\! (\r)}} , \mbf{Y}_{_{\!\! (\r)}} \big\rangle_{_{\!\! \r}} (\t) \geq \ep/2 \quad \text{for} \quad \t \in [-d_{_{4}}, d_{_{4}}] \quad \text{and} \quad \r \leq \d_{_{4}} \quad .
\eeq
This in turn allows to construct 
\begin{align}
\mbf{Z}(\t) &= \mbf{s} - \sum_{i = 0}^{3} \, \frac{\big\langle \mbf{w}_{_{\! [i]}} , \mbf{s} \big\rangle (\t)}{\big\langle \mbf{w}_{_{\! [i]}} , \mbf{w}_{_{\! [i]}} \big\rangle (\t)} \,\, \, \mbf{w}_{_{\! [i]}}(\t) - \frac{\big\langle \mbf{Y} , \mbf{s} \big\rangle (\t)}{\big\langle \mbf{Y} , \mbf{Y} \big\rangle (\t)} \,\, \, \mbf{Y}(\t) \label{def Z} \\[2mm]
\mbf{Z}_{_{(\r)}}(\t) &= \mbf{s} - \sum_{i = 0}^{3} \, \frac{\big\langle \mbf{w}_{_{\! [i]}}^{_{(\r)}} , \mbf{s} \big\rangle_{_{\!\! \r}} (\t)}{\big\langle \mbf{w}_{_{\! [i]}}^{_{(\r)}} , \mbf{w}_{_{\! [i]}}^{_{(\r)}} \big\rangle_{_{\!\! \r}} (\t)} \,\, \, \mbf{w}_{_{\! [i]}}^{_{(\r)}}(\t) - \frac{\big\langle \mbf{Y}_{_{\!\! (\r)}} , \mbf{s} \big\rangle_{_{\!\! \r}} (\t)}{\big\langle \mbf{Y}_{_{\!\! (\r)}} , \mbf{Y}_{_{\!\! (\r)}} \big\rangle_{_{\!\! \r}} (\t)} \,\, \, \mbf{Y}_{_{\!\! (\r)}}(\t) \label{def Zk} \quad ,
\end{align}
and again we can f\mbox{}ind a $d > 0$ and a $\bd > 0$, with $d \leq d_{_{4}}$ and $\bd \leq \d_{_{4}}$, such that 
\beq
\big\langle \mbf{Z} , \mbf{Z} \big\rangle (\t) \geq \ep \quad \text{and} \quad \big\langle \mbf{Z}_{_{(\r)}} , \mbf{Z}_{_{(\r)}} \big\rangle_{_{\!\! \r}} (\t) \geq \ep/2 \quad \text{for} \quad \t \in [-d, d \, ] \quad \text{and} \quad \r \leq \bd \quad .
\eeq
It is easy to see that the vector f\mbox{}ields $\mbf{Y}_{_{\!\! (\r)}}$ and $\mbf{Z}_{_{\! (\r)}}$ are smooth on $[-d, d \, ]$, that the vector f\mbox{}ields $\mbf{Y}$ and $\mbf{Z}$ are continous on $[-d, d \, ]$ and smooth on $[-d, d \, ] \setminus \{ 0 \}$, and that the families $\{ \mbf{Y}_{_{\!\! (\r)}} \}_{_{\! \r}}$ and $\{ \mbf{Z}_{_{\! (\r)}} \}_{_{\! \r}}$, by the propositions \ref{proposition 1}, \ref{proposition 2} and \ref{proposition 3} of appendix \ref{appendix convergence}, converge uniformly and with bounded derivative respectively to $\mbf{Y}$ and $\mbf{Z} \,$.

Finally, we introduce
\begin{align}
\mbf{y}(\t) &= \frac{\mbf{Y}(\t)}{\sqrt{\babs{\big\langle \mbf{Y} , \mbf{Y} \big\rangle (\t)}}} & \mbf{z}(\t) &= \frac{\mbf{Z}(\t)}{\sqrt{\babs{\big\langle \mbf{Z} , \mbf{Z} \big\rangle (\t)}}} \label{def y and z} \\[2mm]
\mbf{y}_{_{\!\! (\r)}}(\t) &= \frac{\mbf{Y}_{_{\!\! (\r)}}(\t)}{\sqrt{\babs{\big\langle \mbf{Y}_{_{\!\! (\r)}} , \mbf{Y}_{_{\!\! (\r)}} \big\rangle_{_{\!\!\r}} (\t)}}} & \mbf{z}_{_{(\r)}}(\t) &= \frac{\mbf{Z}_{_{(\r)}}(\t)}{\sqrt{\babs{\big\langle \mbf{Z}_{_{(\r)}} , \mbf{Z}_{_{(\r)}} \big\rangle_{_{\!\!\r}} (\t)}}} \label{def yk and zk} 
\end{align}
which are clearly well-def\mbox{}ined for $\t \in [-d, d \, ]$ and $\r \leq \bd \,$. By construction, $\mbf{y}(\t)$ and $\mbf{z}(\t)$ are orthonormal, and they are both orthogonal to the vector f\mbox{}ields $\{ \mbf{u}_{_{[\m]}}(\t) \}_{\m}$. Analogously, for every $\r \leq \bd$ the vector f\mbox{}ields $\mbf{y}_{_{\!\! (\r)}}(\t)$ and $\mbf{z}_{_{(\r)}}(\t)$ are orthonormal, and they are both orthogonal to the vector f\mbox{}ields $\{ \mbf{u}_{_{[\m]}}^{_{\! (\r)}}(\t) \}_{\m}$. Again, it can be seen that $\mbf{y}(\t)$ and $\mbf{z}(\t)$ are continuous on $[-d, d \, ]$ and smooth on $[-d , d \, ] \setminus \{ 0 \}$, and that $\mbf{y}_{_{\!\! (\r)}}(\t)$ and $\mbf{z}_{_{(\r)}}(\t)$ are smooth on $[-d, d \, ]$ for $\r \leq \bd$ and converge uniformly with bounded derivative respectively to $\mbf{y}(\t)$ and $\mbf{z}(\t)$ on $[-d, d \, ] \,$. Therefore, the families $\{ \mbf{y}_{_{\!\! (\r)}} \}_{_{\! \r}}$, $\{ \mbf{z}_{_{(\r)}} \}_{_{\! \r}}$ and the limit conf\mbox{}igurations $\mbf{y}$, $\mbf{z}$ satisfy the required properties of the auxiliary vector f\mbox{}ields.

Note that the choice of auxiliary vector f\mbox{}ields is not unique. For example, performing a rigid 2D rotation (that is, independent of $\t$ and $\r$) on the frames $\big\{ \big(\mbf{y}_{_{\!\! (\r)}}(\t), \mbf{z}_{_{(\r)}}(\t) \big) \big\}_{_{\!\r}}$ and $\big( \mbf{y}(\t) , \mbf{z}(\t) \big)$ we obtain another couple of families and limit conf\mbox{}igurations which have exactly the same properties of the old ones, and therefore as equally good as auxiliary vector f\mbox{}ields.

\section{The slope function}
\label{appendix slope function}

We discuss here the properties of the functions $S_{_{\! (\r)}}$ and $S$, introduced in section \ref{The slope function}. As we do in the main text, we drop the dependence on $\chd$. By def\mbox{}inition $S_{_{\! (\r)}}$ and $S$ satisfy the relations (\ref{auxiliary 1})--(\ref{auxiliary 2 lim}), from which, using the orthonormality of the auxiliary f\mbox{}ields, it is easy to get
\begin{align}
\cos \big(S (\t)\big) &= \big\langle \mbf{y} , \mbfscn \big\rangle (\t) & \sin \big(S (\t)\big) &= \big\langle \mbf{z} , \mbfscn \big\rangle (\t) \label{cos S sin S} \\[3mm]
\cos \big(S_{_{\! (\r)}} (\t)\big) &= \big\langle \mbf{y}_{_{\!\! (\r)}} , \mbfscn_{_{\! (\r)}} \big\rangle_{_{\!\! \r}} (\t) & \sin \big(S_{_{\! (\r)}} (\t)\big) &= \big\langle \mbf{z}_{_{(\r)}} , \mbfscn_{_{\! (\r)}} \big\rangle_{_{\!\! \r}} (\t) \label{cos Sk sin Sk} \quad ,
\end{align}
where the relations (\ref{cos S sin S}) hold for $\t \in [-d , d \, ] \setminus \{0\}$ while the relations (\ref{cos Sk sin Sk}) hold for $\t \in [-d , d \, ]$.

\subsection{Existence of smooth solutions}
\label{Existence of smooth solutions}

It is clear that the equations (\ref{cos S sin S}) and (\ref{cos Sk sin Sk}) always admit solutions, since we have 
\beq
\label{Lucia 2}
\Big( \big\langle \mbf{y} , \mbfscn \big\rangle (\t) \Big)^2 + \Big( \big\langle \mbf{z} , \mbfscn \big\rangle (\t) \Big)^2 = 1
\eeq
as a consequence of $\mbfscn$ being normalized and $\big( \mbf{y} , \mbf{z} \big)$ being orthonormal, and analogously we have
\beq
\label{Lucia}
\Big( \big\langle \mbf{y}_{_{\!\! (\r)}} , \mbfscn_{_{\! (\r)}} \big\rangle_{_{\!\! \r}} (\t) \Big)^2 + \Big( \big\langle \mbf{z}_{_{(\r)}} , \mbfscn_{_{\! (\r)}} \big\rangle_{_{\!\! \r}} (\t) \Big)^2 = 1 \quad .
\eeq
For example, f\mbox{}ixing $\t \in [-d , d \, ] \setminus \{0\} \,$, the solutions of (\ref{cos S sin S}) in $\t$ are of the form
\beq
\z\vert_{\t} = \bz\vert_{\t} + 2 \pi \, \bk \quad ,
\eeq
with $\bk \in \mbbZ \,$, where $\bz\vert_{\t}$ is a particular solution of (\ref{cos S sin S}) in $\t \,$. An analogous statement holds for the solutions of (\ref{cos Sk sin Sk}) at a f\mbox{}ixed $\t \in [-d , d \, ] \,$. The point is that the inf\mbox{}initely many functions $S(\t)$ and $S_{_{\! (\r)}}(\t)$ which solve (\ref{cos S sin S})--(\ref{cos Sk sin Sk}) can be very irregular (they even can be nowhere continuous). On the other hand, for our purpouses it is crucial the functions $S_{_{\! (\r)}}(\t)$ to be derivable, otherwise we could not use the relation (\ref{Heydouga}) to obtain the f\mbox{}irst equality in (\ref{integral S}), 
and the family $\big\{ S_{_{\! (\r)}} \big\}_{_{\!\! \r}}$ to converge uniformly to $S$ in $\{ [-d , -\r \,] \cup [\, \r , d \,] \}_{_{\! \r}}\,$, in order to obtain the second equality in (\ref{integral S}).

Let us see f\mbox{}irst that it is indeed possible to f\mbox{}ind smooth solutions of (\ref{cos S sin S}) in $[-d , d \, ] \setminus \{0\}$ and of (\ref{cos Sk sin Sk}) in $[-d , d \, ]$. Def\mbox{}ining the functions
\begin{align}
f^{(c)}(\z, \t) &= \cos \z - \big\langle \mbf{y} , \mbfscn \big\rangle (\t) & f^{(s)}(\z, \t) &= \sin \z - \big\langle \mbf{z} , \mbfscn \big\rangle (\t) \label{fc and fs}
\end{align}
and letting $\z_{_{-}}$ be a solution of (\ref{cos S sin S}) in $\t = -d$, the implicit function theorem (see, for example, \cite{Rudin:MathAnalysis}) assures that there exist a unique smooth solution of (\ref{cos S sin S}) in a neighbourhood of $\t = -d$ which assumes the value $\z_{_{-}}$ in $\t = -d$. Since $\de_{\z} \, f^{(c)} = - \sin \z$ and $\de_{\z} \, f^{(s)} = \cos \z$ are never both vanishing, we can repeatedly use the implicit function theorem (if necessary) to extend the solution until $\t = 0$ (excluded), where $\mbfscn$ is not def\mbox{}ined. We then obtain a unique smooth solution $\s_{_{\! -}}: [-d, 0) \to \mbbR$ of (\ref{cos S sin S}) which satisf\mbox{}ies $\s_{_{\! -}}(-d) = \z_{_{-}}$. Similarly, letting $\z_{_{+}}$ be a solution of (\ref{cos S sin S}) in $\t = d$, we obtain a unique smooth solution $\s_{_{\! +}}: (0, d \, ] \to \mbbR$ of (\ref{cos S sin S}) which satisf\mbox{}ies $\s_{_{\! +}}(d) = \z_{_{+}}$.

Note that, by our ansatz, the right hand sides of (\ref{cos S sin S}) have f\mbox{}inite limits for $\t \to 0^-$ and for $\t \to 0^+$. Since $\cos$ and $\sin$ are continuous functions, it follows that $\s_{_{\! -}}$ has a f\mbox{}inite limit for $\t \to 0^-$, and that $\s_{_{\! +}}$ has a f\mbox{}inite limit for $\t \to 0^+$, and indeed
\begin{align}
\label{Layla Rose}
\cos \Big(\lim_{\t \to 0^{\pm}} \s_{_{\! \pm}} \Big) &= \big\langle \mbf{y}(0) , \mbfscn_{_{\pm}} \big\rangle & \sin \Big( \lim_{\t \to 0^{\pm}} \s_{_{\! \pm}} \Big) &= \big\langle \mbf{z}(0) , \mbfscn_{_{\pm}} \big\rangle \quad .
\end{align}
Due to the discontinuity of $\mbfscn$ in $\t = 0$, in general we have $\lim_{\t \to 0^{-}} \s_{_{\! -}} \neq \lim_{\t \to 0^{+}} \s_{_{\! +}}$ for any choice of $\z_{_{-}}$ and $\z_{_{+}}$. Moreover, taking the derivative of the equations (\ref{cos S sin S}) we get for $\t \neq 0$
\begin{equation}
\label{Becca Bali}
\detau \, \s_{_{\! \pm}} (\t) =
\begin{cases}
- \dfrac{\detau \, \big\langle \mbf{y} , \mbfscn \big\rangle (\t)}{\big\langle \mbf{z} , \mbfscn \big\rangle (\t)} \quad , \quad \text{when} \,\, \big\langle \mbf{z} , \mbfscn \big\rangle (\t) \neq 0 \\[6mm]
\phantom{-}\dfrac{\detau \, \big\langle \mbf{z} , \mbfscn \big\rangle (\t)}{\big\langle \mbf{y} , \mbfscn \big\rangle (\t)} \quad , \quad \text{when} \,\, \big\langle \mbf{y} , \mbfscn \big\rangle (\t) \neq 0 \quad .
\end{cases}
\end{equation}
where the two expressions on the right hand side are compatible as a consequence of (\ref{Lucia 2}). These expressions are consistent since the relation (\ref{Lucia 2}) implies that the denominators of the fractions in (\ref{Becca Bali}) cannot vanish at the same time, and they vanish if and only if the correspondent numerator vanishes. Indeed $\detau \, \s_{_{\! -}}$ is well-def\mbox{}ined on $[-d , 0 )$ and $\detau \, \s_{_{\! +}}$ is well-def\mbox{}ined on $( 0 , d \,] \,$. Since $\mbfscn$, $\mbf{y}$ and $\mbf{z}$ have f\mbox{}inite limits for $\t \to 0^{\pm}$, it follows that $\lim_{\t \to 0^{-}} \detau \, \s_{_{\! -}}$ and $\lim_{\t \to 0^{+}} \detau \, \s_{_{\! +}}$ exist f\mbox{}inite.

Regarding the equations (\ref{cos Sk sin Sk}), from any solution $\z_{_{-}}^{_{(\r)}}$ of (\ref{cos Sk sin Sk}) in $\t = -d$ we can construct a unique smooth solution $\s_{_{\!\! (\r)}} : [-d , d \, ] \to \mbbR$ of (\ref{cos Sk sin Sk}) which satisf\mbox{}ies $\s_{_{\!\! (\r)}}(-d) = \z_{_{-}}^{_{(\r)}}$, following a procedure analogous to the one used above. Taking the derivative of the equations (\ref{cos Sk sin Sk}), we obtain
\begin{equation}
\label{Shion}
\detau \s_{_{\!\! (\r)}} (\t) =
\begin{cases}
- \dfrac{\detau \big\langle \mbf{y}_{_{\!\! (\r)}} , \mbfscn_{_{\! (\r)}} \big\rangle_{_{\!\! \r}} (\t)}{\big\langle \mbf{z}_{_{(\r)}} , \mbfscn_{_{\! (\r)}} \big\rangle_{_{\!\! \r}} (\t)} \quad , \quad \text{when} \,\, \big\langle \mbf{z}_{_{(\r)}} , \mbfscn_{_{\! (\r)}} \big\rangle_{_{\!\! \r}} (\t) \neq 0 \\[6mm]
\phantom{-}\dfrac{\detau \big\langle \mbf{z}_{_{(\r)}} , \mbfscn_{_{\! (\r)}} \big\rangle_{_{\!\! \r}} (\t)}{\big\langle \mbf{y}_{_{\!\! (\r)}} , \mbfscn_{_{\! (\r)}} \big\rangle_{_{\!\! \r}} (\t)} \quad , \quad \text{when} \,\, \big\langle \mbf{y}_{_{\!\! (\r)}} , \mbfscn_{_{\! (\r)}} \big\rangle_{_{\!\! \r}} (\t) \neq 0 \quad .
\end{cases}
\end{equation}
where again the two expressions for $\detau \s_{_{\!\! (\r)}}$ are compatible as a consequence of (\ref{Lucia}). Also in this case, the denominators of the fractions in (\ref{Shion}) cannot vanish at the same time and they vanish if and only if the correspondent numerator vanishes. It follows that $\detau \s_{_{\!\! (\r)}}$ is well-def\mbox{}ined on $[-d , d \,]$ for every value of $\r\,$.

Note that from (\ref{Becca Bali}) and (\ref{Shion}) it follows that the choice of $\z_{_{-}}$, $\z_{_{+}}$ and $\z_{_{-}}^{_{(\r)}}$ has no inf\mbox{}luence at all on the derivatives $\detau \, \s_{_{\! \pm}}$ and $\detau \, \s_{_{\!\! (\r)}} \,$, since the latters are determined uniquely by the auxiliary vector f\mbox{}ields, by the tangent vectors $\mbfscn$, $\mbfscn_{_{\! (\r)}}$ and by the metric $\mbf{g}^{_{(\r)}}\vert_{\hvfd} \,$.

\subsection{Selecting a family of smooth solutions}

As we saw above, the smooth solutions of (\ref{cos S sin S}) and (\ref{cos Sk sin Sk}) are in a one-to-one correspondence with the solutions $\z_{_{-}}$ and $\z_{_{+}}$ of (\ref{cos S sin S}) respectively in $\t = -d$ and in $\t = d \,$, and with the family $\{ \z_{_{-}}^{_{(\r)}} \}_{_{\! \r}}$ of solutions of (\ref{cos Sk sin Sk}) in $\t = -d$. We now show that it is possible to choose 
$\z_{_{-}}$, $\z_{_{+}}$ and $\{ \z_{_{-}}^{_{(\r)}} \}_{_{\! \r}}$ such that $\big\{ \s_{_{\!\! (\r)}} \big\}_{_{\!\! \r}}$ converges to $\s_{_{\!-}}$ uniformly in $\{ [-d , -\r \,] \}_{_{\! \r}} \,$, and converges to $\s_{_{\!+}}$ uniformly in $\{ [\, \r , d \,] \}_{_{\! \r}}\,$.

Let us then f\mbox{}ix $\bz_{_{-}}$, to which is associated the unique smooth solution $\bars_{_{\! -}}: [-d, 0) \to \mbbR$ of (\ref{cos S sin S}) which satisf\mbox{}ies $\bars_{_{\! -}}(-d) = \bz_{_{-}} \,$, and consider the interval $J_{_{-}} = \big[ \bz_{_{-}} - \pi, \bz_{_{-}} + \pi \big)$. For every value of $\r$ we indicate with $\bz_{_{-}}^{_{(\r)}}$ the solution of (\ref{cos Sk sin Sk}) in $\t = -d$ such that $\bz_{_{-}}^{_{(\r)}} \in J_{_{-}} \,$. Since
\begin{align}
\lim_{\r \to 0^+} \big\langle \mbf{y}_{_{\!\! (\r)}} , \mbfscn_{_{\! (\r)}} \big\rangle_{_{\!\! \r}} (-d) &= \big\langle \mbf{y} , \mbfscn \big\rangle (-d) \phantom{aa} & \phantom{aa} \lim_{\r \to 0^+} \big\langle \mbf{z}_{_{(\r)}} , \mbfscn_{_{\! (\r)}} \big\rangle_{_{\!\! \r}} (-d) &= \big\langle \mbf{z} , \mbfscn \big\rangle (-d) \quad ,
\end{align}
we get
\begin{align}
\lim_{\r \to 0^+} \cos \bz_{_{-}}^{_{(\r)}} &= \cos \bz_{_{-}} &  \lim_{\r \to 0^+} \sin \bz_{_{-}}^{_{(\r)}} &= \sin \bz_{_{-}}
\end{align}
and it follows that
\begin{proposition}
\label{proposition 5}
The family $\{ \bz_{_{-}}^{_{(\r)}} \}_{_{\! \r}}$ converges to $\bz_{_{-}}$.
\end{proposition}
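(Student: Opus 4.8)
The plan is to read off the statement from the elementary fact that the ``angular coordinate'' of a point of the unit circle depends continuously on that point, once one restricts attention to a fixed half-open window of length $2\pi$. Set $\th_{\r} = \bz_{_{-}}^{_{(\r)}} - \bz_{_{-}}$. By the very definition of $\bz_{_{-}}^{_{(\r)}}$ we have $\bz_{_{-}}^{_{(\r)}} \in J_{_{-}} = \big[ \bz_{_{-}} - \pi , \bz_{_{-}} + \pi \big)$, hence $\th_{\r} \in [-\pi, \pi)$ and in particular $\abs{\th_{\r}} \in [0, \pi]$ for every $\r$. What has to be shown is thus just that $\th_{\r} \to 0$ as $\r \to 0^+$.

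First I would use the cosine addition formula to write
\beq
\cos \th_{\r} = \cos \big( \bz_{_{-}}^{_{(\r)}} - \bz_{_{-}} \big) = \cos \bz_{_{-}}^{_{(\r)}} \, \cos \bz_{_{-}} + \sin \bz_{_{-}}^{_{(\r)}} \, \sin \bz_{_{-}} \quad .
\eeq
The two limits established immediately before the statement give $\cos \bz_{_{-}}^{_{(\r)}} \to \cos \bz_{_{-}}$ and $\sin \bz_{_{-}}^{_{(\r)}} \to \sin \bz_{_{-}}$, so the right hand side converges to $\cos^{2} \bz_{_{-}} + \sin^{2} \bz_{_{-}} = 1$; therefore $\cos \th_{\r} \to 1$. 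It then remains to deduce $\th_{\r} \to 0$ from $\cos \th_{\r} \to 1$ together with $\th_{\r} \in [-\pi, \pi)$. Since the cosine is even we have $\cos \abs{\th_{\r}} = \cos \th_{\r}$, and since $\abs{\th_{\r}} \in [0 , \pi]$ this gives $\abs{\th_{\r}} = \arccos \big( \cos \th_{\r} \big)$, with $\arccos : [-1,1] \to [0, \pi]$ the branch already fixed in the paper. As $\arccos$ is continuous on $[-1,1]$ and $\cos \th_{\r} \to 1$, we get $\abs{\th_{\r}} \to \arccos(1) = 0$, that is $\bz_{_{-}}^{_{(\r)}} \to \bz_{_{-}}$. (If one prefers to avoid invoking $\arccos$: for any $0 < \ep \leq \pi$ the cosine attains on the compact set $\{ \th \in [-\pi, \pi] : \abs{\th} \geq \ep \}$ the maximum value $\cos \ep < 1$, so once $\r$ is small enough that $\cos \th_{\r} > \cos \ep$, necessarily $\abs{\th_{\r}} < \ep$.)

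The argument is essentially immediate and I do not expect a genuine obstacle; the only point that really matters is the bookkeeping around the interval $J_{_{-}}$, namely that $\bz_{_{-}}^{_{(\r)}}$ was chosen on purpose in a half-open interval of length exactly $2\pi$ centred at $\bz_{_{-}}$. Without this constraint the conclusion would be false, since a family of angles may converge on the circle while differing by amounts tending to a nonzero multiple of $2\pi$; this is precisely why the proviso ``$\bz_{_{-}}^{_{(\r)}} \in J_{_{-}}$'' is built into the definition, and it is the only substantive content of the proposition. (The analogous statements made below for $\bz_{_{+}}^{_{(\r)}}$, as well as the parallel uniform-convergence claims for the slope functions, would be handled by the same device.)
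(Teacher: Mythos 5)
Your argument is correct and essentially identical to the paper's: both set $\th_{\r} = \bz_{_{-}}^{_{(\r)}} - \bz_{_{-}}$, use the addition formulae together with the limits established just before the statement, and exploit the restriction of $\bz_{_{-}}^{_{(\r)}}$ to the length-$2\pi$ window $J_{_{-}}$ to pass from convergence on the circle to convergence of the angle itself. The only (harmless) difference is that the paper also computes $\lim_{\r\to 0^+}\sin\th_{\r}=0$, whereas you correctly observe that $\cos\th_{\r}\to 1$ alone suffices once $\abs{\th_{\r}}\in[0,\pi]$ is in hand.
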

\begin{proof}
We have
\begin{align}
\lim_{\r \to 0^+} \cos \big(\bz_{_{-}}^{_{(\r)}} - \bz_{_{-}}\big) &= \lim_{\r \to 0^+} \Big[ \cos \big(\bz_{_{-}}^{_{(\r)}}\big) \cos \big(\bz_{_{-}}\big) + \sin \big(\bz_{_{-}}^{_{(\r)}}\big) \sin \big(\bz_{_{-}}\big) \Big] = 1 \\[2mm]
\lim_{\r \to 0^+} \sin \big(\bz_{_{-}}^{_{(\r)}} - \bz_{_{-}}\big) &= \lim_{\r \to 0^+} \Big[ \sin \big(\bz_{_{-}}^{_{(\r)}}\big) \cos \big(\bz_{_{-}}\big) - \cos \big(\bz_{_{-}}^{_{(\r)}}\big) \sin \big(\bz_{_{-}}\big) \Big] = 0 \quad .
\end{align}
Calling $y^{_{(\r)}}_{_{-}} = \bz_{_{-}}^{_{(\r)}} - \bz_{_{-}}$, we get
\begin{align}
-\pi &\leq y^{_{(\r)}}_{_{-}} \leq \pi & \lim_{\r \to 0^+} \cos \big(y^{_{(\r)}}_{_{-}} \big) &= 1 & \lim_{\r \to 0^+} \sin \big(y^{_{(\r)}}_{_{-}} \big) &= 0
\end{align}
which implies $\lim_{\r \to 0^+} y^{_{(\r)}}_{_{-}} = 0$ and therefore $\lim_{\r \to 0^+} \bz_{_{-}}^{_{(\r)}} = \bz_{_{-}}\, $.
\end{proof}

Let us now call $\{ \bars_{_{\!\! (\r)}} \}_{_{\r}}$ the unique family of smooth solutions $[-d , d \, ] \to \mbbR$ of (\ref{cos Sk sin Sk}) which satisfy $\bars_{_{\!\! (\r)}}(-d) = \bz_{_{-}}^{_{(\r)}}$. We have
\begin{proposition}
\label{proposition 6}
The family of functions $\{ \detau \, \bars_{_{\!\! (\r)}} \}_{_{\! \r}}$ converges uniformly in $\{ [-d , -\r] \}_{_{\! \r}}$ to $\detau \, \bars_{_{\! -}}$.
\end{proposition}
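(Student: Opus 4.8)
The plan is to read off $\detau\,\bars_{-}$ and $\detau\,\bars_{(\r)}$ from the explicit formulas (\ref{Becca Bali}) and (\ref{Shion}) and to show that every quantity entering them converges uniformly, in the sense of appendix \ref{appendix convergence}, once one restricts to the family of sets $\{[-d,-\r\,]\}_{\r}$ --- i.e.~to $\t$ lying \emph{outside} the ribbon interval $[-\r,\r\,]$. The point to keep in mind is that, although $\detau\,\mbfscn_{(\r)}=\de_{\t}^{2}\hvf^{A}_{(\r)}$ is allowed to blow up as $\t\to0$ globally, its restriction to $\{[-d,-\r\,]\}_{\r}$ behaves well: by the conditions imposed in section \ref{Convergence properties of the families} the families $\{\de^{\a}\hvf^{A}_{(\r)}\}_{\r}$ with $\abs{\a}\leq n$ converge uniformly in $\{\oQ_{q}^{(\r)}\}_{\r}$, hence (fixing $\chd$ and looking along the normal slice) $\detau\,\mbfscn_{(\r)}$, as well as $\mbfscn_{(\r)}$ and $\detau\hvf^{L}_{(\r)}$, converge uniformly in $\{[-d,-\r\,]\}_{\r}$ to $\detau\,\mbfscn$, $\mbfscn$ and $\detau\hvf^{L}$.

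First I would assemble the building blocks. Along the normal slice through a fixed $\chd$: the families $\mbf{y}_{(\r)}$, $\mbf{z}_{(\r)}$, $\detau\mbf{y}_{(\r)}$, $\detau\mbf{z}_{(\r)}$ converge uniformly in $\{[-d,-\r\,]\}_{\r}$ to $\mbf{y}$, $\mbf{z}$, $\detau\mbf{y}$, $\detau\mbf{z}$ by the construction of appendix \ref{appendix auxiliary fields}; since $\{g^{(\r)}_{MN}\}_{\r}$ converges in $\norm{\phantom{f}}_{C^{n}(\oV_{\!\!q})}$ (hence in $\norm{\phantom{f}}_{Lip(\oV_{\!\!q})}$) and $\hvf^{\cdot}_{(\r)}\to\hvf^{\cdot}$ uniformly, proposition \ref{proposition 2 Qr} gives $g^{(\r)}_{MN}\big\rvert_{\hvfd_{(\r)}}\to g_{MN}\big\rvert_{\hvfd}$ and $\de_{L}g^{(\r)}_{MN}\big\rvert_{\hvfd_{(\r)}}\to\de_{L}g_{MN}\big\rvert_{\hvfd}$ uniformly in $\{[-d,-\r\,]\}_{\r}$; and in each case the boundary limits $\lim_{\r\to0^{+}}(\,\cdot\,)(-\r)$ exist and are finite, by (\ref{Elias}), (\ref{Alinete}), (\ref{Bruna}), (\ref{petrushka}) and (\ref{flauntit}). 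Expanding $\detau\langle\mbf{y}_{(\r)},\mbfscn_{(\r)}\rangle_{\r}$ and $\detau\langle\mbf{z}_{(\r)},\mbfscn_{(\r)}\rangle_{\r}$ by the Leibniz rule into sums of products of the quantities just listed, repeated application of proposition \ref{proposition 1 Qr} then shows that $\langle\mbf{y}_{(\r)},\mbfscn_{(\r)}\rangle_{\r}$, $\langle\mbf{z}_{(\r)},\mbfscn_{(\r)}\rangle_{\r}$ and their $\t$-derivatives converge uniformly in $\{[-d,-\r\,]\}_{\r}$ to $\langle\mbf{y},\mbfscn\rangle=\cos\bars_{-}$, $\langle\mbf{z},\mbfscn\rangle=\sin\bars_{-}$ and their derivatives (cf.~(\ref{cos S sin S})).

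Next I would deal with the two-branch structure of (\ref{Becca Bali})--(\ref{Shion}) and with the division. Since $(\cos\bars_{-})^{2}+(\sin\bars_{-})^{2}=1$ on $[-d,0)$ by (\ref{Lucia 2}), the relatively closed sets $K^{(c)}=\{\t\in[-d,0):\abs{\langle\mbf{y},\mbfscn\rangle(\t)}\geq1/2\}$ and $K^{(s)}=\{\t\in[-d,0):\abs{\langle\mbf{z},\mbfscn\rangle(\t)}\geq1/2\}$ cover $[-d,0)$, and on each of them the corresponding denominator in (\ref{Becca Bali}) is bounded below by $1/2$. Applying proposition \ref{proposition 3 Qr} with $f_{(\r)}=\langle\mbf{z}_{(\r)},\mbfscn_{(\r)}\rangle_{\r}$ and $K=K^{(s)}$ (resp.~$f_{(\r)}=\langle\mbf{y}_{(\r)},\mbfscn_{(\r)}\rangle_{\r}$ and $K=K^{(c)}$) yields, for $\r$ small enough, that the denominator does not vanish on $K\cap[-d,-\r\,]$ --- so that the appropriate branch of (\ref{Shion}) is valid there --- and that its reciprocal converges uniformly in $\{K\cap[-d,-\r\,]\}_{\r}$; combining with the previous paragraph (and the $K$-localised form of proposition \ref{proposition 1 Qr}, or simply the boundedness of both factors on $K$) gives $\detau\,\bars_{(\r)}\to\detau\,\bars_{-}$ uniformly in $\{K^{(s)}\cap[-d,-\r\,]\}_{\r}$ and in $\{K^{(c)}\cap[-d,-\r\,]\}_{\r}$, the two limits agreeing on $K^{(c)}\cap K^{(s)}$ by the compatibility noted after (\ref{Shion}). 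Finally, given $\ep>0$, taking $\d$ to be the minimum of the two resulting thresholds, for $\r<\d$ and any $\t\in[-d,-\r\,]$ one has $\t\in K^{(c)}$ or $\t\in K^{(s)}$, hence $\babs{\detau\,\bars_{(\r)}(\t)-\detau\,\bars_{-}(\t)}<\ep$; this is precisely uniform convergence in $\{[-d,-\r\,]\}_{\r}$.

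The main obstacle is not any individual estimate but the bookkeeping forced by the piecewise definition of $\detau\,\bars_{(\r)}$ in (\ref{Shion}): one must take reciprocals only where the chosen denominator stays bounded away from zero, and then check that the two local uniform-convergence statements, on $K^{(c)}$ and on $K^{(s)}$, patch into a single one on $[-d,-\r\,]$. This is exactly where the $K$-localised version of proposition \ref{proposition 3 Qr} is used, and the compatibility of the two expressions in (\ref{Shion}), guaranteed by (\ref{Lucia}), is what makes the patching unambiguous. Everything else reduces to the convergence algebra of appendix \ref{appendix convergence} applied to products, compositions and reciprocals of the bulk metric, the embedding and the auxiliary vector fields, together with the fact --- crucial here --- that on $\{[-d,-\r\,]\}_{\r}$ the possibly divergent quantity $\de_{\t}^{2}\hvf^{A}_{(\r)}$ converges uniformly because that region avoids the ribbon.
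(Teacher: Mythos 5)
Your proposal is correct and follows essentially the same route as the paper's own proof: both expand the numerators and denominators of (\ref{Becca Bali})--(\ref{Shion}) into products of quantities that converge uniformly in $\{[-d,-\r\,]\}_{\r}$ by the convergence algebra of appendix \ref{appendix convergence}, cover $[-d,0)$ by the two sets where $\abs{\langle\mbf{y},\mbfscn\rangle}$ resp.\ $\abs{\langle\mbf{z},\mbfscn\rangle}$ are bounded below (using (\ref{Lucia 2})), apply the $K$-localised reciprocal proposition on each, and patch. Your write-up is somewhat more explicit about the building blocks (in particular that $\de_{\t}^{2}\hvf^{_{A}}_{^{\!(\r)}}$ converges uniformly once restricted to $\{[-d,-\r\,]\}_{\r}$), but the argument is the same.
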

\begin{proof}
Let us introduce the sets
\beqnn
L_{y} = \{ \t \in [-d , 0 \,] : \babs{\big\langle \mbf{y} , \mbfscn \big\rangle (\t)} > 1/2 \} \quad , \quad L_{z} = \{ \t \in [-d , 0 \,] : \babs{\big\langle \mbf{z} , \mbfscn \big\rangle (\t)} > 1/2 \} \quad .
\eeqnn
By the relation (\ref{Lucia 2}), it follows that $L_{y} \, \cup \, L_{z} = [-d , 0 \,]\,$. Note that the families $\big\langle \mbf{y}_{_{\!\! (\r)}} , \mbfscn_{_{\! (\r)}} \big\rangle_{_{\!\! \r}}$ and $\detau \big\langle \mbf{y}_{_{\!\! (\r)}} , \mbfscn_{_{\! (\r)}} \big\rangle_{_{\!\! \r}}$ converge uniformly in $\{ [-d , -\r \,] \}_{_{\! \r}}$ respectively to $\big\langle \mbf{y} , \mbfscn \big\rangle$ and $\detau \big\langle \mbf{y} , \mbfscn \big\rangle$, and $\inf_{L_{y}} \babs{\big\langle \mbf{y} , \mbfscn \big\rangle} > 0 \,$. Analogously, the families $\big\langle \mbf{z}_{_{(\r)}} , \mbfscn_{_{\! (\r)}} \big\rangle_{_{\!\! \r}}$ and $\detau \big\langle \mbf{z}_{_{(\r)}} , \mbfscn_{_{\! (\r)}} \big\rangle_{_{\!\! \r}}$ converge uniformly in $\{ [-d , -\r \,] \}_{_{\! \r}}$ respectively to $\big\langle \mbf{z} , \mbfscn \big\rangle$ and $\detau \big\langle \mbf{z} , \mbfscn \big\rangle$, and $\inf_{L_{z}} \babs{\big\langle \mbf{z} , \mbfscn \big\rangle} > 0 \,$. By lemma \ref{lemma 1 Qr} and propositions \ref{proposition 1 Qr}--\ref{proposition 3 Qr} it follows that
\beq
\dfrac{\detau \big\langle \mbf{y}_{_{\!\! (\r)}} , \mbfscn_{_{\! (\r)}} \big\rangle_{_{\!\! \r}} }{\big\langle \mbf{z}_{_{(\r)}} , \mbfscn_{_{\! (\r)}} \big\rangle_{_{\!\! \r}}} \quad \text{converges to} \quad \dfrac{\detau \big\langle \mbf{y} , \mbfscn \big\rangle}{\big\langle \mbf{z} , \mbfscn \big\rangle} \quad \text{uniformly in} \quad \{ L_{y} \cap [-d , -\r \,] \}_{_{\! \r}} \quad ,
\eeq
and that
\beq
\dfrac{\detau \big\langle \mbf{z}_{_{(\r)}} , \mbfscn_{_{\! (\r)}} \big\rangle_{_{\!\! \r}} }{\big\langle \mbf{y}_{_{\!\! (\r)}} , \mbfscn_{_{\! (\r)}} \big\rangle_{_{\!\! \r}}} \quad \text{converges to} \quad \dfrac{\detau \big\langle \mbf{z} , \mbfscn \big\rangle}{\big\langle \mbf{y} , \mbfscn \big\rangle} \quad \text{uniformly in} \quad \{ L_{z} \cap [-d , -\r \,] \}_{_{\! \r}} \quad .
\eeq
Since $\{ L_{y} \cap [-d , -\r \,] \} \cup \{ L_{z} \cap [-d , -\r \,] \} = [-d , -\r \,]\,$, the thesis follows from the relations (\ref{Becca Bali}) and (\ref{Shion}).
\end{proof}

We have therefore the following situation. The family of derivatives $\{ \detau \, \bars_{_{\!\! (\r)}} \}_{_{\! \r}}$ converges uniformly in $\{ [-d , -\r] \}_{_{\! \r}}$ to $\detau \, \bars_{_{\! -}}$, and $\{ \bars_{_{\!\! (\r)}}(-d) \}_{_{\! \r}} = \bz_{_{-}}^{_{(\r)}}$ converges \'a la Cauchy. Then by the proposition \ref{proposition slope minus} the family $\{ \bars_{_{\!\! (\r)}} \}_{_{\! \r}}$ converges uniformly in $\{ [-d , -\r] \}_{_{\! \r}}$ to a function $\bs_{_{\!-}} : [-d , 0 \,) \to \mbbR \,$, and $\detau \, \bs_{_{\!-}} = \detau \, \bars_{_{\! -}}$. Since $\bs_{_{\!-}}$ and $\bars_{_{\! -}}$ coincide in $\t = -d \,$, they coincide over $[-d , 0 \,) \,$. 
Therefore, the family $\{ \bars_{_{\!\! (\r)}} \}_{_{\! \r}}$ converges uniformly in $\{ [-d , -\r] \}_{_{\! \r}}$ to $\bars_{_{\! -}} \,$. Moreover, since $\lim_{\t \to 0^-} \bars_{_{\! -}}$ exists f\mbox{}inite, by proposition \ref{proposition Qr} we get 
\beq
\label{Noel Monique}
\lim_{\r \to 0^+} \bars_{_{\!\! (\r)}}(-\r) = \lim_{\t \to 0^{-}} \bars_{_{\! -}}(\t) \quad .
\eeq

\subsection{Absence of self-intersections}
\label{section Absence of self-intersections}

We now turn to the choice of $\z_{_{+}}$, or equivalently of $\s_{_{\! +}}$. For future reference note that, regardless to the value of $\z_{_{+}}$, the following proposition holds
\begin{proposition}
\label{proposition 7}
The family of functions $\{ \detau \, \bars_{_{\!\! (\r)}} \}_{_{\! \r}}$ converges uniformly in $\{ [\, \r , d \,] \}_{_{\! \r}}$ to $\detau \, \s_{_{\! +}}$ .
\end{proposition}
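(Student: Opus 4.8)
The plan is to mirror the proof of Proposition~\ref{proposition 6}, transporting that argument from the interval $[-d,-\r\,]$ on the left of the ridge to the interval $[\,\r,d\,]$ on the right. The first thing to record, already observed below~(\ref{Shion}), is that the derivatives $\detau\,\s_{_{\!+}}$ and $\detau\,\s_{_{\!\! (\r)}}$ are completely determined by~(\ref{Becca Bali}) and~(\ref{Shion}) in terms of the auxiliary vector fields, of $\mbfscn$, $\mbfscn_{_{\! (\r)}}$ and of the bulk metric, and so do not depend on the initial data that single out $\s_{_{\!+}}$ and $\bars_{_{\!\! (\r)}}$. In particular $\detau\,\bars_{_{\!\! (\r)}}=\detau\,\s_{_{\!\! (\r)}}$ for \emph{any} smooth solution of~(\ref{cos Sk sin Sk}), so it suffices to show that $\{\detau\,\s_{_{\!\! (\r)}}\}_{_{\!\r}}$ converges uniformly to $\detau\,\s_{_{\!+}}$ in $\{[\,\r,d\,]\}_{_{\!\r}}$; this is precisely what makes the statement hold ``regardless to the value of $\z_{_{+}}$''.

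Next I would introduce, in complete analogy with the proof of Proposition~\ref{proposition 6}, the sets $L_{y}=\{\t\in(0,d\,]:\babs{\langle\mbf{y},\mbfscn\rangle(\t)}>1/2\}$ and $L_{z}=\{\t\in(0,d\,]:\babs{\langle\mbf{z},\mbfscn\rangle(\t)}>1/2\}$, which by~(\ref{Lucia 2}) satisfy $L_{y}\cup L_{z}=(0,d\,]$, so that $\{L_{y}\cap[\,\r,d\,]\}\cup\{L_{z}\cap[\,\r,d\,]\}=[\,\r,d\,]$ for every $\r$. On $L_{y}$ relation~(\ref{Becca Bali}) gives $\detau\,\s_{_{\!+}}=\detau\big\langle\mbf{z},\mbfscn\big\rangle/\big\langle\mbf{y},\mbfscn\big\rangle$, and on $L_{z}$ it gives $\detau\,\s_{_{\!+}}=-\,\detau\big\langle\mbf{y},\mbfscn\big\rangle/\big\langle\mbf{z},\mbfscn\big\rangle$, the two being compatible on $L_{y}\cap L_{z}$ by~(\ref{Lucia 2}). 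On the family side, the construction of the auxiliary vector fields in Appendix~\ref{appendix auxiliary fields}, together with the convergence properties imposed in Section~\ref{Convergence properties of the families} and Propositions~\ref{proposition 1}--\ref{proposition 2} applied at fixed $\chd$, yield that $\big\langle\mbf{y}_{_{\!\! (\r)}},\mbfscn_{_{\! (\r)}}\big\rangle_{_{\!\! \r}}$, $\big\langle\mbf{z}_{_{(\r)}},\mbfscn_{_{\! (\r)}}\big\rangle_{_{\!\! \r}}$ and their $\detau$-derivatives converge uniformly in $\{[\,\r,d\,]\}_{_{\!\r}}$ to $\big\langle\mbf{y},\mbfscn\big\rangle$, $\big\langle\mbf{z},\mbfscn\big\rangle$ and their $\detau$-derivatives, while $\inf_{L_{y}}\babs{\langle\mbf{y},\mbfscn\rangle}\ge 1/2$ and $\inf_{L_{z}}\babs{\langle\mbf{z},\mbfscn\rangle}\ge 1/2$.

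I would then invoke the $\{[\,\r,d\,]\}_{_{\!\r}}$ versions of Lemma~\ref{lemma 1 Qr} and of Propositions~\ref{proposition 1 Qr}--\ref{proposition 3 Qr} (asserted at the end of Appendix~\ref{appendix convergence}) to conclude that $\detau\big\langle\mbf{z}_{_{(\r)}},\mbfscn_{_{\! (\r)}}\big\rangle_{_{\!\! \r}}\big/\big\langle\mbf{y}_{_{\!\! (\r)}},\mbfscn_{_{\! (\r)}}\big\rangle_{_{\!\! \r}}$ converges to $\detau\big\langle\mbf{z},\mbfscn\big\rangle/\big\langle\mbf{y},\mbfscn\big\rangle$ uniformly in $\{L_{y}\cap[\,\r,d\,]\}_{_{\!\r}}$, and symmetrically for the $y\leftrightarrow z$ ratio uniformly in $\{L_{z}\cap[\,\r,d\,]\}_{_{\!\r}}$. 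Since on $L_{y}$ (resp. $L_{z}$) the limit denominator stays above $1/2$, uniform convergence keeps $\babs{\langle\mbf{y}_{_{\!\! (\r)}},\mbfscn_{_{\! (\r)}}\rangle_{_{\!\! \r}}}$ (resp. $\babs{\langle\mbf{z}_{_{(\r)}},\mbfscn_{_{\! (\r)}}\rangle_{_{\!\! \r}}}$) above $1/4$ there for all small enough $\r$, so the appropriate branch of~(\ref{Shion}) is in force on that set and identifies $\detau\,\s_{_{\!\! (\r)}}$ with the corresponding ratio. Taking the smaller of the two thresholds and using $\{L_{y}\cap[\,\r,d\,]\}\cup\{L_{z}\cap[\,\r,d\,]\}=[\,\r,d\,]$, the two uniform estimates patch together into uniform convergence of $\{\detau\,\bars_{_{\!\! (\r)}}\}_{_{\!\r}}$ to $\detau\,\s_{_{\!+}}$ in $\{[\,\r,d\,]\}_{_{\!\r}}$, which is the claim.

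Being a direct transcription of the proof of Proposition~\ref{proposition 6}, this argument carries no genuinely new difficulty; the only points that need a little attention are reconciling the case split of~(\ref{Shion}), which is phrased in terms of the \emph{family} inner products, with the sets $L_{y},L_{z}$ defined through the \emph{limit} inner products --- handled, as above, by uniform convergence for $\r$ small --- and the consistency of the two branches of~(\ref{Becca Bali}) and~(\ref{Shion}) on the overlaps, which is immediate from~(\ref{Lucia 2}) and~(\ref{Lucia}). It is also worth recording explicitly that every convergence statement from Appendix~\ref{appendix convergence} used here is applied at a fixed $\chd$, consistently with the fact that the auxiliary fields and the functions $\s_{_{\!\! (\r)}},\s_{_{\!+}}$ were constructed pointwise in $\chd$.
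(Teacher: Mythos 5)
Your proposal is correct and is essentially the paper's own argument: the paper proves Proposition~\ref{proposition 7} simply by declaring it ``completely analogous'' to the proof of Proposition~\ref{proposition 6}, and what you have written out is precisely that transcription to the interval $[\,\r , d\,]$, with the same decomposition into $L_{y}$ and $L_{z}$, the same use of (\ref{Becca Bali})--(\ref{Shion}) and of the $\{[\,\r , d\,]\}_{_{\!\r}}$ versions of the convergence lemmas. The extra remarks you add (independence of $\detau\s_{_{\!+}}$ from $\z_{_{+}}$, and reconciling the branch selection in (\ref{Shion}) with the sets defined via the limit inner products) are consistent with the paper's reasoning and only make explicit what it leaves implicit.
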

\begin{proof}
The proof is completely analogous to that of proposition \ref{proposition 6}.
\end{proof}

First of all, let's see that $\lim_{\r \to 0^+} \bars_{_{\!\! (\r)}} (\r)$ exists f\mbox{}inite. To do this we consider the pillbox integration of $\detau \bars_{_{\!\! (\r)}}$, namely (using (\ref{Shion})) we consider the integral over $[- \r, \r \,]$ of
\begin{equation}
\label{Shion 2}
\detau \bars_{_{\!\! (\r)}} (\t) = \dfrac{\detau \big\langle \mbf{z}_{_{(\r)}} , \mbfscn_{_{\! (\r)}} \big\rangle_{_{\!\! \r}} (\t)}{\big\langle \mbf{y}_{_{\!\! (\r)}} , \mbfscn_{_{\! (\r)}} \big\rangle_{_{\!\! \r}} (\t)}\quad ,
\end{equation}
with the aim of taking the limit $\r \to 0^+$. Note that the relation (\ref{Shion 2}) holds only for $\big\langle \mbf{y}_{_{\!\! (\r)}} , \mbfscn_{_{\! (\r)}} \big\rangle_{_{\!\! \r}} (\t) \neq 0 \,$, and this may cast doubts about the integral of the right hand side being well-def\mbox{}ined. But, since we proved that the left hand side is continuous on $[-d, d \,] \,$, by consistency the right hand side has to have removable discontinuities when $\big\langle \mbf{y}_{_{\!\! (\r)}} , \mbfscn_{_{\! (\r)}} \big\rangle_{_{\!\! \r}} (\t) = 0$ and be continuous elsewhere. This can be indeed verif\mbox{}ied directly using (\ref{Lucia}), therefore the integral over $[- \r, \r \,]$ of (\ref{Shion 2}) is well-def\mbox{}ined. Now, since we showed above that $\lim_{\r \to 0^+} \bars_{_{\!\! (\r)}} (-\r)$ exists f\mbox{}inite, it follows that $\lim_{\r \to 0^+} \bars_{_{\!\! (\r)}} (\r)$ exists f\mbox{}inite if and only if the limit $\r \to 0^+$ of the integral
\begin{equation}
\label{Ozawa}
\int_{-\r}^{\r} \dfrac{\detau \big\langle \mbf{z}_{_{(\r)}} , \mbfscn_{_{\! (\r)}} \big\rangle_{_{\!\! \r}}}{\big\langle \mbf{y}_{_{\!\! (\r)}} , \mbfscn_{_{\! (\r)}} \big\rangle_{_{\!\! \r}}} \,\, d\t
\end{equation}
exists f\mbox{}inite, in which case we have
\begin{equation}
\label{Sora Aoi}
\lim_{\r \to 0^+} \bars_{_{\!\! (\r)}} (\r) - \lim_{\r \to 0^+} \bars_{_{\!\! (\r)}} (-\r) = \lim_{\r \to 0^+} \int_{-\r}^{\r} \dfrac{\detau \big\langle \mbf{z}_{_{(\r)}} , \mbfscn_{_{\! (\r)}} \big\rangle_{_{\!\! \r}}}{\big\langle \mbf{y}_{_{\!\! (\r)}} , \mbfscn_{_{\! (\r)}} \big\rangle_{_{\!\! \r}}} \,\, d\t \quad .
\end{equation}

To establish the existence of the $\r \to 0^+$ limit of (\ref{Ozawa}), let's comment on the range of values that $\bars_{_{\!\! (\r)}} (\r) - \bars_{_{\!\! (\r)}} (-\r)$ can take when $\r$ is arbitrarily small. Let's indicate $I_{_{\! (\r)}} =  \big[ \! - \! \r, \r \, \big]$. Since the families of auxiliary f\mbox{}ields $\{ \mbf{y}_{_{\!\! (\r)}} \}_{_{\! \r}}$ and $\{ \mbf{z}_{_{(\r)}} \}_{_{\! \r}}$ converge uniformly on $[-d, d \,]$ to continuous conf\mbox{}igurations, we have that
\begin{align}
\label{petrushka app}
\lim_{\r \to 0^+} \mbf{y}_{_{\!\! (\r)}} \big( I_{_{\! (\r)}} \big) &= \mbf{y} (0) & \lim_{\r \to 0^+} \mbf{z}_{_{(\r)}} \big( I_{_{\! (\r)}} \big) &= \mbf{z} (0) \quad .
\end{align}
Let's suppose that, for every $\bar{\r} > 0$, there exists a positive $\r \leq \bar{\r}$ such that $\babs{\bars_{_{\!\! (\r)}} (\r) - \bars_{_{\!\! (\r)}} (-\r)} > \pi \,$. The relation (\ref{auxiliary 1}) implies that $\bars_{_{\!\! (\r)}}(\t)$ is the angular (polar) coordinate of the vector $\mbfscn_{_{\! (\r)}}(\t)$ in the $\big( \mbf{y}_{_{\!\! (\r)}}(\t),\mbf{z}_{_{(\r)}}(\t) \big)$ reference frame. Since $\mbfscn_{_{\! (\r)}}(\t)$ is the tangent vector to $\mcalC_1$ in the direction orthogonal to $\mcalC_2$, it follows that the tangent vector rotates of an angle $> \pi$ going from $-\r$ to $\r$ with respect to the above mentioned reference frame. In principle, this relative rotation may be due to the frame itself undergoing a rotation which does not reduce to the identity when $\r \to 0^+$. This is however excluded by (\ref{petrushka}) which states that, for $\r$ small enough, the frames stays nearly constant between $-\r$ to $\r \,$. Therefore, the tangent vector $\mbfscn_{_{\! (\r)}}$ indeed rotates of an angle $> \pi$ between $-\r$ and $\r \,$. Since the points $\t = -\r$ and $\t = \r$ coincide in the limit $\r \to 0^+$, this situation corresponds to a conf\mbox{}iguration where the cod-1 brane is self-intersecting (see f\mbox{}igure \ref{fig self intersection}). However, we excluded from the onset as non-physical the presence of brane self-intersections, by def\mbox{}ining the cod-1 brane to be an embedded submanifold. In addition, the case $\babs{\bars_{_{\!\! (\r)}} (\r) - \bars_{_{\!\! (\r)}} (-\r)} = \pi$ corresponds to a situation where the geometry is pathological, since the bulk shrinks to a line. Therefore only the cases such that $\babs{\bars_{_{\!\! (\r)}} (\r) - \bars_{_{\!\! (\r)}} (-\r)} < \pi$ when $\r \to 0^+$ are allowed in our analysis.
\begin{figure}[t!]
\centering
\begin{tikzpicture}[scale=0.8]
\draw[->,thin,color=black!50] (-5,0) -- (5,0);
\draw[->,thin,color=black!50] (0,-2.5) -- (0,1.5);
\draw[thick,color=blue] (-4,0) -- (2,0);
\draw[thick,color=blue] (2,0) .. controls (4,0) and (4,0.8) .. (2,0);
\draw[thick,color=blue] (2,0) -- +(200:6);
\draw[->,thick] (-2.5, 0) -- +(1,0) node[above,color=black] {$\mathbf{n}_{_{-}}$};
\draw[->,thick] (2, 0)+(200:3.4) -- +(200:4.4) node[above,color=black] {$\mathbf{n}_{_{+}}$};
\end{tikzpicture}
\caption{Self-intersections and $\babs{\bars_{_{\! (\r)}} (\r) - \bars_{_{\! (\r)}} (-\r)} = 10 \pi/9 \,$.}
\label{fig self intersection}
\end{figure}

This condition implies that, by exploiting the freedom in the choice of the auxiliary f\mbox{}ields mentioned at the end of section \ref{Constructing the auxiliary fields}, we can choose the families $\{ \mbf{y}_{_{\!\! (\r)}} \}_{_{\! \r}}$ and $\{ \mbf{z}_{_{(\r)}} \}_{_{\! \r}}$ such that $\big\langle \mbf{y}_{_{\!\! (\r)}} , \mbfscn_{_{\! (\r)}} \big\rangle_{_{\!\! \r}} (\t) > 0$ for $\t \in [ -\r , \r \,]$ when $\r$ is suf\mbox{}f\mbox{}iciently small. With this choice (\ref{Lucia}) implies that
\begin{equation}
\big\langle \mbf{y}_{_{\!\! (\r)}} , \mbfscn_{_{\! (\r)}} \big\rangle_{_{\!\! \r}} (\t) = \sqrt{1 - \big\langle \mbf{z}_{_{(\r)}} , \mbfscn_{_{\! (\r)}} \big\rangle^{2}_{_{\!\! \r}} (\t)} \quad ,
\end{equation}
and the integral (\ref{Ozawa}) can be evaluated exactly to give
\begin{equation}
\label{Maria Ozawa}
\int_{-\r}^{\r} \dfrac{\detau \big\langle \mbf{z}_{_{(\r)}} , \mbfscn_{_{\! (\r)}} \big\rangle_{_{\!\! \r}}}{\sqrt{1 - \big\langle \mbf{z}_{_{(\r)}} , \mbfscn_{_{\! (\r)}} \big\rangle^{2}_{_{\!\! \r}}}} \,\, d\t = \bigg[ \arcsin \Big( \big\langle \mbf{z}_{_{(\r)}} , \mbfscn_{_{\! (\r)}} \big\rangle_{_{\!\! \r}} (\r) \Big) - \arcsin \Big( \big\langle \mbf{z}_{_{(\r)}} , \mbfscn_{_{\! (\r)}} \big\rangle_{_{\!\! \r}} (-\r) \Big) \bigg] \, .
\end{equation}
Taking the limit $\r \to 0^+$ of the previous equation and using (\ref{Sora Aoi}) we obtain
\begin{equation}
\label{Julie Kay}
\lim_{\r \to 0^+} \bars_{_{\!\! (\r)}} (\r) - \lim_{\r \to 0^+} \bars_{_{\!\! (\r)}} (-\r) = \arcsin \big\langle \mbf{z}(0) , \mbfscn_{_{+}} \big\rangle - \arcsin \big\langle \mbf{z}(0) , \mbfscn_{_{-}} \big\rangle \quad ,
\end{equation}
which implies that $\lim_{\r \to 0^+} \bars_{_{\!\! (\r)}} (\r)$ exists f\mbox{}inite.

The state of af\mbox{}fairs is then the following. By proposition \ref{proposition 7}, the family of derivatives $\{ \detau \, \bars_{_{\!\! (\r)}} \}_{_{\! \r}}$ converges uniformly in $\{ [\, \r , d \,] \}_{_{\! \r}}$ to $\detau \, \s_{_{\! +}}$, regardless of the choice for $\z_{_{+}}$. Furthermore, as we mentioned above, $\detau \, \s_{_{\! +}}$ has a f\mbox{}inite limit for $\t \to 0^+$, again regardless of the choice for $\z_{_{+}}$. Since $\lim_{\r \to 0^+} \bars_{_{\!\! (\r)}} (\r)$ exists f\mbox{}inite, the proposition \ref{proposition slope plus} implies that the family $\{ \bars_{_{\!\! (\r)}} \}_{_{\! \r}}$ converges uniformly in $\{ [ \, \r, d \, ] \}_{_{\! \r}}$ to a smooth function $\bs_{_{+}} : ( 0 , d \, ] \to \mbbR \,$, and $\detau \, \bs_{_{+}} = \detau \, \s_{_{\! +}} \,$. Moreover, by proposition \ref{proposition Qr} it follows that
\beq
\lim_{\t \to 0^+} \bs_{_{+}}(\t) = \lim_{\r \to 0^+} \bars_{_{\!\! (\r)}} (\r) \quad ,
\eeq
which using (\ref{Noel Monique}) and (\ref{Julie Kay}) in turn implies
\begin{equation}
\lim_{\t \to 0^+} \bs_{_{+}} - \lim_{\t \to 0^{-}} \bars_{_{\! -}} = \arcsin \big\langle \mbf{z}(0) , \mbfscn_{_{+}} \big\rangle - \arcsin \big\langle \mbf{z}(0) , \mbfscn_{_{-}} \big\rangle \quad .
\end{equation}
On the other hand, applying the $\arcsin$ to the relations on the right of (\ref{Layla Rose}) we get 
\begin{equation}
\lim_{\t \to 0^+} \s_{_{\! +}} - \lim_{\t \to 0^{-}} \bars_{_{\! -}} + 2 \pi \, k = \arcsin \big\langle \mbf{z}(0) , \mbfscn_{_{+}} \big\rangle - \arcsin \big\langle \mbf{z}(0) , \mbfscn_{_{-}} \big\rangle
\end{equation}
for some $k \in \mbbZ \,$, from which we obtain
\begin{equation}
\lim_{\t \to 0^+} \bs_{_{+}} = \lim_{\t \to 0^+} \s_{_{\! +}} + 2 \pi \, k \quad .\footnote{In general $\arcsin(\sin x) \neq x$, since by convention $\arcsin : [-1 , 1] \to [-\pi/2 , \pi/2 \,]$ while $x$ may lie outside of $[-\pi/2 , \pi/2 \,]\,$.}
\end{equation}
Remind now that, if $\ti{\z}_{_{+}}$ is a solution of (\ref{cos S sin S}) in $\t = d \,$, then all and only the other solutions takes the form $\z_{_{+}} = \ti{\z}_{_{+}} + 2 \pi \, \bk\,$, with $\bk \in \mbbZ \,$. Since by (\ref{Becca Bali}) the choice of $\bk$ does not inf\mbox{}luence $\detau \, \s_{_{\! +}} \,$, it follows that all the solutions for $\s_{_{\! +}}$ are of the form $\s_{_{\! +}}(\t) = \ti{\s}_{_{\! +}}(\t) + 2 \pi \, \bk\,$, where $\ti{\s}_{_{\! +}}$ is the solution associated to $\ti{\z}_{_{+}}$. Therefore, calling $\bar{\s}_{_{\! +}}$ the smooth solution of (\ref{cos S sin S}) on $( 0 , d \, ]$ correspondent to the choice $\bk = - k \,$, we have
\begin{equation}
\lim_{\t \to 0^+} \bar{\s}_{_{\! +}} = \lim_{\t \to 0^+} \bs_{_{+}}
\end{equation}
and f\mbox{}inally
\begin{equation}
\label{Noel Monique 2}
\lim_{\t \to 0^+} \bar{\s}_{_{\! +}}(\t) = \lim_{\r \to 0^+} \bars_{_{\!\! (\r)}} (\r) \quad .
\end{equation}
We conclude that it is indeed possible to choose $\bz_{_{+}} = \bar{\s}_{_{\! +}}(d)$ such that the family $\{ \bars_{_{\!\! (\r)}} \}_{_{\! \r}}$ converges uniformly in $\{ [ \, \r , d \, ] \}_{_{\! \r}}$ to $\bars_{_{\! +}} \,$. Moreover, from (\ref{Noel Monique}), (\ref{Julie Kay}) and (\ref{Noel Monique 2}) it follows that
\beq
\label{rugby}
\barS_{_{+}} - \barS_{_{-}} = \arcsin \big\langle \mbf{z}(0) , \mbfscn_{_{+}} \big\rangle - \arcsin \big\langle \mbf{z}(0) , \mbfscn_{_{-}} \big\rangle \quad .
\eeq

\subsection{Self-consistency of the analysis}

Let us f\mbox{}inally def\mbox{}ine
\begin{gather}
\barS_{_{\! (\r)}}(\t) = \bars_{_{\!\! (\r)}} (\t) \\[3mm]
\barS(\t) =
\begin{cases}
\bars_{_{\! -}}(\t)& \text{for} \,\, \t \in [-d, 0) \\[2mm]
\bars_{_{\! +}}(\t)& \text{for} \,\, \t \in (0, d \,]
\end{cases} \\[3mm]
\barS_{_{\!-}} = \lim_{\t \to 0^{-}} \bars_{_{\! -}}(\t) \qquad \qquad \barS_{_{\!+}} = \lim_{\t \to 0^{+}} \bars_{_{\! +}}(\t) \quad .
\end{gather}
It follows from the discussion above that
\begin{enumerate}
 \item $\barS_{_{\! (\r)}}$ is smooth on $[-d , d\,] \,$, $\barS$ is smooth on $[-d , d\,] \setminus \{ 0 \}$ and $\barS_{_{\!-}} \,$, $\barS_{_{\!+}}$ are f\mbox{}inite ;
 \item the family $\big\{ \barS_{_{\! (\r)}} \big\}_{_{\!\! \r}}$ converges to $\barS$ uniformly in $\{ [-d , -\r \,] \cup [\, \r , d \,] \}_{_{\! \r}}\,$.
\end{enumerate}
We therefore proved the existence of solutions of (\ref{cos S sin S}) and (\ref{cos Sk sin Sk}) which satisfy 1. and 2., as claimed in section \ref{The slope function}.

Some considerations are appropriate at this point. In sections \ref{section The codimension-2 pillbox integration} and \ref{section Thin limit of the Cascading DGP model} we saw that from the points 1. and 2. it follows that
\beq
\mscrI = S_{_{\!+}} - S_{_{\!-}} \quad ,
\eeq
where $\mscrI$ is def\mbox{}ined in (\ref{integral N}), and that
\beq
S_{_{\!+}} - S_{_{\!-}} = \sgn \Big( \big\langle \mbfscn_{_{+}}, \mbf{n}_{_{-}} \big\rangle \Big) \,\, \arccos \Big( \big\langle \mbf{n}_{_{+}}, \mbf{n}_{_{-}} \big\rangle \Big) \quad .
\eeq
Since the def\mbox{}inition of $\mscrI$ does not involve the slope functions, this indirectly implies that every choice of $\big\{ S_{_{\! (\r)}} \big\}_{_{\!\! \r}}$ and $S$ which satisfy 1. and 2. gives the same result for $S_{_{\!+}} - S_{_{\!-}}\,$. To conf\mbox{}irm the self-consistency of the analysis, it is useful to show directly that this holds.

Let us consider a generic choice of $\big\{ S_{_{\! (\r)}} \big\}_{_{\!\! \r}}$ and $S$ which satisfy the points 1. and 2. . As we mentioned above, the point 1. implies that the functions $\big\{ S_{_{\! (\r)}} \big\}_{_{\!\! \r}}$ and $S$ are completely parametrized by the numbers
\beq
\big\{ \z_{_{-}}^{_{(\r)}} = S_{_{\! (\r)}}(-d) \big\}_{_{\!\! \r}} \qquad , \qquad \z_{_{-}} = S(-d) \qquad , \qquad \z_{_{+}} = S(d) \quad .
\eeq
Moreover, the point 2. implies that $\lim_{\r \to 0^+} \z_{_{-}}^{_{(\r)}} = \z_{_{-}} \,$. Note that this allows $\abs{\z_{_{-}}^{_{(\r)}} - \z_{_{-}}} > \pi$ to hold for a f\mbox{}inite subset of values of $\r \,$, therefore the bare validity of the points 1. and 2. allows a class of solutions of (\ref{cos S sin S}) and (\ref{cos Sk sin Sk}) which is wider that the one we constructed above. Along with $\big\{ S_{_{\! (\r)}} \big\}_{_{\!\! \r}}$ and $S \,$, let us consider a set of solutions $\bars_{_{\! -}}$, $\{ \bars_{_{\!\! (\r)}} \}_{_{\! \r}}$, $\bars_{_{\! +}}$ of the type we constructed above, and call $m$ the integer such that $\z_{_{-}} - \bz_{_{-}} = 2 \pi \, m \,$. By the relation (\ref{Becca Bali}), we have that
\beq
\label{baton}
S(\t) - \bars_{_{\! -}}(\t) = 2 \pi \, m \qquad \text{for every} \,\, \t \in [-d, 0) \quad ,
\eeq
which in particular implies that $S_{_{\!-}} = \lim_{\t \to 0^-} S$ exists f\mbox{}inite and $S_{_{\!-}} = \barS_{_{\!-}} + 2 \pi \, m \,$. Moreover, since the family $\{ S_{_{\! (\r)}} \}_{_{\! \r}}$ converges uniformly in $\{ [-d , -\r] \}_{_{\! \r}}$ to $S \,$, by the proposition \ref{proposition Qr} we have 
\beq
\label{Noel Moniques}
\lim_{\r \to 0^+} S_{_{\! (\r)}}(-\r) = \barS_{_{\!-}} + 2 \pi \, m \quad .
\eeq
On the other hand, analogously to the analysis of section \ref{section Absence of self-intersections} we have
\begin{equation}
\label{Julie Kays}
\lim_{\r \to 0^+} S_{_{\! (\r)}} (\r)  = \lim_{\r \to 0^+} S_{_{\! (\r)}} (-\r) + \arcsin \big\langle \mbf{z}(0) , \mbfscn_{_{+}} \big\rangle - \arcsin \big\langle \mbf{z}(0) , \mbfscn_{_{-}} \big\rangle \quad ,
\end{equation}
and therefore using (\ref{rugby}) and (\ref{Noel Moniques}) we get
\begin{equation}
\label{Desiree West}
\lim_{\r \to 0^+} S_{_{\! (\r)}} (\r) = \barS_{_{\!+}} + 2 \pi \, m \quad .
\end{equation}
Now, since the family $\{ S_{_{\! (\r)}} \}_{_{\! \r}}$ converges uniformly in $\{ [\, \r, d \,] \}_{_{\! \r}}$ to $S$ and $\lim_{\r \to 0^+} S_{_{\! (\r)}} (\r)$ exists f\mbox{}inite, again by the proposition \ref{proposition Qr} we have 
\begin{equation}
S_{_{\!+}} = \lim_{\r \to 0^+} S_{_{\! (\r)}} (\r) = \barS_{_{\!+}} + 2 \pi \, m \quad ,
\end{equation}
and so $S(\t) - \bars_{_{\! +}}(\t) = 2 \pi \, m$ for every $\t \in (0 , d \,] \,$. Remembering (\ref{baton}), it follows that 
\beq
S(\t) = \barS(\t) + 2 \pi \, m
\eeq
for every $\t \in [-d, d \,] \setminus \{ 0 \} \,$, and in particular
\beq
S_{_{\!+}} - S_{_{\!-}} = \barS_{_{\!+}} - \barS_{_{\!-}} \quad .
\eeq
We conclude that, provided the points 1. and 2. hold, changing the set of solutions $S$ and $\big\{ S_{_{\! (\r)}} \big\}_{_{\!\! \r}}$ merely results in a shift of $S$ by a constant phase, multiple of $2 \pi \,$.

\section{De Sitter vacuum cosmological solutions}
\label{appendix de Sitter}

Let us now show how the general equations (\ref{Bulkeq Cascading})--(\ref{cod2 jceq Cascading}) can be used to derive the vacuum solutions found in \cite{Minamitsuji:2008fz}, where an empty cod-2 brane, with a de Sitter induced metric, is embedded in a Riemann-f\mbox{}lat bulk.

\subsection{The ansatz for the metric and the embeddings}

To look for solutions where the metric induced on the cod-2 brane is of cosmological form, it is very convenient to foliate the bulk into 4D Robertson-Walker leaves. This is indeed possible when the bulk has vanishing Riemann curvature, since generalizing the analysis of \cite{Deruelle:2000ge} we can choose the coordinate system such that the line element takes the form
\beq
\label{bulk line cartesian}
ds^2 = -N^{2}(t,z) \, dt^2 + A^{2}(t,z) \, \d_{ij} \, dx^i dx^j + dy^2 + dz^2 \quad ,
\eeq
where 
\begin{align}
\label{foliation}
N(t,z) &= 1 + \ep \,  z \, \frac{\ddot{a}(t)}{\dot{a}(t)} & A(t,z) &= a(t) + \ep \, z \, \dot{a}(t) \quad ,
\end{align}
and $\ep = \pm 1\,$. The bulk metric is therefore completely determined by the function
\beq
\label{scale factor}
a(t) = A(t,0) \quad ,
\eeq
apart from the choice of $\ep\,$. To make sure that the induced metric on the cod-2 brane is of the Robertson-Walker form, it is then enough to identify the brane with one of the leaves, thereby gauge-f\mbox{}ixing its position. We choose to identify the cod-2 brane with the $y = z = 0$ leaf. Parametrizing the brane with the coordinates $(t, \vec{x})\,$, the cod-2 induced line element reads
\beq
d \bar{s}^{2} = - dt^2 + a^{2}(t) \, \d_{ij} \, dx^i dx^j \quad ,
\eeq
so $t$ is the proper time and $a(t)$ is the scale factor on the cod-2 brane.

Regarding the codimension-1 brane, we use the cod-1 GNC and ask the cod-1 embedding to respect the translational invariance with respect to $\vec{x}\,$. In particular we assume
\beq
\label{cod1 embedding}
\hvf^{_{A}}(t\, , \vec{x}\, , \z) = \Big( t\, , \vec{x}\, , \hvf^{y}(t, \z) , \hvf^{z}(t, \z) \Big) \quad ,
\eeq
where the cod-2 brane is identif\mbox{}ied by $\z = 0\,$. Since the position of the latter is f\mbox{}ixed we have
\beq
\label{oppaaa}
\hvf^{y}(t\, , 0) = \hvf^{z}(t\, , 0) = 0 \quad ,
\eeq
and $\hvf^{y}$ , $\hvf^{z}$ are taken to be respectively odd and even with respect to $\z$ to enforce the $\mbbZ_2$ symmetry inside the cod-1 brane. It can be verif\mbox{}ied that the non-vanishing components of the cod-1 induced metric read
\begin{align}
\hg_{\t\t} &= \big( \hvf^{y \, \p} \big)^{2} + \big( \hvf^{z \, \p} \big)^{2} & \hg_{\t t} &= \dot{\hvf}^{y} \, \hvf^{y \, \p} + \dot{\hvf}^{z} \, \hvf^{z \, \p} \\[2mm]
\hg_{tt} &= \big( \dot{\hvf}^{y} \big)^{2} + \big( \dot{\hvf}^{z} \big)^{2} - \bigg( 1 + \ep \, \frac{\ddot{a}}{\dot{a}} \, \hvf^{z} \bigg)^{\!\! 2} & \hg_{ij} &= \Big( a + \ep \, \dot{a} \, \hvf^{z} \Big)^{\! 2} \, \d_{ij} \quad ,
\end{align}
so the condition of the coordinate system on the cod-1 brane being Gaussian Normal with respect to the cod-2 brane implies
\begin{align}
\label{Sally}
\big( \hvf^{y \, \p} \big)^{2} + \big( \hvf^{z \, \p} \big)^{2} &= 1 & \dot{\hvf}^{y} \, \hvf^{y \, \p} + \dot{\hvf}^{z} \, \hvf^{z \, \p} &= 0 \quad .
\end{align}

\subsubsection{The auxiliary vector f\mbox{}ields and the slope function}

To construct the auxiliary vector f\mbox{}ields, we note that for an embedding of the form (\ref{cod1 embedding}) the tangent vector f\mbox{}ields $\mbf{u}_{_{[\m]}}$ (see def\mbox{}inition (\ref{def N and u})) read
\begin{align}
u_{^{[t]}}^{_{A}}(t , \t) &= \Big( 1\, , \vec{0}\, , \dot{\hvf}^{y} , \dot{\hvf}^{z} \Big) & u_{^{[i]}}^{_{A}}(t , \t) &= \d_{i}^{\, A} \quad ,
\end{align}
where $i = 1, 2, 3\,$. By def\mbox{}inition, the auxiliary f\mbox{}ields $\mbf{y}(t, \t)$ and $\mbf{z}(t, \t)$ satisfy the condition of orthogonality to the tangent vector f\mbox{}ields
\beq
y^{i} = z^{i} = 0 \quad \,\, , \,\, \quad - \hat{N}^{2} \, y^{t} + \dot{\hvf}^{y} \, y^{y} + \dot{\hvf}^{z} \, y^{z} = 0 \quad \,\, , \,\, \quad -\hat{N}^{2} \, z^{t} + \dot{\hvf}^{y} \, z^{y} + \dot{\hvf}^{z} \, z^{z} = 0 \quad ,
\eeq
where we indicated $\hat{N}(t, \t) = N \big( t, \hvf^{z}(t, \t) \big)\,$, and the condition of orthonormality between themselves. Among the (inf\mbox{}inite) possible choices for $\mbf{y}$ and $\mbf{z}\,$, a convenient one is
\begin{align}
y^{_{A}}(t , \t) &= \frac{1}{\hN \sqrt{\hN^{2} - \big(\dhvf^{y}\big)^{\! 2}}} \,\, \bigg( \dhvf^{y} \, , \, \vec{0} \, , \, \hN^{2} \, , \, 0 \bigg) \label{Vivi} \\[2mm]
z^{_{A}}(t , \t) &= \frac{1}{\sqrt{\Big( \hN^{2} - \big(\dhvf^{y}\big)^{\! 2} \Big) \Big( \hN^{2} - \big(\dhvf^{y}\big)^{\! 2} - \big(\dhvf^{z}\big)^{\! 2} \Big)}} \,\, \bigg( \dhvf^{z} \, , \, \vec{0} \, , \, \dhvf^{y} \, \dhvf^{z} \, , \, \hN^{2} - \big(\dhvf^{y}\big)^{\! 2} \bigg) \label{ana} \quad ,
\end{align}
which is indeed continuous at $\z = 0$ where it reads
\begin{align}
y^{_{A}}_{_{+}}(t) &= y^{_{A}}_{_{-}}(t) = \Big( 0 \, , \, \vec{0} \, , \, 1 \, , \, 0 \Big) & z^{_{A}}_{_{+}}(t) &= z^{_{A}}_{_{-}}(t) = \Big( 0 \, , \, \vec{0} \, , \, 0 \, , \, 1 \Big) \label{Rebeca} \quad .
\end{align}
It follows from the relations (\ref{auxiliary 1 lim}) that
\begin{align}
\hvf^{y \, \p}_{_{\! +}}(t) &= \cos S_{_{\! +}}(t) & \hvf^{z \, \p}_{_{\! +}}(t) &= \sin S_{_{\! +}}(t) \label{Mandzukic 1} \\[3mm]
n^{y}_{_{\! +}}(t) &= - \sin S_{_{\! +}}(t) & n^{z}_{_{\! +}}(t) &= \cos S_{_{\! +}}(t) \label{Mandzukic 2} \quad .
\end{align}

In light of the symmetry properties of $\hvf^{y}$ and $\hvf^{z}\,$, it would be very convenient to impose the slope function to be odd with respect to $\z\,$. However, taking into account the bound $\abs{S_{_{\! +}} - S_{_{\! -}}} < \pi$ discussed in section \ref{section Absence of self-intersections}, this would artif\mbox{}icially force $\hvf^{y \, \p}_{_{\! +}}$ and $n^{z}_{_{\! +}}$ to be positive. While we can always achieve $\hvf^{y \, \p}_{_{\! +}} > 0\,$, in case by changing $\z \to - \z\,$, it is a priori not clear if we can take $n^{z}_{_{\! +}} > 0$ as well. To see this, let us impose $S$ to be odd with respect to $\z$ and introduce a sign parameter $\vep = \pm 1$ to allow both the cases $n^{z}_{_{\! +}} > 0$ and $n^{z}_{_{\! +}} < 0\,$, so that
\begin{align}
n^{y}_{_{\! +}}(t) &= - \vep \, \sin S_{_{\! +}}(t) & n^{z}_{_{\! +}}(t) &= \vep \, \cos S_{_{\! +}}(t)
\end{align}
and
\beq
\label{S domain}
- \frac{\pi}{2} < S_{_{\! +}} < \frac{\pi}{2} \quad .
\eeq
Note that, regarding the line element (\ref{bulk line cartesian})--(\ref{foliation}), switching $\ep \to - \ep$ is equivalent to transforming $z \to - z\,$, which is a bulk gauge transformation. Taking into account the presence of the branes, a transformation equivalent to $z \to - z$ is obtained by switching $\ep \to - \ep$ and at the same time performing a mirror ref\mbox{}lection of the branes with respect to the plane $z = 0\,$, i.e.~transforming $S \to - S\,$. Finally, recall that the full 6D conf\mbox{}iguration is constructed by cutting the part of the bulk the normal vector $\mbf{n}$ points into, copying it and pasting together the two mirror copies in a $\mbbZ_2$ way with respect to the cod-1 brane. The 6-dimensional spacetime is left unaltered if, in addition to $\ep \to - \ep$ and $S \to - S\,$, we also change orientation to the cod-1 brane. We therefore conclude that the transformation
\begin{align}
\ep &\to - \ep & S(t, \z) &\to - S(t, \z) & \vep &\to - \vep
\end{align}
is a gauge transformation for the full 6-dimensional conf\mbox{}iguration. It follows that we can indeed f\mbox{}ix $n^{z}_{_{\! +}} > 0$ while leaving $\ep = \pm 1$ unconstrained and $S_{_{\! +}}$ def\mbox{}ined on the interval (\ref{S domain}). Using (\ref{Mandzukic 1}) and (\ref{Mandzukic 2}) it is then easy to show that
\beq
\D(t) = 4 \, S_{_{\! +}}(t) \quad ,
\eeq
so a positive $S_{_{\! +}}$ corresponds to a def\mbox{}icit angle conf\mbox{}iguration while a negative $S_{_{\! +}}$ corresponds to an excess angle conf\mbox{}iguration.

\subsubsection{The equations of motion}

Turning now to the equations of motion, it is clear that the bulk equations (\ref{Bulkeq Cascading}) are identically satisf\mbox{}ied, since the bulk is f\mbox{}lat.

Regarding the cod-2 junction conditions (\ref{cod2 jceq Cascading}), it can be verif\mbox{}ied that the non-vanishing components of $\bar{\mbf{K}} - \bar{\mbf{g}} \,\, tr \big( \bar{\mbf{K}} \big)$ at the ``+'' side of the cod-2 brane read
\begin{align}
\Big[ \bK_{tt} - \bK \, \bg_{tt} \, \Big]_{_{+}} &= - 3 \, \ep \, \frac{\dot{a}}{a} \, \sin S_{_{\! +}} \label{cod2 extrinsic curvature} \\[2mm]
\Big[ \bK_{ij} - \bK \, \bg_{ij} \, \Big]_{_{+}} &= \ep \, a^2 \, \bigg( \frac{\ddot{a}}{\dot{a}} + 2 \, \frac{\dot{a}}{a} \bigg) \, \sin S_{_{\! +}} \, \d_{ij} \nn \quad ,
\end{align}
where dependence of $a$ and $S_{_{\! +}}$ on $t$ is implicitly understood. Reminding that we are considering vacuum solutions, and taking suitable linear combinations of the independent components of the cod-2 junction conditions, we obtain the modif\mbox{}ied Friedmann equations
\begin{align}
\frac{\dot{a}^2}{a^2} - 2 \, \ep \,  m_{5} \, \frac{\dot{a}}{a} \, \sin \frac{\D}{4} + \frac{m_{6} \, m_{5}}{3} \, \D &= 0 \label{mod Friedmann eq 1} \\[3mm]
\frac{\ddot{a}}{a} - \ep \, m_{5} \, \bigg( \frac{\ddot{a}}{\dot{a}} + \frac{\dot{a}}{a} \bigg) \, \sin \frac{\D}{4} + \frac{m_{6} \, m_{5}}{3}\, \D &= 0 \quad . \label{mod Friedmann eq 2}
\end{align}

Let us now turn to the cod-1 junction conditions (\ref{junctionconditionseq Cascading}). It is easy to recognise that the $\m\n$ components contain the second derivative with respect to $\z$ of the cod-1 embedding functions, while the $\z\z$ and $\z\m$ components do not. Therefore, from the point of view of the evolution with respect to $\z\,$, the $\z\z$ and $\z\m$ components play the role of constraint equations, while the $\m\n$ components describe how the embedding evolves when moving away from the cod-2 brane. Since we are mainly interested in the solution for the metric on the cod-2 brane, we concentrate on the $\z\z$ and $\z\m$ components of the cod-1 junction conditions evaluated at $\z = 0^{+}$. It is straightforward to show that
\begin{align}
\Big[ \, \hK_{\z\z} - \hg_{\z\z} \, \hK \, \Big]_{_{+}} &= \ep \, \bigg( \frac{\ddot{a}}{\dot{a}} + 3 \, \frac{\dot{a}}{a} \bigg) \, \cos S_{_{\! +}} \nn \\[2mm]
\Big[ \, \hK_{\z t} - \hg_{\z t} \, \hK \, \Big]_{_{+}} &= \dot{S}_{_{\! +}} \quad , \label{cod1 extr curv at cod2}
\end{align}
and that
\begin{align}
\hG_{\z\z}\big\rvert_{_{+}} &= - 3 \, \frac{\dot{a}}{a} \, \bigg( \frac{\ddot{a}}{\dot{a}} + \frac{\dot{a}}{a} \bigg) \, \cos^{2}\! S_{_{\! +}} \nn \\[1mm]
\hG_{\z t}\big\rvert_{_{+}} &= - 3 \, \ep \, \frac{\dot{a}}{a} \, \dot{S}_{_{\! +}} \cos S_{_{\! +}} \label{cod1 intr curv at cod2} \quad .
\end{align}
Using these results, the $\z\z$ and $\z t$ components of the cod-1 junction conditions evaluated at the ``$+$'' side of the cod-2 brane read respectively
\begin{align}
\cos \frac{\D}{4} \,\, \Bigg[ \, 2 \, \ep \, m_{6} \, \bigg( \frac{\ddot{a}}{\dot{a}} + 3 \, \frac{\dot{a}}{a} \bigg) - 3 \, \frac{\dot{a}}{a} \, \bigg( \frac{\ddot{a}}{\dot{a}} + \frac{\dot{a}}{a} \bigg) \, \cos \frac{\D}{4} \, \Bigg] &= 0 
\label{cod1jc side zz} \\[3mm]
\dot{\D} \, \bigg( 2 \, m_{6} - 3 \, \ep \, \frac{\dot{a}}{a} \, \cos \frac{\D}{4} \bigg) &= 0 \label{cod1jc side zt} \quad ,
\end{align}
while the $\z i$ components vanish identically.

\subsection{Vacuum solutions}

Note f\mbox{}irst of all that the cod-2 junction conditions and the constraint cod-1 junction conditions give a system of four equations for two variables, the scale factor of the cod-2 brane and the local def\mbox{}icit angle. Therefore, to know what happens at the cod-2 brane, we don't really need to solve for the behaviour outside of it.

On the other hand, the system seems over-constrained. It can however be seen that (\ref{mod Friedmann eq 2}) is implied by (\ref{mod Friedmann eq 1}) and (\ref{cod1jc side zt}), so the equations which describe the behaviour of the geometry at the cod-2 brane are (\ref{mod Friedmann eq 1}), (\ref{cod1jc side zz}) and (\ref{cod1jc side zt}). The system is clearly still over-constrained, but this may not be a general property of the equations (\ref{Bulkeq Cascading})--(\ref{cod2 jceq Cascading}) since it may be a consequence of the overly restrictive assumptions of Riemann-f\mbox{}latness in the bulk, and of freezing the position of the cod-2 brane.

\subsubsection{The branches of solutions}

Despite this, the system of equations (\ref{mod Friedmann eq 1}), (\ref{cod1jc side zz}) and (\ref{cod1jc side zt}) admits solutions. Since the equation (\ref{cod1jc side zt}) displays a factorized form, we expect two branches of solutions to exist, one where the def\mbox{}icit angle is independent of time ($\dot{\D} = 0$) and one where its time evolution is kinematically linked to that of the scale factor. However, inserting the relation
\beq
2 \, m_{6} - 3 \, \ep \, \frac{\dot{a}}{a} \, \cos \frac{\D}{4} = 0
\eeq
into (\ref{cod1jc side zz}) we f\mbox{}ind
\beq
4 \, \ep \, m_{6} \, \frac{\dot{a}}{a} \, \cos \frac{\D}{4} = 0 \quad ,
\eeq
which together with (\ref{mod Friedmann eq 1}) implies that $\dot{a} = \D = 0\,$. Therefore, within our ansatz the $\dot{\D} \neq 0$ branch does not contain (vacuum) cosmological solutions.

Considering now the $\dot{\D} = 0$ branch, using the equation (\ref{cod1jc side zz}) we can express $\cos \D/4$ in terms of $a$ and its derivatives, and introducing the Hubble factor $H = \dot{a}/a$ we get
\beq
\label{eq hey}
\cos \frac{\D}{4} = \ep \, \frac{2}{3} \, \frac{m_{6}}{H} \,\, \frac{\dot{H} + 4 \, H^{2}}{\dot{H} + 2 \, H^{2}} \quad .
\eeq
It is apparent that, if
\beq
\label{cha de ppk}
\ep H > 0 \hspace{2cm} \mathrm{and} \hspace{2cm} \frac{4}{3} \, \frac{m_{6}}{\abs{H}} \leq 1 \quad ,
\eeq
the equation (\ref{eq hey}) admits solutions where $H$ is constant and
\beq
\label{gostosa}
\abs{\D} = 4 \, \arccos \bigg( \frac{4}{3} \, \frac{m_{6}}{\ep H} \bigg) = 4 \, \arccos \frac{4 \, m_{6}}{3 \, \abs{H}} \quad .
\eeq
To f\mbox{}ind the possible values for $H$, we insert (\ref{gostosa}) into the f\mbox{}irst modif\mbox{}ied Friedmann equation (\ref{mod Friedmann eq 1}) obtaining
\beq
\label{mod Friedmann 1 de Sitter ok}
H^2 \, \sgn \, \D - 2 \, \ep \,  m_{5} \, H \, \sqrt{1 - \frac{16 \, m_{6}^{2}}{9 H^{2}}} + \frac{4}{3} \, m_{6} \, m_{5} \, \arccos \bigg( \frac{4\, m_{6}}{3 \, \ep H} \bigg) = 0 \quad ,
\eeq
and remembering that $\arccos x = \arctan \big( \sqrt{1 - x^2}/x \big)$ we get
\beq
\label{mod Friedmann 1 de Sitter Minami}
2 \, m_{5} \, H \, \Bigg( \frac{H}{2 \, m_5} \, \sgn \, \D - \ep \, \sqrt{1 - \frac{16 \, m_{6}^{2}}{9 H^{2}} \, } \, + \frac{2 \, m_{6}}{3 H} \, \arctan \sqrt{\frac{9 H^{2}}{16 \, m_{6}^{2}} - 1} \,\, \Bigg) = 0 \quad . 
\eeq

\subsubsection{Comments}
\label{De Sitter comments}

The solutions where $H$ is constant and solves (\ref{mod Friedmann 1 de Sitter Minami}) indeed correspond, from the intrinsic point of view of the cod-2 brane, to a de Sitter universe. In principle there are four distinct cases, since for each of the two choices for the bulk metric ($\ep = 1$ or $-1$) we can have a def\mbox{}icit angle conf\mbox{}iguration ($\sgn \, \D > 0$) or an excess angle conf\mbox{}iguration ($\sgn \, \D < 0$). As a consequence of the f\mbox{}irst inequality in (\ref{cha de ppk}), the $\ep  = 1$ conf\mbox{}igurations correspond to expanding universes while the $\ep  = -1$ conf\mbox{}igurations correspond to contracting universes. Considering the case $\ep  = 1$ and $\sgn \, \D > 0$ (expanding universe, def\mbox{}icit angle) the equation (\ref{mod Friedmann 1 de Sitter Minami}) reproduces the equation (14) of \cite{Minamitsuji:2008fz}. For a discussion of the solutions of (\ref{mod Friedmann 1 de Sitter Minami}) we remand to that paper, and for a more general analysis to a forthcoming paper we are preparing.

The analysis above helps to appreciate better the role of the auxiliary vector f\mbox{}ields. Note that, despite $\bK_{\m\n}\,$, $\ti{K}_{ab}$ and $\ti{G}_{ab}$ are def\mbox{}ined in terms of the embedding functions $\vf^{y}$, $\vf^{z}$ and their derivatives, their value at $\z = 0^{+}$ can be expressed in terms a specif\mbox{}ic combination of $\vf^{y \, \p}_{_{\! +}}$ and $\vf^{z \, \p}_{_{\! +}}\,$, the def\mbox{}icit angle $\D$ (and its derivatives). In fact we could have obtained the expressions (\ref{mod Friedmann eq 1}), (\ref{mod Friedmann eq 2}), (\ref{cod1jc side zz}) and (\ref{cod1jc side zt}) of the equations of motion working directly with the embedding functions, without introducing the auxiliary f\mbox{}ields. This would however be a bit cumbersome since the gauge freedom obscures the fact that, from a geometrical point of view, the conf\mbox{}iguration of the cod-1 brane at the position of the cod-2 brane is described in terms of one quantity only, the def\mbox{}icit angle. This is revealed using the cod-1 GNC, since the condition (\ref{Sally}) links $\hvf^{y \, \p}_{_{+}}$ and $\hvf^{z \, \p}_{_{+}}$. Although we could, for example, have solved for $\hvf^{y \, \p}_{_{+}}$ in terms of $\hvf^{z \, \p}_{_{+}}$ and expressed everything in terms of $\hvf^{z \, \p}_{_{+}}$, introducing the auxiliary f\mbox{}ields and the slope function provides a geometrically more transparent way to solve the constraint $(\hvf^{y \, \p})^{2} + (\hvf^{z \, \p})^{2} = 1\,$. Of course, since the def\mbox{}icit angle and the curvature tensors are def\mbox{}ined without any reference to the auxiliary vector f\mbox{}ields, the f\mbox{}inal result (i.e.~the expressions (\ref{mod Friedmann eq 1}), (\ref{mod Friedmann eq 2}), (\ref{cod1jc side zz}) and (\ref{cod1jc side zt}) of the equations of motion in terms of $\D$) would have been exactly the same, independently of the way used to link the curvature tensors to $\D$ (and in particular of the specif\mbox{}ic realization of the auxiliary f\mbox{}ields chosen).

Finally let's remark that, although for completeness we wrote the explicit form (\ref{Vivi}) and (\ref{ana}) of the auxiliary vector f\mbox{}ields, in our analysis we actually used only their expression (\ref{Rebeca}) at $\z = 0$. That expression can be proposed doing barely any calculation, thanks to the gauge choice where the embedding of the cod-2 brane in the bulk is f\mbox{}ixed and straight. Therefore, the knowledge of the cod-2 junction conditions, the main reasult of our paper, allows to f\mbox{}ind the de Sitter solutions with a fair economy of calculations.


\end{document}